\def\endthebibliography{%
	\def\@noitemerr{\@latex@warning{Empty `thebibliography' environment}}%
	\endlist
}
\newcommand{\rootpath}{TIT/}
\providecommand*{\input@path}{}
\edef\input@path{{\rootpath}\input@path}%
\newtheorem{theorem}{Theorem}
\newtheorem{proposition}{Proposition}
\newtheorem{property}{Property}
\newtheorem{lemma}{Lemma}
\crefname{section}{\S}{\S}
\Crefname{section}{\S}{\S}
\crefname{appendix}{App.}{Apps.}
\Crefname{appendix}{App.}{Apps.}
\crefname{theorem}{Thm.}{Thms.}
\Crefname{theorem}{Thm.}{Thms.}
\crefname{proposition}{Prop.}{Props.}
\Crefname{proposition}{Prop.}{Props.}
\crefname{algorithm}{Alg.}{Algs.}
\Crefname{algorithm}{Alg.}{Algs.}
\crefname{assumption}{Asm.}{Asms.}
\Crefname{assumption}{Asm.}{Asms.}
\crefname{mechanism}{Mech.}{Mechs.}
\Crefname{mechanism}{Mech.}{Mechs.}
\crefname{property}{Property}{Properties}
\Crefname{property}{Property}{Properties}
\newtheorem{definition}{Definition}
\newcounter{packednmbr}
\newcommand{\cameraready}[1]{#1}
\newcommand{\camerareadydelete}[1]{}
\newcommand{\bra}[1]{\left( #1 \right)}
\newcommand{\brb}[1]{\left[ #1 \right]}
\newcommand{\brc}[1]{\left\{ #1 \right\}}
\newcommand{\brd}[1]{\left| #1 \right|}
\DeclarePairedDelimiter\abs{\lvert}{\rvert}
\newcommand{\floor}[1]{\left\lfloor {#1} \right\rfloor}
\newcommand{\ceil}[1]{\left\lceil {#1} \right\rceil}
\newcommand{\privname}{statistic maximal leakage\xspace}
\newcommand{\PrivName}{Statistic Maximal Leakage\xspace}
\newcommand{\Privname}{Statistic maximal leakage\xspace}
\newcommand{\rr}{RR}
\newcommand{\qm}{QM}
\newcommand{\datamechanism}{data release mechanism}
\newcommand{\datamechanisms}{data release mechanisms}
\newcommand{\privacynotation}{\Pi}
\newcommand{\distortionnotation}{\Delta}
\newcommand{\minentropy}{\mathcal{L}_{MEL}}
\newcommand{\maxl}{\mathcal{L}_{ML}}
\newcommand{\sml}{\privacynotation_{\mech, \secretnotation}}
\newcommand{\smldistortion}{\distortionnotation_{\mech}}
\newcommand{\rvprivatenotation}{X}
\newcommand{\rvreleasenotation}{Y}
\newcommand{\rvprivatewithparam}[1]{\rvprivatenotation_{#1}}
\newcommand{\rvreleasewithparam}[1]{\rvreleasenotation_{#1}}
\newcommand{\rvparamnotation}{\theta}
\newcommand{\releaservparamnotation}{\theta'}
\newcommand{\distributionnotation}{\nu}
\newcommand{\distributionof}[1]{\distributionnotation_{#1}}
\newcommand{\privatedistribution}{\distributionof{\rvprivatewithparam{\rvparamnotation}}}
\newcommand{\releasedistribution}{\distributionof{\rvreleasewithparam{\releaservparamnotation}}}
\newcommand{\paramdistribution}{\mathbb{P}_{\Theta}}
\newcommand{\pdfnotation}{f}
\newcommand{\pdf}[1]{\pdfnotation_{#1}}
\newcommand{\pdfof}[2]{\pdfnotation_{#1}\bra{#2}}
\newcommand{\secretrv}{g}
\newcommand{\secretvalueset}{\mathbf{G}}
\newcommand{\paramset}{\mathbf{\Theta}}
\newcommand{\secretset}{\mathbf{\Theta}_{\secretrv}}
\newcommand{\secretsetof}[1]{\mathbf{\Theta}_{\secretrv_{#1}}}
\newcommand{\secretsetofabbr}[1]{\mathbf{\Theta}_{#1}}
\newcommand{\releaseparamset}{\mathbf{\Theta}'}
\newcommand{\releaseparamsetall}{\releaseparamset^{(*)}}
\newcommand{\releaseparamsetori}{\releaseparamset^{(0)}}
\newcommand{\releasesecretsetofabbr}[1]{\mathbf{\Theta'}_{{#1}}}
\newcommand{\releaseparametervector}{\boldsymbol{\theta'}}
\newcommand{\postprocessparamset}{\mathbf{\Theta}''}
\newcommand{\mech}{\mathcal{M}}
\newcommand{\mechquan}{\mathcal{M}_\text{QM}}
\newcommand{\mechrr}{\mathcal{M}_\text{RR}}
\newcommand{\privacyquan}{\privacynotation_{\mechquan,\secretnotation}}
\newcommand{\privacyrr}{\privacynotation_{\mechrr,\secretnotation}}
\newcommand{\distortionquan}{\distortionnotation_{\mechquan}}
\newcommand{\distortionrr}{\distortionnotation_{\mechrr}}
\newcommand{\rrratio}{r}
\newcommand{\midpoint}[1]{\text{R}\bra{#1}}
\newcommand{\secretnotation}{\mathfrak{g}}
\newcommand{\secretof}[1]{\secretnotation\bra{#1}}
\newcommand{\secretofparam}{\secretof{\rvparamnotation}}
\newcommand{\distanceof}[2]{D_{\text{TV}}\bra{#1\|#2}}
\newcommand{\distance}{D_{\text{TV}}}
\newcommand{\probnotation}{\mathbb{P}}
\newcommand{\probof}[1]{\probnotation\bra{#1}}
\newcommand{\support}[1]{\text{Supp}\bra{#1}}
\newcommand{\mechnum}{m}
\newcommand{\indexi}{i}
\newcommand{\intervallen}{I}
\newcommand{\adaptivemech}[1]{\probnotation_{\Theta'^{(#1)} | \Theta, \brc{\mech_k, \Theta'^{(k)}}_{k \in [#1-1]}}}
\newcommand{\samplenumori}{n}
\newcommand{\samplenum}{\tau}
\newcommand{\categorynum}{d}
\newcommand{\categoryactual}{d^*}
\newcommand{\categoryleast}{\hat{d}^*_0}
\newcommand{\categorynoise}{\hat{d}^*_1}
\newcommand{\categoryestimate}{\hat{d}^*}
\newcommand{\combinationset}{\mathbf{\Gamma}}
\newcommand{\combinationsetall}{\mathbf{\Gamma}^*}
\newcommand{\combinationsetsingle}{\tilde{\mathbf{\Gamma}}}
\newcommand{\combinationsetestimate}{\hat{\mathbf{\Gamma}}^*}
\newcommand{\combinationsetpre}{\hat{\mathbf{\Gamma}}^*_0}
\newcommand{\combinationsetnoise}{\hat{\mathbf{\Gamma}}^*_1}
\newcommand{\dataset}{\mathbf{D}}
\newcommand{\parameterdistribution}{\theta}
\newcommand{\mass}[1]{m_{#1}}
\newcommand{\secretnum}{s}
\newcommand{\boundgaprr}{\Delta_{\mechrr, \samplenum}}
\newcommand{\boundgapquan}{\Delta_{\mechquan, \samplenum}}
\newcommand{\mincost}{\mathcal{C}_{\mech,\secretnotation}}
\newcommand{\smlinf}{\privacynotation^{\text{inf}}_{\mech, \secretnotation}}
\newcommand{\smlworse}{\tilde{\privacynotation}_{\mech, \secretnotation}}
\newcommand{\smlworseinf}{\tilde{\privacynotation}^{\text{inf}}_{\mech, \secretnotation}}
\newcommand{\ipmetric}{\mathcal{L}_{DP}}
\newcommand{\ip}{\mu}
\newcommand{\node}[1]{N_{#1}}
\newcommand{\greedy}{MaxL}
\declaretheorem[name=Proposition,numberwithin=section]{proposition}
\declaretheorem[name=Lemma,numberwithin=section]{lemma}
\begin{document}

\title{Statistic Maximal Leakage
}

\author{%
\IEEEauthorblockN{Shuaiqi Wang \IEEEauthorrefmark{1},
Zinan Lin \IEEEauthorrefmark{2},
Giulia Fanti \IEEEauthorrefmark{1}\\
}
\IEEEauthorblockA{\IEEEauthorrefmark{1}%
Carnegie Mellon University, Pittsburgh, PA, USA,
\{shuaiqiw, gfanti\}@andrew.cmu.edu}\\
\IEEEauthorblockA{\IEEEauthorrefmark{2}%
Microsoft Research, 
Redmond, WA, USA,
zinanlin@microsoft.com}
\thanks{This paper was presented in part at ISIT 2024 \cite{wang2024statistic}.}
}

\date{}

\maketitle
\pagenumbering{arabic} 

\begin{abstract}
    We introduce a privacy measure called \emph{\privname} that quantifies how much a privacy mechanism leaks about a specific secret, relative to the adversary's prior information about that secret. 
    \Privname is an extension of the well-known  \emph{maximal leakage}. Unlike maximal leakage, 
    which protects an arbitrary, unknown secret, 
    \privname protects a single, known secret.
    We show that \privname satisfies composition and post-processing properties. Additionally, we show how to efficiently compute it in the special case of deterministic data release mechanisms. We analyze two important mechanisms under \privname: the quantization mechanism and randomized response. We show theoretically and empirically that the quantization mechanism achieves better privacy-utility tradeoffs in the settings we study. 
\end{abstract}

\begin{IEEEkeywords}
Privacy, maximal leakage, data release, information leakage, privacy-utility tradeoffs.
\end{IEEEkeywords}

\section{Introduction}
\label{sec:intro}
Information disclosure while preventing privacy leakage is a central problem in the privacy and information theory literature. 
That is, how can we release a (realization of a) random variable without leaking correlated secret information?
Since the seminal work of Yamamoto \cite{yamamoto1983source}, many papers have studied this problem~\cite{smith2009foundations,alvim2012measuring,wang2019privacy,makhdoumi2014information,zamani2022bounds,braun2009quantitative,issa2019operational,du2017principal,shkel2020secrecy,zamani2023privacy,huang2024efficient,shkel2021compression,zamani2023cache,zamani2023private}. %
{We model a \camerareadydelete{a (possibly multi-dimensional) random variable $X$}\cameraready{data holder with data drawn from a distribution parameterized by a random variable  $\Theta$.}} %
{The data holder also knows a secret, represented by a discrete random variable $G$, which is computed as a function of the underlying data distribution; specifically, $G=\secretnotation(\Theta)$. }
The data holder's goal is to release \camerareadydelete{$Y$}\cameraready{$\Theta'$}, %
a perturbed version of \camerareadydelete{$X$}\cameraready{$\Theta$}, while optimizing the tradeoff between the \emph{leakage} about $G$ and the \emph{utility} of \camerareadydelete{$Y$}\cameraready{$\Theta'$}. %

We seek a privacy measure that satisfies three properties: 
\begin{enumerate}
	\item  \emph{Prior-independence:}
	The measure should not depend on any party's prior over the input data\camerareadydelete{$X$}\cameraready{, up to determining the support of the distribution}.
	This arises because priors may be difficult to obtain in practice; measures that require such knowledge may admit mechanisms that are fragile to prior mis-specification \cite{lin2023summary,golden2019consequences,richardson1997some,kleijn2006misspecification}. %
	\item \emph{Secret-awareness:}
	We assume the secret function $\secretnotation$ is known, which describes how to obtain the secret $G$ from \cameraready{the input data}.
	We want a measure that depends explicitly on $\secretnotation$.
	This is primarily for efficiency reasons; by utilizing known information, we may be able to add less noise or perturbation to our data than a measure that is secret-agnostic.
	\item \emph{Composition and post-processing:}
	Informally, composition provides a bound on how a privacy measure degrades when one or more mechanisms are applied sequentially to the same data. For example, in differential privacy (DP), the privacy parameter $\epsilon$ degrades additively when a mechanism is applied multiple times to the same dataset \cite{dwork2014algorithmic}.
	Post-processing  guarantees that if one applies an arbitrary (possibly random) function to the output of a privacy mechanism, as long as the function does not depend on the original data, the privacy measure in question does not degrade. Composition and post-processing are  useful properties exhibited by DP, which have contributed to its widespread usage in practical settings such as machine learning pipelines \cite{abadi2016deep}.
\end{enumerate}

Today there exist measures that satisfy all three of these properties. %
To the best of our knowledge, these measures are all \cameraready{inspired by} differential privacy. Examples include attribute privacy \cite{zhang2022attribute}, distribution privacy \cite{kawamoto2019local}, and distribution inference \cite{suri2021formalizing}. 
Due to their assumptions and formulation, they require large amounts of noise in practice \cite{lin2023summary,wangguarding}.
{We discuss this tradeoff further in \cref{sec:related_work}, \cref{section:formulation}, and \cref{sec:empirical}.}

In this work, we study an information-theoretic privacy measure that satisfies the above three properties. 
For a special class of \camerareadydelete{random variables $X$}\cameraready{data} with distributions drawn from a parametric family, and parameter vectors drawn from a finite set, we propose a privacy measure inspired by maximal leakage \cite{braun2009quantitative,issa2019operational}, which we call 
\textit{\privname}.
Our contributions in this work are as follows: 
\begin{itemize}
    \item \emph{Formulation and properties.} We formulate the \privname  privacy measure, and show that it trivially satisfies the first two properties. We also show that it also satisfies composition and post-processing. 
Note that although composition and post-processing have been previously proved for an 
extension of maximal leakage called \emph{pointwise maximal leakage} \cite{saeidian2022pointwise},
their result and proofs do not imply that \privname also satisfies these properties.
    \item \emph{Computation.} We next study how to compute \privname. 
{We show that in general, computing \privname is NP-hard. }
However, for the class of deterministic data release mechanisms, we show that \privname can be computed in polynomial time by solving a maximum flow problem over a graph whose construction depends on the mechanism parameters.
    \item \emph{Mechanisms.} We next analyze two natural mechanisms that have been studied widely in the privacy literature: the quantization mechanism \cite{lin2023summary,farokhi2019development} and randomized response \cite{chaudhuri2020randomized,kairouz2016extremal}.
    Quantization-based mechanisms have been shown to achieve (near)-optimal privacy-utility tradeoffs in several privacy frameworks, including summary statistic privacy \cite{lin2023summary} and non-randomized privacy \cite{farokhi2019development}. %
    Randomized response is a widely-adopted mechanism \cite{chaudhuri2020randomized} that achieves optimal  privacy-utility tradeoffs for differentially-private data collection \cite{wang2016using}. 
    We show that both mechanisms satisfy non-trivial \privname guarantees. 
    Further, under general tabular datasets, we analyze their privacy-utility tradeoffs; for this analysis, we define utility as a worst-case total variation distance between the original and the released data distributions. %
    Our results show that for a tabular data release problem that aims to hide one of the marginal values, the quantization mechanism achieves a better privacy-utility tradeoff than randomized response for practical privacy parameters. 
    \item \emph{Empirical evaluation.} Finally, we apply the quantization mechanism %
to a real-world tabular dataset. We show that when instantiated with an appropriate \privname parameter, the quantization mechanism effectively protects the secret while still ensuring high utility of the released data. {Moreover, we illustrate empirically that the quantization mechanism we propose for satisfying \privname incurs lower utility cost than mechanisms designed for other privacy frameworks: attribute privacy \cite{zhang2022attribute} and maximal leakage \cite{issa2019operational}.}

\end{itemize}

\section{Related Work}
\label{sec:related_work}

There have been many privacy measures proposed for capturing leakage of sensitive information during information disclosure. We divide them into \emph{information-theoretic measures} and \emph{indistinguishability-based measures}.

\subsection{Information-Theoretic Measures}
\paragraph{Maximal leakage-based measures}
While there are various of metrics designed to measure the privacy leakage, \emph{maximal leakage} is the one that is the most related to our designed statistic maximal leakage. Maximal leakage, introduced by Issa et al. \cite{issa2019operational}, is an information-theoretic measure designed to capture the worst-case leakage of sensitive information in a data release mechanism. %
It is defined using the Markov chain $G-\Theta-\Theta'-\hat{G}$, where %
$\hat{G}$ represents the adversary's guess of secret $G$. 
The measure is expressed as:

\[
\maxl = \sup_{G-\Theta-\Theta'-\hat{G}} \log \frac{\probof{G=\hat{G}}}{ \max_{g}\mathbb{P}_G(g) },
\]
where the $\sup$ is over $G$ and $\hat{G}$, i.e., consider the worst-case secret and the strongest attack strategy.
This formulation evaluates the ratio of probabilities of correctly guessing the secret with and without observing the released data, thereby quantifying the increase in guessing probability facilitated by the data release.

Several variants and extensions of maximal leakage have been proposed to address different privacy scenarios via gain functions \cite{liao2019tunable,saeidian2022pointwise,gilani2023alpha,kurri2022operational}, which penalize different values of $\probof{G=\hat{G}}$ differentaly. %
This framework can adapt to different threat models by modifying the evaluation of adversarial success, thus offering a tunable balance between privacy and utility.
For example, maximal \((\alpha, \beta)\)-leakage \cite{gilani2023alpha} generalizes maximal leakage by introducing parameters \(\alpha\) and \(\beta\), which adjust the sensitivity of leakage to different probabilities of adversarial success. By tuning \(\alpha\) and \(\beta\), the metric can be converted to several known leakage measures, such as maximal leakage and local differential privacy.
Pointwise maximal leakage, discussed by Saeidian et al. \cite{saeidian2022pointwise}, examines the leakage given certain released data rather than averaging over the entire distribution, preventing disproportionate exposure of sensitive information.
Binary maximal leakage \cite{cung2024binary} focuses on the scenario where only binary secret functions are assumed to be of interest to the adversary, and demonstrates that restricting the set of secrets to be hidden allows for a better utility. 

Maximal leakage and its variants \cite{issa2019operational,saeidian2022pointwise,liao2019tunable,gilani2023alpha,kurri2022operational,cung2024binary}
assume the exact secret is unknown \emph{a priori}, and hence are worst-case metrics over all secrets (or secrets 
within certain type). Most of these works also require knowledge of a prior distribution of the data, expect for maximal $\bra{\alpha,\beta}$-leakage \cite{gilani2023alpha} and binary maximal leakage \cite{cung2024binary}. Besides, although maximal leakage and some of its variant \cite{gilani2023alpha,cung2024binary} analyze the composition property, it holds only when successive outputs from the mechanism(s) are conditionally independent, conditioned on the input data. The only exception is pointwise maximal leakage \cite{saeidian2022pointwise}, which has the property of adaptive composition in additive form.

Besides maximal leakage, several metrics are proposed to quantify the information leakage from $\Theta$ to $\Theta'$ %
based on mutual information \cite{makhdoumi2014information,calmon2015fundamental,rassouli2021perfect,zamani2022bounds,biswas2022privic,oya2017back,clark2002quantitative,clark2007static,malacaria2007assessing}, $f$-divergences \cite{wang2019privacy,rassouli2019optimal}, or min-entropy \cite{smith2009foundations,alvim2012measuring,alvim2014additive,asoodeh2017privacy,asoodeh2018estimation,bordenabe2016correlated,jurado2023analyzing,jin2019security,alvim2023novel}, which are detailed as follows. 

\paragraph{Mutual information-based measures} 
Privacy Funnel \cite{makhdoumi2014information,yamamoto1983source} measures the privacy leakage by the mutual information between the secret $G$ and released data. It minimizes the leakage of secret while ensuring the released data retains a certain level of utility from the original data, which is captured by the mutual information between $\Theta$ and $\Theta'$. 
Similarly, rate-distortion theory, originally from source coding \cite{shannon1959coding}, models privacy problems by minimizing mutual information between $\Theta$ and $\Theta'$ subject to a distortion constraint, trading off privacy for utility \cite{biswas2022privic,oya2017back}. %
However, in addition to requiring knowledge of a prior distribution of input data, as discussed in \cite{issa2019operational,lin2023summary}, mutual information does not align with practical notions of privacy and utility, as it does not account for the probability of correct guesses by an adversary.

\paragraph{Min-entropy-based measures} Several works measure information leakage with min-entropy \cite{alvim2012measuring,alvim2014additive,asoodeh2017privacy,asoodeh2018estimation}, the probability of guessing a secret correctly, which provides a more direct measure of privacy risk. For example, summary statistic privacy \cite{lin2023summary,wang2024guarding} quantifies the privacy loss by assessing the worst-case probability the adversary successfully guesses the secret within a tolerance range.  However, those measures require the knowledge of the prior distribution of $\Theta$. Although \cite{braun2009quantitative} extends the measure to be prior-independent by considering the probability of guessing the secret under a worst-case prior over $\Theta$, it considers on the privacy loss of the entire random variable $\Theta$, rather than a specific secret; moreover, no composition and post-processing properties are provided.

\paragraph{Quantitative Information Flow} Quantitative Information Flow (QIF) measures how much information about a secret is leaked through observable outputs. Originating in the works of Denning \cite{denning1982cryptography} and Gray \cite{gray1992toward}, QIF has evolved to encompass a range of privacy metrics. Early QIF frameworks used mutual information to quantify leakage \cite{clark2002quantitative,clark2007static,malacaria2007assessing}, while Smith \cite{smith2009foundations} adopted min-entropy as the metric after recognizing the shortcomings of mutual information as a privacy metric. Further developments introduced generalized metrics like $g$-leakage, which incorporate gain functions to model scenarios involving partial or multiple guesses \cite{alvim2012measuring}. This framework has been applied to various domains, such as local DP and cyber-attack defenses \cite{jurado2023analyzing, alvim2023novel, jin2019security}. However, these studies treat the whole data as the secret and few analyze post-processing and composition properties.

\paragraph{Non-stochastic information theoretic
methods} 
Non-stochastic information theoretic
methods have also been adopted to assess privacy leakage \cite{bhaskar2011noiseless,farokhi2019development,farokhi2021noiseless,farokhi2021non}, and most works in this area focus on the exploration of noiseless privacy-preserving mechanisms, e.g., quantization-based mechanisms. For example, Farokhi et al. \cite{farokhi2019development, farokhi2021noiseless} evaluate utility as the worst-case deviation between input and output data and measure the privacy by {maximin information} \cite{nair2013nonstochastic}, which is defined through a unique taxicab partitioning of feasible input-output pairs. \cite{farokhi2019development} identifies quantization as an optimal privacy mechanism within deterministic piecewise differentiable policies, i.e., policies quantizing the input dataset into bins and the output is differentiable for each bin, that maximize privacy under a utility constraint. However, similar to Quantitative Information Flow, these works consider the whole data as the secret and do not prove post-processing or composition properties.

\subsection{Indistinguishability-Based Measures}
Orthogonal to information-theoretic approaches, several studies measure the privacy leakage of statistical properties of the data based on a measure of indistinguishability over candidate inputs \cite{zhang2022attribute,kawamoto2019local,suri2021formalizing,suridissecting}; these techniques draw inspiration from \emph{differential privacy} (DP) \cite{dwork2014algorithmic}, one of the most widely adopted privacy metrics. Roughly, differential privacy quantifies the privacy of a data release mechanism by measuring the influence of the participation of an individual record (or a group of records) on the final output \cite{dwork2014algorithmic}. However, as discussed in \cite{lin2023summary,wang2024guarding}, DP is designed to measure individual-level privacy and cannot directly quantify information leakage of statistical secrets. As an example, if we aim to protect the mean of a dataset of scalar values and apply a typical local DP mechanism \cite{bebensee2019local} that adds zero-mean Gaussian noise to each record, this prevents the attacker from inferring the inclusion of individual records in the dataset but still allows an unbiased estimate of the dataset's mean. As the number of records increases, the mean of released dataset converges to the original mean value \cite{lin2023summary}. 
Simple variants of this mechanism that add different noise scales to different features are also shown to fail in \cite{lin2023summary}.

Distribution privacy \cite{kawamoto2019local} and distribution inference \cite{suri2021formalizing,suridissecting} tackle this challenge by ensuring that for any two input distributions with different parameters, the output distributions remain indistinguishable (up to multiplicative and additive factors), thus protecting sensitive distributional properties. For instance, attribute privacy \cite{zhang2022attribute} adopts an indistinguishability definition based on the pufferfish framework, while restricting the class of distributions that should be indistinguishable; it is particularly useful for scenarios where only a statistic of the dataset is shared. Both attribute and distribution privacy mechanisms can be very noisy, as they must obfuscate between entire distributions. 
This is challenging because two random variables may have similar secret values (e.g., means), while having arbitrarily different underlying distributions; hence, the amount of noise required to make these distributions indistinguishable can be substantial. 
We show an example of this privacy-utility tradeoff in \cref{sec:empirical}.

\section{Notation and Problem Setting}
\label{section:formulation}

{We generally use uppercase Greek and Roman letters (e.g., $\Theta, G$) to denote random variables, and lowercase Greek letters (e.g., $\theta$) to denote their realization. Sets are denoted by \textbf{boldface} uppercase Greek and Roman letters.} 
\cameraready{A data holder has data drawn from a distribution parameterized by a parameter vector $\theta$.}
\camerareadydelete{$X$'s}\cameraready{The} parameter $\theta$ is itself a realization of a random variable $\Theta \in \paramset$ belonging to a finite set
$\paramset$.\footnote{\cameraready{
We extend the formulation to infinite, continuous parameter sets in \cref{sec:conclusion}.}}
$\paramdistribution$ represents the prior distribution of the parameter random variable $\Theta$, 
and can equivalently be viewed as the prior over the input data\camerareadydelete{$X_{\theta}$}. We use $\distributionof{}$ to denote distribution measures.

The data holder aims to protect a secret $g=\secretnotation (\theta)$ where $\secretnotation$ is a function that is fixed and known.
$g$ is a realization of random variable $G \in \secretvalueset \triangleq \brc{\secretrv_1, \secretrv_2,\cdots, \secretrv_{\secretnum}}$ (i.e., the secret can take $s$ values).
We use $\secretset$ to represent the set of original parameters whose secret values are $\secretrv$, i.e., $\secretset = \brc{\theta\in \mathbf{\Theta} | \secretofparam = \secretrv}$. %
In \cref{sec:mech_analysis}, we use $\secretsetofabbr{i}$ to represent $\mathbf{\Theta}_{\secretrv_i}, \forall i\in \brb{\secretnum}$, for convenience.

The data holder releases \camerareadydelete{$X_{\theta}$}\cameraready{data} via a data release mechanism $\mech=\mathbb{P}_{\Theta'|\Theta}$, which maps input parameter $\theta$ to \camerareadydelete{an}\cameraready{a (possibly random)} output parameter $\cameraready{\Theta'} \in \releaseparamset$ (in general, $\paramset \neq \releaseparamset$).
We use $\mech\bra{\theta}$ to denote the \cameraready{random} distribution parameter \camerareadydelete{$\theta'$}\cameraready{$\Theta'$} output by mechanism $\mech$ with input $\theta$.
\camerareadydelete{The data holder releases variable $Y_{\theta'}$ parameterized by $\theta'$.} 
Given \camerareadydelete{$Y_{\theta'}$}{the realization of $\Theta'$, which we denote by $\theta'$}, the attacker outputs a (possibly random) estimate of the secret, $\hat{G}$. \cameraready{We assume the attacker knows the prior distribution of the data and the data release mechanism $\mech$, and has infinite computational power.} The overall data sharing and attacker guessing process can be formulated as a Markov chain \camerareadydelete{$G-X_{\theta}-Y_{\theta'}-\hat{G}$}\cameraready{$G-\Theta-\Theta'-\hat{G}$}.

\subsection{Utility}
To analyze the privacy-utility tradeoff for a mechanism $\mech$, we 
define our utility measure as the distortion of $\mech$ as the expected total variation (TV) 
distance between the original and released \camerareadydelete{datasets}\cameraready{data distributions, represented by $X_{\Theta}$ and $Y_{\Theta'}$ respectively}, %
under the worst-case prior: %
\begin{align}
    \smldistortion = \sup_{\paramdistribution} \cameraready{\mathbb{E}_{\Theta, \Theta'=\mech\bra{\Theta}}[\distanceof{\distributionof{X_{\Theta}}} {\distributionof{Y_{\Theta'}}}]},
    \label{eq:utility}
\end{align}
where $\distance$ is the total \cameraready{variation} distance.

{Note that \cref{eq:utility} is computed for a worst-case prior, but it is average-case over the realizations of the data and the output of the data release mechanism. We have chosen to model utility in this way because we want the utility measure to apply to a data release mechanism, regardless of input. Hence, the measure should not depend on  $\paramdistribution$. Two natural options are to either consider the worst-case distribution $\paramdistribution$, or the average-case. As we do not model a prior over $\paramdistribution$, we use the former.
We subsequently take an expectation over $\Theta$ and $\Theta'$, as is common in many works studying the utility of privacy mechanisms, both in information theory \cite{oya2017back,asoodeh2017privacy} and  in differential privacy \cite{alvim2012differential,murakami2019utility}.
}

We then provide \cref{lemma:distortion}, which allows us to analyze the distortion of $\mech$ as the expected
TV distance between the original and released datasets under a worst-case input.
\begin{restatable}{lemma}{distortionmeasure}
\label{lemma:distortion}
The distortion measure $\smldistortion$ can be rewritten as 
$
\smldistortion
=\sup_{\theta} \mathbb{E}_{\Theta'=\mech\bra{\theta}}[\distanceof{\privatedistribution} {\releasedistribution}].
$
\end{restatable}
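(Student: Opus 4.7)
The plan is to reduce the supremum over priors to a pointwise supremum over parameters by showing both sides of the claimed equality coincide. I would first introduce the shorthand
$$h(\theta) := \mathbb{E}_{\Theta' = \mech(\theta)}\bigl[\distanceof{\privatedistribution}{\releasedistribution}\bigr],$$
and then invoke the tower property of expectation to rewrite the inner expectation in the definition of $\smldistortion$ as
$$\mathbb{E}_{\Theta,\,\Theta' = \mech(\Theta)}\bigl[\distanceof{\distributionof{X_{\Theta}}}{\distributionof{Y_{\Theta'}}}\bigr] = \mathbb{E}_{\Theta \sim \paramdistribution}[h(\Theta)],$$
so the problem becomes showing $\sup_{\paramdistribution}\,\mathbb{E}_{\Theta \sim \paramdistribution}[h(\Theta)] = \sup_{\theta \in \paramset} h(\theta)$.

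For the ``$\leq$'' direction, I would observe that for any prior $\paramdistribution$ on the finite set $\paramset$, the quantity $\mathbb{E}_{\Theta \sim \paramdistribution}[h(\Theta)]$ is a convex combination of the values $\{h(\theta)\}_{\theta \in \paramset}$ and hence bounded above by $\sup_{\theta} h(\theta)$; taking the supremum over $\paramdistribution$ preserves the inequality. For the matching ``$\geq$'' direction, since $\paramset$ is finite the supremum $\sup_{\theta} h(\theta)$ is attained at some $\theta^\star \in \paramset$, and I would exhibit the point-mass prior $\paramdistribution = \delta_{\theta^\star}$, which is itself a valid prior over $\paramset$; this particular choice yields $\mathbb{E}_{\Theta \sim \paramdistribution}[h(\Theta)] = h(\theta^\star) = \sup_{\theta} h(\theta)$, and so the supremum on the left is at least as large.

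Combining the two directions gives the stated equality. There is no serious obstacle: this is a standard ``expectation-versus-supremum'' swap, relying only on the finiteness of $\paramset$ (so Dirac priors are admissible) and the linearity of expectation. The only subtlety worth flagging in the write-up is that the prior $\paramdistribution$ is unconstrained, so we really do have access to point masses; if the formulation were later extended to an uncountable $\paramset$ (as mentioned in the conclusion), the same argument would go through with a maximizing sequence of priors and a standard measurability / approximation argument in place of an attained maximum.
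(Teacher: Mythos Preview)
Your proposal is correct and matches the paper's proof essentially line for line: the paper also writes the distortion as a $\paramdistribution$-weighted average of the per-$\theta$ expected TV distances, bounds this above by the pointwise supremum, and then achieves the reverse inequality via the point-mass prior at the maximizing $\theta$. Your write-up is slightly cleaner in explicitly naming $h(\theta)$ and invoking the tower property, but the argument is the same.
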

(Proof in \cref{sec:proof_distortion})
Since our utility measure considers a worst-case prior distribution, the distortion of mechanisms proposed for attribute privacy \cite{zhang2022attribute}, distribution privacy \cite{kawamoto2019local}, or distribution inference \cite{suri2021formalizing} can reach the trivial upper bound $1$ 
on distortion,\footnote{We illustrate the case for attribute privacy in \cref{sec:empirical}.} i.e., the worst possible utility. Our goal is to understand whether there exist data release mechanisms that achieve meaningful utility while also satisfying a privacy guarantee with the desired properties from \cref{sec:intro}. %

\section{\PrivName{} }
\label{sec:formulation}

{We next present our proposed privacy measure, discuss its properties, and explain how to compute it.}
\Privname (SML) measures the largest increase an adversary can gain in their guess of $g$; it is a property of a data release mechanism $\mathcal M$ and a secret mapping $\secretnotation$. We define it as follows:
\begin{align}
    \sml = \sup_{\paramdistribution, \mathbb{P}_{\hat{G}|\Theta'}}\log \frac{\probof{\hat{G}=G}}{\sup_{\secretrv\in \secretvalueset} \mathbb{P}_G\bra{\secretrv}},
    \label{eq:privmetric}
\end{align}
where the supremum is taken over all prior distributions $\paramdistribution$ over the distribution parameter $\theta$ and attack strategies $\mathbb{P}_{\hat{G}|\Theta'}$. 
The probability in the numerator is over the attacker's randomized estimator, the mechanism, and the secret.
Note that 
the secret function $\secretnotation$ and the data release mechanism $\mech$ are fixed and assumed to be known in this optimization.

SML bears some similarities with maximal leakage \cite{issa2019operational} and worst-case min-entropy leakage \cite{braun2009quantitative}, though they do not simultaneously satisfy all three desirable properties we propose. Adopting our notation, maximal leakage $\maxl$ is defined as follows:
\begin{align*}
\maxl = \sup_{\mathbb{P}_{G|\Theta}, \mathbb{P}_{\hat{G}|\Theta'}} \log\frac{\probof{\hat{G}=G}}{\sup_{\secretrv\in \secretvalueset} \mathbb{P}_G\bra{\secretrv}}.
\end{align*}
\normalsize
Maximal leakage requires a known prior distribution over input data {$\mathbb{P}_{\Theta}$} and assumes the secret function $\secretofparam$ is unknown. 
{Hence, it optimizes over all $\mathbb{P}_{G|\Theta}$.}
Worst-case min-entropy leakage $\minentropy$ treats the entire input data distribution as the secret to protect; for the Markov chain \camerareadydelete{$X_{\theta}-Y_{\theta'}-\hat{X}_{\theta}$}\cameraready{$\Theta-\Theta'-\hat{\Theta}$}, it is defined as
\begin{align*}
\minentropy = \sup_{\paramdistribution, \mathbb{P}_{\camerareadydelete{\hat{X}_{\theta}}\cameraready{\hat{\Theta}}|\Theta'}} \log\frac{\probof{\camerareadydelete{\hat{X}_{\theta}={X}_{\theta}}\cameraready{\hat{\Theta}=\Theta}}}{\sup_{\theta\in \mathbf{\Theta}} \mathbb{P}_\Theta\bra{\theta}}.
\end{align*}
\normalsize
Under a fixed prior, maximal leakage can have a smaller value than SML since SML %
considers the worst-case prior.
However, one can construct $\mech$ and $\secretnotation$ for which maximal leakage, with a worst-case prior, achieves its largest possible value (i.e., $\min\brc{\log |\Theta'|, \log |\Theta|}$)  
while SML is 0.
The following property shows that SML is upper- and lower-bounded by both maximal leakage with a worst-case prior and worst-case min-entropy leakage, by up to an additive factor that depends on the secret and the parametric family. (Proof in \cref{sec:proof_metric_compare}.)

\begin{restatable}[Relation to Maximal Leakage and Min-Entropy Leakage]{property}{relation}
\label{property:metric_compare}
\Privname{} $\sml$ satifies
\begin{align*}
\minentropy- \sup_{\secretrv\in \secretvalueset}\log|\mathbf{\Theta}_{\secretrv}|~ \leq ~&\sml ~\leq ~\minentropy,\\
\sup_{\paramdistribution}\maxl-\sup_{\secretrv\in \secretvalueset}\log|\mathbf{\Theta}_{\secretrv}| ~\leq~ &\sml ~\leq~ \sup_{\paramdistribution}\maxl.
\end{align*}
\normalsize
\end{restatable}

\begin{figure}[tbp]
\centering
\includegraphics[width=0.8\linewidth]{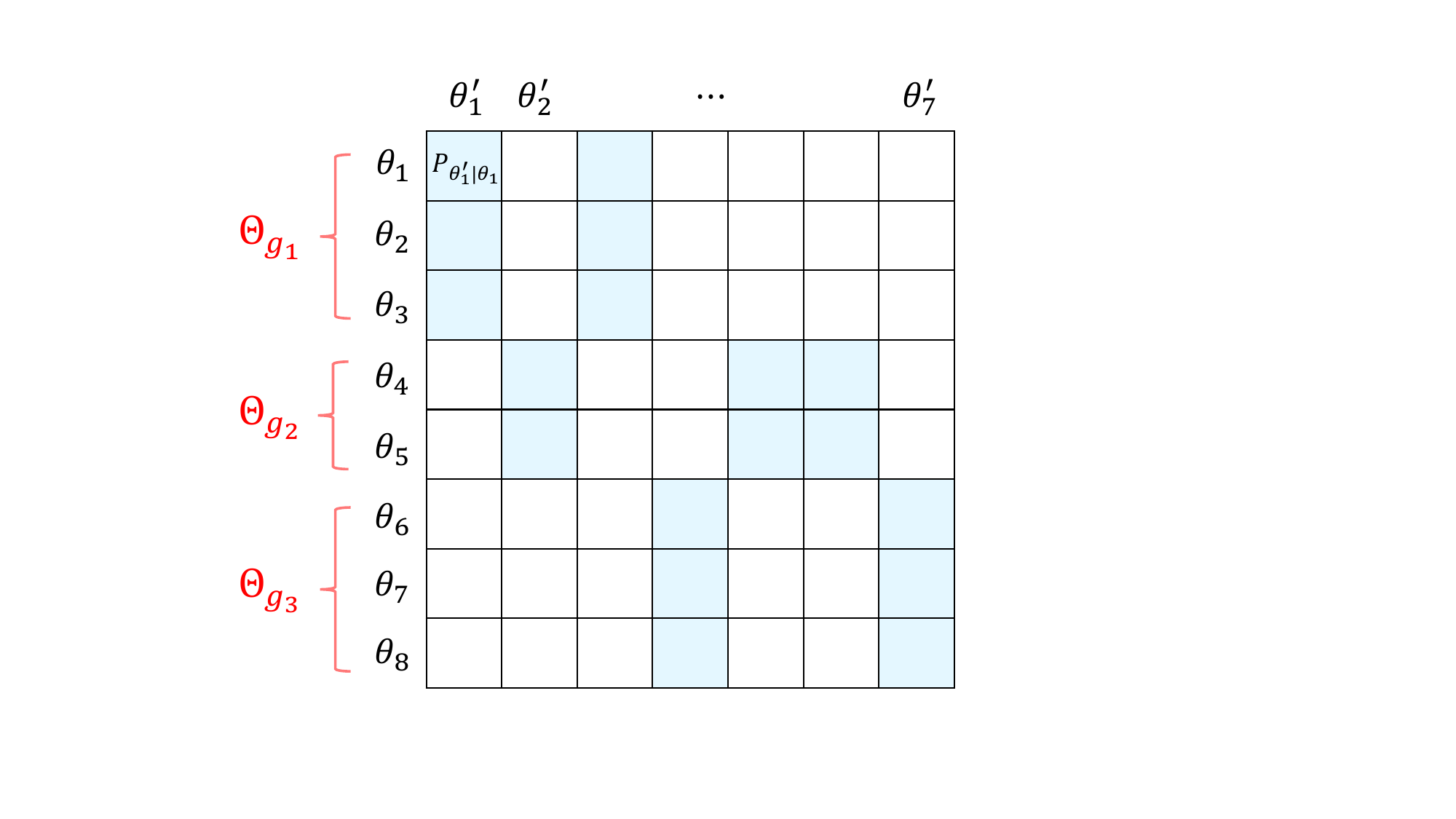}
\caption{
Given a mechanism $\mech=\mathbb{P}_{\Theta'|\Theta}$, the left subfigure shows a policy matrix. For each column $j$, the red outlined region indicates rows of parameters with secret $\secretrv$ maximizing $\mathbb{P}_{\Theta'|\Theta}\bra{\theta'_j|\theta_\secretrv}$. The blue cell lies in the row of $\theta_\secretrv$. When the mechanism $\mech$ is deterministic, SML calculation can be converted to a min-cost flow problem (right). The constructed directed graph contains three columns of nodes (representing $G, \Theta, \Theta'$ respectively) between the source and sink nodes. The capacity of all edges are $1$, and only the edges between nodes in $\Theta$ and $\Theta'$ columns have non-zero cost ($-\mathbb{P}_{\Theta'|\Theta}\bra{\theta'_k|\theta_j}$ between $\theta_j$ and $\theta'_k$). Edges are annotated as: Capacity (Cost).
}
\label{fig:calculation}
\end{figure}

\subsection{Computation of \PrivName}
{We next discuss how to compute \privname.}
Computing SML is more convenient under an alternative form, which shows that we only need to search over a restricted class of priors that assign nonzero probability mass to at most one parameter $\theta \in \secretset$, for each secret $g\in \secretvalueset$.
Under such a prior, for a fixed $g \in\secretvalueset$, there is at most one  $\theta \in \paramset$ such that $\mathbb{P}_{\Theta|G}\bra{\theta|\secretrv}>0$; we use $\theta_\secretrv$ to denote this value.%
\begin{restatable}{proposition}{smlexpression}
\label{prop:sml_calculation}
\Privname{} satisfies
\begin{align*}
\sml = \sup_{\mathbb{P}_{\Theta|G}\in\brc{0,1}}\log \sum_{\theta'\in\releaseparamset} \sup_{\secretrv\in \secretvalueset} \mathbb{P}_{\Theta'|\Theta}\bra{\theta'|\theta_\secretrv}.
\end{align*}
\end{restatable}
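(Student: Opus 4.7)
The plan is to unfold the supremum in \cref{eq:privmetric} in two stages, first reducing the attacker strategy and then reducing the prior.

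First, for a fixed prior $\mathbb{P}_\Theta$, the inner supremum over attacker strategies $\mathbb{P}_{\hat G|\Theta'}$ is attained by the maximum-a-posteriori (MAP) decoder, giving $\sup_{\mathbb{P}_{\hat G|\Theta'}}\probof{\hat G=G} = \sum_{\theta'\in\releaseparamset}\sup_{g\in\secretvalueset}\probof{G=g,\Theta'=\theta'}$. Invoking the Markov chain $G-\Theta-\Theta'-\hat G$ and the fact that $G=\secretof{\Theta}$ is deterministic given $\Theta$, this becomes $\sum_{\theta'\in\releaseparamset}\sup_{g}\sum_{\theta\in\secretset}\mathbb{P}_\Theta(\theta)\,\mathbb{P}_{\Theta'|\Theta}(\theta'|\theta)$, so \cref{eq:privmetric} collapses to a supremum over $\mathbb{P}_\Theta$ alone.

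Second, I would reparameterize the prior via $\alpha_g := \mathbb{P}_G(g)$ and the conditional $\mathbb{P}_{\Theta|G}(\cdot|g)$, and set $q_g(\theta') := \sum_{\theta\in\secretset}\mathbb{P}_{\Theta|G}(\theta|g)\,\mathbb{P}_{\Theta'|\Theta}(\theta'|\theta)$. The argument inside the logarithm rewrites as $\frac{\sum_{\theta'}\sup_g \alpha_g q_g(\theta')}{\sup_g \alpha_g}$. For fixed $\{\alpha_g\}$, the numerator is convex in each $q_g$, being a sum over $\theta'$ of pointwise maxima of functions affine in $q_g$; meanwhile $q_g$ ranges over the polytope spanned by $\{\mathbb{P}_{\Theta'|\Theta}(\cdot|\theta)\}_{\theta\in\secretset}$. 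A convex function on a compact convex set attains its maximum at a vertex, so the supremum is reached at $q_g = \mathbb{P}_{\Theta'|\Theta}(\cdot|\theta_g)$ for some $\theta_g \in \secretset$, which is exactly the restriction $\mathbb{P}_{\Theta|G}\in\{0,1\}$ claimed in the proposition.

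Finally, with each $q_g$ collapsed to a delta at $\theta_g$, I would resolve the remaining supremum over $\{\alpha_g\}$ by the elementary pointwise bound $\alpha_g\,\mathbb{P}_{\Theta'|\Theta}(\theta'|\theta_g) \leq (\sup_{g'}\alpha_{g'})\,\mathbb{P}_{\Theta'|\Theta}(\theta'|\theta_g)$; taking $\sup_g$, summing over $\theta'$, and dividing by $\sup_g \alpha_g$ upper-bounds the ratio by $\sum_{\theta'}\sup_g \mathbb{P}_{\Theta'|\Theta}(\theta'|\theta_g)$, and this bound is tight under the uniform choice $\alpha_g = 1/\secretnum$. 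Taking the logarithm and supping over $\{\theta_g\}$ yields the stated identity. The main obstacle is the middle step: the original expression is not obviously a convex program, so one has to first substitute MAP and then notice that the resulting dependence on $\mathbb{P}_{\Theta|G}$ is convex (rather than concave), which is what legitimizes the extreme-point reduction from a general conditional to a single representative $\theta_g$ per secret.
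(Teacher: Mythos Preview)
Your proposal is correct. Both you and the paper open with the MAP reduction and close by showing the uniform marginal on $G$ suffices, but the middle step---reducing $\mathbb{P}_{\Theta|G}$ to a point mass---is handled differently. The paper takes a maximizer $\mathbb{P}_\Theta^+$ of the intermediate quantity $V_{\mathbb{P}_\Theta}=\log\sum_{\theta'}\sup_g\mathbb{E}_\Theta[\mathbb{P}_{\Theta'|\Theta}(\theta'|\theta)\mid\theta\in\secretset]$, partitions $\releaseparamset$ into the sets $\releaseparamset(g)$ on which each $g$ wins the inner supremum, hand-picks $\theta_g\in\arg\sup_{\theta\in\secretset}\sum_{\theta'\in\releaseparamset(g)}\mathbb{P}_{\Theta'|\Theta}(\theta'|\theta)$, and verifies that the resulting delta-prior $\mathbb{P}_\Theta^*$ both does at least as well and satisfies a side condition making an earlier inequality tight. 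Your convexity argument replaces this construction entirely: the numerator is a sum of pointwise maxima of affine functions of the $q_g$'s, hence jointly convex, so on a product of polytopes the supremum lands at a product of vertices. Your route is shorter and more conceptual; the paper's explicit partition is more constructive and dovetails with the min-cost-flow reformulation in \cref{prop:min-cost-flow}. One small sharpening: you write ``convex in each $q_g$,'' but the function is in fact jointly convex, which is what makes the extreme-point reduction a single step rather than a coordinate-wise iteration.
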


\noindent(Proof in \cref{section:proof_sml_calculation}.)
Based on \cref{prop:sml_calculation}, we explain how to compute SML. For concreteness, \cref{fig:calculation} (left) illustrates an example with $\paramset=\brc{\theta_1, \theta_2, \theta_3}, \releaseparamset=\brc{\theta'_1, \theta'_2}$, and $\mathbf{\Theta}_{\secretrv_1}=\brc{\theta_1, \theta_2}, \mathbf{\Theta}_{\secretrv_2}=\brc{\theta_3}$. 
Given a mechanism $\mech=\mathbb{P}_{\Theta'|\Theta}$, we can construct a policy matrix where the value in the $i$-th row and $j$-th column is $\mathbb{P}_{\Theta'|\Theta}\bra{\theta'_j|\theta_i}$. First, fix a prior $\paramdistribution$ such that $\mathbb{P}_{\Theta|G}\in\brc{0,1}$. 
\cref{fig:calculation} illustrates a case where the prior satisfies $\theta_{\secretrv_1}=\theta_1$ and $\theta_{\secretrv_2}=\theta_3$ ($\theta_\secretrv$ is defined above \cref{prop:sml_calculation}%
).
Next, fix a column $\theta'_j$ in the policy matrix. 
We can now find a secret value $\tilde \secretrv\in\secretvalueset$ that maximizes $\mathbb{P}_{\Theta'|\Theta}\bra{\theta'_j|\theta_{\tilde \secretrv}}$---i.e., $\tilde \secretrv$ is  the maximum likelihood secret for an observed output $\theta'_j$. 
For each column, the red outline denotes the input parameters in $\boldsymbol \Theta_{\tilde g}$.
Our example mechanism satisfies $\arg\sup_{\tilde \secretrv}\mathbb{P}_{\Theta'|\Theta}\bra{\theta'_1|\theta_{\tilde \secretrv}}=\secretrv_1, \arg\sup_{\tilde \secretrv}\mathbb{P}_{\Theta'|\Theta}\bra{\theta'_2|\theta_{\tilde \secretrv}}=\secretrv_2$. 
For each column, the blue square is the intersection of the red region with the row of $\theta_{\tilde \secretrv}$. 
We finally sum the likelihoods of all the blue squares. 
Our goal is to find the worst-case prior and calculate the maximum value of $\log \sum_{\theta'\in\releaseparamset} \sup_{\tilde \secretrv\in \secretvalueset} \mathbb{P}_{\Theta'|\Theta}\bra{\theta'|\theta_{\tilde \secretrv}}$.
Worst-case, this can be done in time exponential in the number of input parameters $|\paramset|$ by enumerating all feasible $P_{\Theta|G}$.

\subsection{Computation for Deterministic Mechanisms: Min-Cost Flow}
When the mechanism $\mech$ is \emph{deterministic}, i.e., $\mathbb{P}_{\Theta'|\Theta}\in\brc{0,1}$, SML calculation process can be converted to a min-cost flow problem \cite{ford2015flows}. Given a directed graph where each edge is assigned a capacity and a cost, the min-cost flow problem aims to design a network flow satisfying the capacity constraint of each edge, while achieving the minimum cost. %
The final cost of the min-cost flow has a one-to-one correspondence to the SML of the underlying problem. 

To construct the network, we start with a source and a sink node, and create three columns of nodes between them. 
The first $G$-column contains all potential secret values ($\secretrv_1$, $\secretrv_2$ in \cref{fig:calculation}). The capacity of the edge between the source and each node in the $G$-column is $1$ and the cost is $0$. %
The second $\Theta$-column contains all possible input parameter values ($\theta_1$, $\theta_2$, $\theta_3$ in \cref{fig:calculation}). 
There is an edge between node ${\secretrv_i}$ and ${\theta_j}$ iff $\theta_j \in \mathbf{\Theta}_{\secretrv_i}$. The capacity of each edge is $1$ and the cost is $0$. 
The third $\Theta'$-column contains all possible released parameter values ($\theta'_1$, $\theta'_2$ in \cref{fig:calculation}).
This column is fully connected with the second column. The capacity of the edge between $\theta_j$ and $\theta'_k$ is $1$ and the cost is $-\mathbb{P}_{\Theta'|\Theta}\bra{\theta'_k|\theta_j}$. All nodes in $\Theta'$-column are also connected with the sink node with edge capacity $1$ and cost $0$. Concretely, the network construction steps are detailed in \cref{alg:network_construct}.

\begin{algorithm}[htpb]
    \LinesNumbered
	\BlankLine
	\SetKwInOut{Input}{Input}
\caption{Network construction under deterministic mechanism.}
\label{alg:network_construct}
\Input{parameter value set $\paramset$, secret function $\secretnotation$, secret value set $\secretvalueset$, released parameter value set $\releaseparamset$, data release mechanism $\mech=\mathbb{P}_{\Theta'|\Theta}$}
	\BlankLine
Construct a source node $\node{\text{src}}$ and a sink node $\node{\text{sink}}$\;
\textbf{for} each $\secretrv\in\secretvalueset$: construct a node $\node{\secretrv}$ connected 
to $\node{\text{src}}$ with edge capacity $1$ and cost $0$\;
\textbf{for} each $\secretrv\in\secretvalueset$ and $\theta\in \secretset$: construct a node $\node{\theta}$ connected to $\node{\secretrv}$ with edge capacity $1$ and cost $0$\;
\textbf{for} each $\theta'\in\releaseparamset$ and $\theta\in \paramset$: construct a node $\node{\theta'}$ connected to $\node{\theta}$ with edge capacity $1$ and cost $-\mathbb{P}_{\Theta'|\Theta}\bra{\theta'|\theta}$, and connected to $\node{\text{sink}}$ with edge capacity $1$ and cost $0$.
\end{algorithm}

From \cref{prop:sml_calculation}, we know that 
among all the distributions we are optimizing over,
there is only one 
$\theta_\secretrv$ satisfying $\mathbb{P}_{\Theta|G}\bra{\theta_\secretrv|\secretrv}=1>0, \forall \secretrv\in\secretvalueset$. 
For any \emph{deterministic} mechanism, there is only one $\theta'\in\releaseparamset$ satisfying $\mathbb{P}_{\Theta'|\Theta}\bra{\theta'|\theta}=1>0, \forall \theta\in\paramset$. %
Finally, for each $\theta'\in\Theta'$, we can only select one $\tilde \secretrv\in \secretvalueset$ to calculate $\mathbb{P}_{\Theta'|\Theta}\bra{\theta'|\theta_{\tilde \secretrv}}$ for the SML calculation, based on \cref{prop:sml_calculation}. Hence, we set the capacity of all edges %
as $1$. It is known that there exists a min-cost network flow such that the flow of each edge is either $1$ or $0$ \cite{ford2015flows}. In that case, for all nodes $\theta\in\secretset$ in $\Theta$ column, only one can accept one unit of flow from $\secretrv$, and this node is $\theta_\secretrv$. For each 
node in the $\Theta'$ column, it can also only accept one unit of flow from $\theta_\secretrv, \forall \secretrv\in\secretvalueset$. Therefore, the min-cost flow problem under our constructed network in \cref{alg:network_construct} shares the same objective as \cref{prop:sml_calculation}. %
Importantly, the min-cost flow problem can be solved efficiently in polynomial time \cameraready{in $\brd{\paramset}\cdot\brd{\releaseparamset}$} \cite{ford2015flows}.

The following proposition shows that under the network constructed through \cref{alg:network_construct}, the final cost of the min-cost flow has a one-to-one correspondence to the SML of the underlying problem.

\begin{restatable}[SML computation, deterministic  mechanism]{proposition}{smlcompdet}
\label{prop:min-cost-flow}
Given a deterministic mechanism $\mech$ and a secret mapping $\secretnotation$, the negative log cost of the min-cost flow under the network constructed through \cref{alg:network_construct} is equal to the SML $\sml$.
\end{restatable}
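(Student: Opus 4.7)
The plan is to reduce the claim to the characterization of $\sml$ in \cref{prop:sml_calculation} and identify it with the (negated) optimum of the min-cost flow on the network from \cref{alg:network_construct}. The key simplification is that $\mech$ is deterministic, so $\mathbb{P}_{\Theta'|\Theta}\bra{\theta'|\theta}\in\brc{0,1}$ and every edge cost in the constructed graph is either $0$ or $-1$.

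First I would specialize \cref{prop:sml_calculation}: for any choice of representatives $\brc{\theta_\secretrv}_{\secretrv\in\secretvalueset}$ with $\theta_\secretrv\in\secretset$, the inner supremum $\sup_{\secretrv\in\secretvalueset} \mathbb{P}_{\Theta'|\Theta}\bra{\theta'|\theta_\secretrv}$ equals $1$ if some $\theta_\secretrv$ is mapped to $\theta'$ by $\mech$ and $0$ otherwise. Hence
\[
\sml \;=\; \log \max_{\brc{\theta_\secretrv}}\, \brd{\brc{\mech(\theta_\secretrv) : \secretrv \in \secretvalueset}} \;=\; \log k^{*},
\]
where $k^{*}$ denotes the maximum, over valid selections, of the number of distinct images $\mech(\theta_\secretrv)$. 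It remains to show that the minimum flow cost is exactly $-k^{*}$.

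Second, I would set up a correspondence between integer flows and selections on the SML side. Since every capacity is $1$, classical min-cost flow theory \cite{ford2015flows} gives an integer-valued optimum that path-decomposes into unit source-to-sink paths $\node{\text{src}}\to \node{\secretrv}\to \node{\theta}\to \node{\theta'}\to \node{\text{sink}}$. The unit source and sink capacities force each $\secretrv$ and each $\theta'$ to appear in at most one path; the $\node{\secretrv}\to \node{\theta}$ edges exist only when $\theta\in\secretset$; and each path contributes cost $-\indicatorof{\mech(\theta)=\theta'}$. A flow of cost $-k$ therefore encodes $k$ distinct pairs $(\secretrv, \theta_\secretrv)$ whose images under $\mech$ are pairwise distinct, and conversely any selection producing $k$ distinct images realizes such a flow by picking one witness $\secretrv$ per image and leaving the remaining secrets unrouted. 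Because every non-zero edge cost equals $-1$, the minimum total cost equals $-k^{*}$, and taking $\log$ of its negation yields $\log k^{*}=\sml$, matching the ``negative log cost'' in the statement.

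The main obstacle will be making the correspondence airtight in both directions. The forward direction is routine (pick one witness $\secretrv$ per distinct image and route the corresponding unit path). The reverse direction---no feasible flow can beat $-k^{*}$---requires verifying that $0$-cost paths cannot improve the objective and that the unit capacities on the $\node{\theta'}\to\node{\text{sink}}$ edges are exactly what enforces the distinctness constraint built into $k^{*}$. Both facts follow from path decomposition and integrality, but should be spelled out carefully to handle the cases when $\brd{\secretvalueset}$ and $\brd{\releaseparamset}$ differ in size or when several $\mech(\theta_\secretrv)$'s coincide.
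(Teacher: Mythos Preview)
Your proposal is correct and follows essentially the same approach as the paper: both specialize \cref{prop:sml_calculation} to the deterministic case to reduce $\sml$ to $\log$ of the maximum number of distinct images $\brc{\mech(\theta_\secretrv)}$, and then match this combinatorial quantity with the negated optimal cost via the integrality of min-cost flow \cite{ford2015flows} and a bijection between unit source--sink paths and selections of representatives. The paper organizes the two directions as separate inequalities $\log(-\mincost)\geq\sml$ and $\log(-\mincost)\leq\sml$, but the content is the same as your forward/reverse correspondence.
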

(Proof in \cref{section:proof_min-cost-flow})

Intuitively, we can refer to our example, where we allocate 1 unit of flow from the source to each of $\secretrv_1$ and $\secretrv_2$. %
For $\secretrv_1$, the full flow goes either to $\theta_1$ or $\theta_2$; the selected node is dubbed $\theta_{\secretrv_1}$ under $\mathbb{P}_{\Theta'|\Theta}$.
The flows from $\theta_{\secretrv_1}$ and $\theta_3$ then go to $\theta'_1$ and $\theta'_2$, respectively. 
This is because $\theta'_1$ has $g_1$ as its ML secret, and $\theta'_2$ has $g_2$ as its ML secret under $\mathbb{P}_{\Theta'|\Theta}$.
Finally, the flows merge to the sink.
The log of the negative total cost of the flow is the SML.

\subsection{Computational hardness for general mechanisms}
Although the computation of SML is efficient when the mechanism is deterministic, the computation is NP-hard in general, although there exist efficient approximation algorithms, as shown in \cref{prop:np-hard}.

\begin{restatable}[Hardness of SML computation]{theorem}{hardness}
\label{prop:np-hard}
Given a mechanism $\mech$ and a secret mapping $\secretnotation$, the computation of SML $\sml$ is NP-hard. There exists an approximation algorithm for SML calculation with approximation ratio $1+\rho$ ($\rho>0$) and running time polynomial in $\brd{\Theta}\cdot\brd{\Theta'}$ and $1/\rho$.
\end{restatable}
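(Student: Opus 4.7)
The plan is to prove the two claims separately: first the NP-hardness of exact SML computation via reduction from Maximum Coverage, then the existence of a fully polynomial-time approximation scheme via discretization and search.

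For NP-hardness, I would reduce from uniform Maximum Coverage, which is itself NP-hard (Max Coverage restricted to sets of a common size $B$ remains hard). Given a universe $U=\{u_1,\ldots,u_n\}$, sets $S_1,\ldots,S_m\subseteq U$ each of size $B$, and a budget $k$, I construct an SML instance as follows: (i) secret value set $\secretvalueset=\{\secretrv_1,\ldots,\secretrv_k\}$, one secret per ``slot''; (ii) parameter groups $\mathbf{\Theta}_{\secretrv_j}=\{\theta^{(j)}_1,\ldots,\theta^{(j)}_m\}$ for $j\in[k]$, where $\theta^{(j)}_i$ encodes placing set $S_i$ in slot $j$; (iii) release space $\releaseparamset=\{\omega_1,\ldots,\omega_n\}$ corresponding to universe elements; and (iv) mechanism $\mathbb{P}_{\Theta'|\Theta}(\omega_l \mid \theta^{(j)}_i)=(1/B)\cdot\mathbb{1}(u_l\in S_i)$, which is already a valid distribution (summing to $1$) since $|S_i|=B$. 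By \cref{prop:sml_calculation}, $\sml$ is the logarithm of the optimum, over selectors $j\mapsto i_j$, of $\sum_{\theta'\in\releaseparamset}\sup_{g\in\secretvalueset}\mathbb{P}_{\Theta'|\Theta}(\theta'\mid\theta_g)=(1/B)\cdot|\bigcup_j S_{i_j}|$. Hence an exact polynomial-time algorithm for $\sml$ would decide uniform Max Coverage in polynomial time, establishing NP-hardness.

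For the approximation algorithm, I would exploit the fact that the inner objective $S=\sum_{\theta'\in\releaseparamset}\sup_{g}\mathbb{P}_{\Theta'|\Theta}(\theta'\mid\theta_g)$ is a bounded, separable sum of quantities in $[0,1]$, one per output $\theta'$. The approach is to round each conditional probability to the nearest multiple of a grid of granularity $\Theta(\rho/(|\paramset|\cdot|\releaseparamset|))$, so that each column of the policy matrix takes values in a set of polynomial size. The rounded objective then supports efficient enumeration/dynamic programming over ``profiles'' of column-maxima together with a per-secret greedy assignment; a standard rounding analysis shows that the resulting selector yields a value within a $(1+\rho)$ multiplicative factor of the optimum on $\sml$, while the overall running time remains polynomial in $|\paramset|\cdot|\releaseparamset|$ and $1/\rho$.

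The main obstacle is twofold. On the hardness side, the reduction must be instantiated so that the bijection between selectors and Max Coverage solutions is preserved and that the SML value permits recovery of the coverage value; the uniformity of $|S_i|=B$ is essential because it removes the need for auxiliary (normalization) outputs that would otherwise pollute the objective with overlap terms. On the approximation side, the trickier step is reconciling the FPTAS with the NP-hardness above: the key observation is that a $(1+\rho)$-approximation of $\sml$ (a logarithm, bounded by $\log|\releaseparamset|$) is strictly weaker than a $(1+\rho)$-approximation of $e^{\sml}=S$ (which would contradict known inapproximability of Max Coverage). Quantifying this gap precisely and showing that the rounded configuration space is explorable in $\mathrm{poly}(|\paramset|\cdot|\releaseparamset|,1/\rho)$ time is the technical crux.
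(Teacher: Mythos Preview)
Your NP-hardness reduction is essentially the paper's: both encode secrets as ``slots,'' parameters in each slot as the choice of a set, outputs as universe elements, and the mechanism as the uniform distribution over the chosen set. The paper reduces from the $3$-Set Cover decision problem (all sets of size $3$, ask whether $k$ sets cover $\mathcal U$), while you reduce from uniform Maximum Coverage (all sets of size $B$); the constructions and the correctness arguments are otherwise identical.

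Your approximation argument, however, has a genuine gap and also diverges from the paper's route. The paper does not discretize at all: it rewrites the SML computation as an \emph{edge cost flow} problem (a min-cost max-flow variant in which each edge incurs a fixed cost if and only if it carries positive flow) on a four-layer network, and then simply invokes the known $(1{+}\rho)$-approximation of Krumke et al.\ for that problem, whose running time is polynomial in the number of edges and $1/\rho$. Your proposal instead rounds the conditional probabilities to a grid and claims that ``the rounded objective then supports efficient enumeration/dynamic programming over profiles of column-maxima together with a per-secret greedy assignment.'' But rounding does nothing to reduce the search space: the selector problem (pick one $\theta_g$ per secret $g$ to maximize $\sum_{\theta'}\max_g \mathbb{P}_{\Theta'|\Theta}(\theta'\mid\theta_g)$) remains NP-hard on the rounded instance, and the number of ``profiles of column-maxima'' is still exponential in $|\releaseparamset|$. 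You correctly observe that a $(1{+}\rho)$-approximation of the \emph{logarithm} is weaker than a $(1{+}\rho)$-approximation of $e^{\sml}$, and that this is what makes an FPTAS compatible with the hardness; but your sketch does not convert that observation into a polynomial-time algorithm. To close the gap you would need either the paper's reduction to edge cost flow (plus the cited FPTAS) or a different algorithm that actually exploits the logarithmic scale.
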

(Proof in \cref{proof:np-hard})
We prove the NP-hardness of SML computation by reducing it to the 3-set cover decision problem, {for which NP-completeness is known \cite{karp2010reducibility}.} To approximately compute SML, we convert the computation to an edge cost flow problem, {for which there exist practical (poly-time) approximation algorithms \cite{krumke1999flow}. However, we do not use these approximation methods in the remainder of the paper; although our proposed mechanisms are not deterministic, they possess a special structure that enables us to determine their privacy guarantees in closed-form.}

\subsection{Properties of \PrivName{}}

We next show that SML satisfies two natural desired properties: adaptive composition and post-processing. 
Adaptive composition bounds the total leakage of releasing multiple results from one or more possibly adaptive mechanisms applied sequentially.

\begin{restatable}[Adaptive Composition]{theorem}{composition}
\label{thm:composition}
    Suppose a data holder sequentially applies $\mechnum$ mechanisms $\mech_1,\ldots, \mech_m$, where $\forall i\in [m]$, the $i$th mechanism is a function of the input data $\theta$ and all of the previous outputs, which we denote as $\theta'^{(1)}, \ldots \theta'^{(i)}$. That is, $\mech_i(\theta, \theta'^{(1)}, \ldots, \theta'^{(i-1)})=\theta'^{(i)}$. 
    Suppose $\forall i\in [m]$, mechanism  $\mech_i$ satisfies a \privname guarantee with respect to $\secretnotation$ of $\privacynotation_{\mech_\indexi,\secretnotation}$.
    Let $\boldsymbol{\mech}=\mech_1 \circ \mech_2\circ  \ldots \circ \mech_m$ denote the composition of these adaptively chosen mechanisms. 
    The SML with respect to an adversary that can see all intermediate outputs $\theta'^{(1)}, \theta'^{(2)}, \ldots, \theta'^{(m)}$ can be bounded as
    $\privacynotation_{\boldsymbol{\mech},\secretnotation} \leq \sum_{\indexi \in [\mechnum]} \privacynotation_{\mech_\indexi,\secretnotation}$.
\end{restatable}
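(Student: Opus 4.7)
The plan is to exploit the alternative characterization of \privname{} from \cref{prop:sml_calculation}, which rewrites $\exp\bra{\sml}$ as $\sup_{\mathbb{P}_{\Theta|G}\in\brc{0,1}}\sum_{\theta'\in\releaseparamset}\sup_{\secretrv\in\secretvalueset}\mathbb{P}_{\Theta'|\Theta}\bra{\theta'|\theta_\secretrv}$, and to apply it to the composed mechanism $\boldsymbol{\mech}$ viewed as a single mechanism whose output $\bra{\theta'^{(1)},\ldots,\theta'^{(m)}}$ lives in the product release space. The key observation is that a bound on $\exp\bra{\privacynotation_{\boldsymbol{\mech},\secretnotation}}$ can then be decomposed inductively into a product of per-step bounds, each of which is controlled by the hypothesis.

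Concretely, I would proceed in three steps. First, apply the chain rule of conditional probability, using the adaptive structure $\mech_i(\theta,\theta'^{(1)},\ldots,\theta'^{(i-1)})=\theta'^{(i)}$, to factor the joint conditional into
\begin{equation*}
\prod_{i=1}^{m}\mathbb{P}_{\Theta'^{(i)}\mid\Theta,\Theta'^{(1)},\ldots,\Theta'^{(i-1)}}\bra{\theta'^{(i)}\mid\theta_\secretrv,\theta'^{(1)},\ldots,\theta'^{(i-1)}}.
\end{equation*}
Second, pass the outer secret supremum through the product via the elementary inequality
\begin{equation*}
\sup_{\secretrv}\prod_{i=1}^{m} a_{i}(\secretrv)\;\leq\;\prod_{i=1}^{m}\sup_{\secretrv} a_{i}(\secretrv),
\end{equation*}
which is the only non-tight step and is what makes the final bound additive rather than tighter. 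Third, telescope the sum $\sum_{\theta'^{(1)},\ldots,\theta'^{(m)}}\prod_{i=1}^{m}\sup_{\secretrv}\mathbb{P}_i$ by collapsing inner sums from $i=m$ inward: for every fixed history $\theta'^{(1)},\ldots,\theta'^{(i-1)}$ and every fixed assignment $\secretrv\mapsto\theta_\secretrv$ from \cref{prop:sml_calculation}, the innermost sum
\begin{equation*}
\sum_{\theta'^{(i)}}\sup_{\secretrv}\mathbb{P}_{\Theta'^{(i)}\mid\Theta,\Theta'^{(1)},\ldots,\Theta'^{(i-1)}}\bra{\theta'^{(i)}\mid\theta_\secretrv,\theta'^{(1)},\ldots,\theta'^{(i-1)}}
\end{equation*}
is bounded by $\exp\bra{\privacynotation_{\mech_i,\secretnotation}}$ by re-applying \cref{prop:sml_calculation} to the non-adaptive slice of $\mech_i$ induced by conditioning on the history. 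Iterating $m$ times yields $\exp\bra{\privacynotation_{\boldsymbol{\mech},\secretnotation}}\leq\prod_{i=1}^{m}\exp\bra{\privacynotation_{\mech_i,\secretnotation}}$; taking logs gives the additive claim.

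The main obstacle is confirming that the per-step hypothesis delivers a bound uniform in the adaptive history: the assumption "$\mech_i$ satisfies an SML guarantee of $\privacynotation_{\mech_i,\secretnotation}$" must be interpreted as a guarantee on the non-adaptive mechanism $\Theta\mapsto\Theta'^{(i)}$ for every realization of $\theta'^{(1)},\ldots,\theta'^{(i-1)}$, and this bound must be compatible with the \emph{single} assignment $\theta_\bullet$ fixed at the outset. Once this reading is made explicit, the inductive collapse is routine. A subtler conceptual point worth recording is that the looseness in the sup-of-product step is precisely why the composition bound is additive: the adversary must commit to a single secret to guess across all rounds, but after the relaxation each round is allowed its own worst-case secret, decoupling the factors and enabling the telescoping.
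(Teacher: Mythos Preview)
Your approach is essentially the same as the paper's: both apply \cref{prop:sml_calculation} to the composed mechanism, factor the joint conditional via the chain rule, push $\sup_{\secretrv}$ through the product, and then bound each factor by the per-step SML guarantee. The only minor difference is that you telescope the sum from the inside out (bounding the innermost sum over $\theta'^{(i)}$ uniformly over histories by $\exp\bra{\privacynotation_{\mech_i,\secretnotation}}$), whereas the paper uses the looser inequality $\sum_{\theta'^{(1)},\ldots,\theta'^{(m)}}\prod_i a_i \leq \prod_i \sum_{\theta'^{(1)},\ldots,\theta'^{(i)}} a_i$ before invoking the per-step bound; your telescoping is slightly cleaner, and your explicit flagging of the uniform-in-history interpretation of $\privacynotation_{\mech_i,\secretnotation}$ is exactly the reading the paper also implicitly uses.
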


\noindent(Proof in \cref{section:proof_composition}.) \cref{thm:composition} shows that \privname{} degrades additively when one or more possibly adaptive mechanisms are applied multiple times sequentially to the original dataset.
This additive result is similar in form to analogous composition results for other privacy metrics, including pointwise maximal leakage \cite[Thm. 13]{saeidian2022pointwise} and differential privacy \cite[Thm. III.1]{dwork2010boosting}. In particular, we note that the result for pointwise maximal leakage does not imply ours, nor vice versa. 
Pointwise maximal leakage requires knowledge of the prior distribution of the input data and assumes the secret function is unknown; the composition property for the worst-case secret function or prior does not imply the composition property for an arbitrary secret or prior.

\begin{restatable}[Post-Processing]{theorem}{post}
\label{thm:post-processing}
Let $\mech$ be a \datamechanism{} whose SML is $\sml$. Let $\widetilde{\mech}$ be an arbitrary (possibly randomized) mechanism defined by $\mathbb{P}_{\Theta''|\Theta'}$. Then the SML of $\widetilde{\mech}\circ\mech$ is $\privacynotation_{\widetilde{\mech}\circ\mech,\secretnotation}\leq\sml$.
\end{restatable}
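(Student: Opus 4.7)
The plan is to invoke Proposition~\ref{prop:sml_calculation} on the composed mechanism $\widetilde{\mech}\circ\mech$ and reduce the resulting closed-form expression to that of $\mech$ alone via a data-processing-style manipulation. Concretely, Proposition~\ref{prop:sml_calculation} applied to $\widetilde{\mech}\circ\mech$ gives
\[
\privacynotation_{\widetilde{\mech}\circ\mech,\secretnotation} \;=\; \sup_{\mathbb{P}_{\Theta|G}\in\brc{0,1}}\log\sum_{\theta''\in\postprocessparamset}\sup_{\secretrv\in\secretvalueset}\mathbb{P}_{\Theta''|\Theta}\bra{\theta''|\theta_\secretrv},
\]
so it suffices to upper bound the inner double sum by $\sum_{\theta'\in\releaseparamset}\sup_{\secretrv\in\secretvalueset}\mathbb{P}_{\Theta'|\Theta}\bra{\theta'|\theta_\secretrv}$ uniformly in $\mathbb{P}_{\Theta|G}$.

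To carry this out, I would first expand the composed kernel by marginalizing over $\Theta'$. Since $\widetilde{\mech}$ is defined by $\mathbb{P}_{\Theta''|\Theta'}$ and is conditionally independent of $\Theta$ given $\Theta'$, we have $\mathbb{P}_{\Theta''|\Theta}\bra{\theta''|\theta_\secretrv}=\sum_{\theta'\in\releaseparamset}\mathbb{P}_{\Theta''|\Theta'}\bra{\theta''|\theta'}\mathbb{P}_{\Theta'|\Theta}\bra{\theta'|\theta_\secretrv}$. The single step that needs care is then moving $\sup_\secretrv$ inside the $\theta'$-sum: because the post-processing weights $\mathbb{P}_{\Theta''|\Theta'}\bra{\theta''|\theta'}$ are non-negative and independent of $\secretrv$, for each fixed $\theta''$ one has $\sup_\secretrv\sum_{\theta'}\mathbb{P}_{\Theta''|\Theta'}\bra{\theta''|\theta'}\mathbb{P}_{\Theta'|\Theta}\bra{\theta'|\theta_\secretrv}\leq \sum_{\theta'}\mathbb{P}_{\Theta''|\Theta'}\bra{\theta''|\theta'}\sup_\secretrv\mathbb{P}_{\Theta'|\Theta}\bra{\theta'|\theta_\secretrv}$.

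Summing this bound over $\theta''$, exchanging the order of summation (Fubini on finite sums), and invoking $\sum_{\theta''}\mathbb{P}_{\Theta''|\Theta'}\bra{\theta''|\theta'}=1$ collapses the right-hand side to precisely $\sum_{\theta'\in\releaseparamset}\sup_{\secretrv\in\secretvalueset}\mathbb{P}_{\Theta'|\Theta}\bra{\theta'|\theta_\secretrv}$. Taking logarithms, then the outer supremum over $\mathbb{P}_{\Theta|G}\in\brc{0,1}$, and invoking Proposition~\ref{prop:sml_calculation} once more — this time on $\mech$ itself — yields $\privacynotation_{\widetilde{\mech}\circ\mech,\secretnotation}\leq \sml$.

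There is no substantive obstacle here; the real work is absorbed by Proposition~\ref{prop:sml_calculation}, which turns the definition of SML into a sum over outputs that is manifestly compatible with marginalization of the post-processing kernel, reducing the theorem to a one-line sup-sum inequality and the stochasticity of $\mathbb{P}_{\Theta''|\Theta'}$. An operational alternative would be to argue directly from \cref{eq:privmetric} that any adversary strategy $\mathbb{P}_{\hat G|\Theta''}$ can be emulated by the induced strategy $\mathbb{P}_{\hat G|\Theta'}\bra{\hat g|\theta'}:=\sum_{\theta''}\mathbb{P}_{\hat G|\Theta''}\bra{\hat g|\theta''}\mathbb{P}_{\Theta''|\Theta'}\bra{\theta''|\theta'}$ acting on the output of $\mech$, giving the same guessing probability and hence the same bound; but the route through Proposition~\ref{prop:sml_calculation} is shorter and keeps the prior-dependent denominator out of the manipulation entirely.
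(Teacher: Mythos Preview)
Your proposal is correct and matches the paper's proof essentially line for line: both invoke \cref{prop:sml_calculation} on $\widetilde{\mech}\circ\mech$, expand $\mathbb{P}_{\Theta''|\Theta}$ via marginalization over $\Theta'$, push $\sup_\secretrv$ inside the $\theta'$-sum, swap the order of summation, collapse the $\theta''$-sum via stochasticity of $\mathbb{P}_{\Theta''|\Theta'}$, and then apply \cref{prop:sml_calculation} again to $\mech$. The operational alternative you sketch is a fine remark but is not used in the paper either.
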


\noindent(Proof in \cref{section:proof_post-processing}.) \cref{thm:post-processing} shows that applying an arbitrary (possibly randomized) mechanism to the output of a mechanism that satisfies \privname  will not degrade \privname{}.
\section{Mechanism Design for Tabular Data}
\label{sec:mech_analysis}

We next study two natural mechanisms for releasing \emph{tabular} data {with attributes that take values from a finite set} under SML: randomized response and the quantization mechanism.
{Both have been widely used in practice \cite{lin2023summary,farokhi2019development,chaudhuri2020randomized,kairouz2016extremal}.}
Our goal is to understand (a) if each of these satisfies a SML guarantee, and (b) if so, which one has a better privacy-utility tradeoff?

\subsection{Setup}
\label{sec:tabular_setup}
The data holder holds a tabular dataset $\dataset$ with $\samplenumori$ rows and $c$ columns, i.e., $\samplenumori$ samples with $c$ attributes for each sample. Let $\combinationset$ be the set of combinations of attributes for samples existing in the original dataset $\dataset$, where $\categorynum \triangleq \abs{\combinationset}$. 
For example, suppose our dataset has binary columns ``Above age 18?'' and ``Registered to vote in the U.S?'' 
and only includes samples with attribute values ``(Yes, Yes)'' and ``(Yes, No)''; then $d=2$ {and the number of columns $c=2$.}%

We assume there is some unknown true  feasible set of attribute combinations $\combinationsetall$, where $\categoryactual \triangleq \brd{\combinationsetall}$ %
and $\combinationset\subseteq\combinationsetall$. 
\cameraready{In our example, $\categoryactual=3$ because in the U.S., voters must be at least 18 years of age, {so the only feasible combinations of attributes are (Above 18, Registered), (Above 18, Not registered), (Under 18, Not registered)}.}
We use $\combinationsetestimate$ to denote the data holder's estimate of the true support $\combinationsetall$ (e.g., this could be obtained from public data),
where $\categoryestimate \triangleq \brd{\combinationsetestimate}$.
\cameraready{In our voting example, $\combinationsetall$ can be accurately estimated based on public information, i.e., $\combinationsetestimate=\combinationsetall$. {We start by analyzing this special case in \cref{sec:tradeoffs}, but in general, they need not be the same set, and we subsequently analyze robustness to misspecification of the feasible attribute set in \cref{sec:robustness}}.} %

The data release mechanism is designed such that samples with attribute combinations in $\combinationset\cup\combinationsetestimate$ may exist in the released dataset $\dataset'$. %
Let $\combinationsetpre$ and $\combinationsetnoise$ be the feasible and infeasible attribute combinations within the estimated set $\combinationsetestimate$ respectively, %
i.e., $\combinationsetpre=\combinationsetestimate\cap\combinationsetall$ and $\combinationsetnoise=\combinationsetestimate\setminus\combinationsetall$. Let $\categoryleast=\brd{\combinationsetpre}$ and $\categorynoise=\brd{\combinationsetnoise}$, so we have $\categoryestimate=\categoryleast+\categorynoise$.
We illustrate the relation between $\combinationset$, $\combinationsetall$, $\combinationsetestimate$, $\combinationsetpre$ and $\combinationsetnoise$ in \cref{fig:venn}.
\begin{figure}[htbp]
\centering
\includegraphics[width=0.7\linewidth]{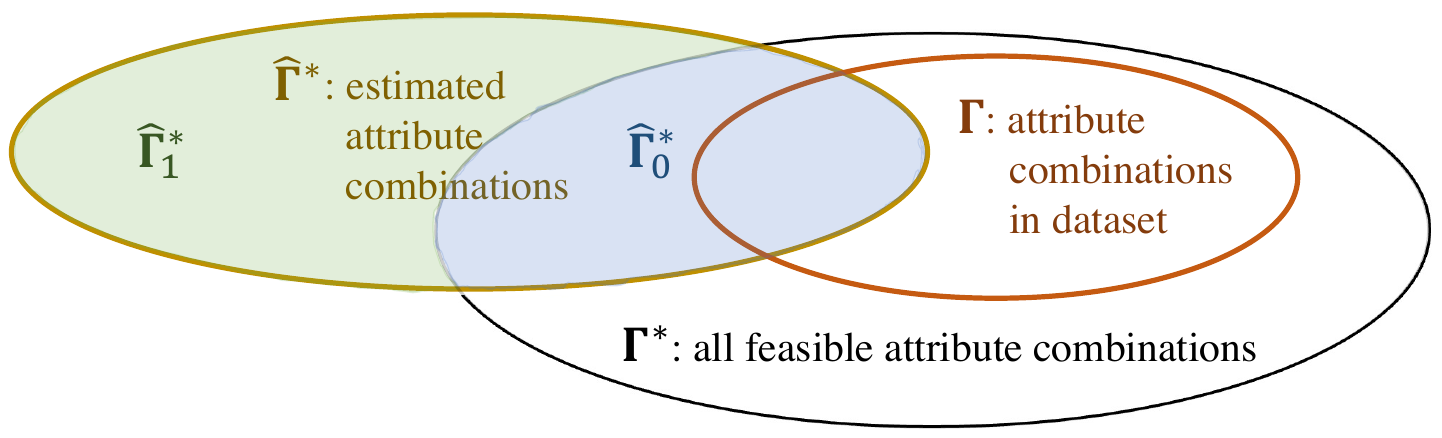}
\caption{Relation between $\textcolor{red}{\combinationset}$, $\combinationsetall$, $\textcolor{brown}{\combinationsetestimate}$, $\textcolor{blue}{\combinationsetpre}$ and $\textcolor{teal}{\combinationsetnoise}$. {$\combinationsetall$  is the set depicted in black, containing all feasible attribute combinations. $\textcolor{red}{\combinationset}$ in red is the set of attribute combinations existing in the dataset $\dataset$, and $\textcolor{red}{\combinationset}\subseteq \combinationsetall$. $\textcolor{brown}{\combinationsetestimate}$ in dark yellow is the data holder's estimate of $\combinationsetall$;  it may contain both feasible attribute combinations, as shown in the blue sub-region with notation $\textcolor{blue}{\combinationsetpre}$, and infeasible attribute combinations, as shown in the green sub-region with notation $\textcolor{teal}{\combinationsetnoise}$.%
}
}
\label{fig:venn}
\end{figure}

{In the remainder of this paper, we model a tabular dataset as a one-dimensional histogram (which has a one-to-one correspondence with an underlying parameter vector $\theta$).}
In particular, suppose the released dataset $\dataset'$ has the same size as the original dataset $\dataset$. $\dataset$ and $\dataset'$ can be represented by \camerareadydelete{$X_\theta$ and $Y_{\theta'}$ with}\cameraready{categorical distribution parameters $\theta$ and $\theta'$ respectively, {where each category corresponds to  {the fraction of records (out of $n$) with a given combination of attribute values}, e.g., (Yes, Yes) in our voting example.
{So for example, suppose our input dataset contained four records, as listed in \cref{fig:tabular} below. Then we would generate from this dataset a histogram with $\brd{\combinationset\cup\combinationsetestimate}$ bins, where each bin contains the fraction of records with a specific combination of attribute values.}
{This representation of tabular, categorical data captures correlations across columns by modeling the empirical joint distribution.} 
Finally, let the data holder's precision level of the categorical distribution be $\samplenum$, i.e., the probability mass of any category is a value within set $\brc{0, 1/\samplenum, 2/\samplenum, \cdots, 1}$. We assume $\samplenum\leq 
\samplenumori$, where $\samplenumori$ is the number of samples in the dataset.}}

\begin{figure}[htbp]
\centering
\includegraphics[width=0.7\linewidth]{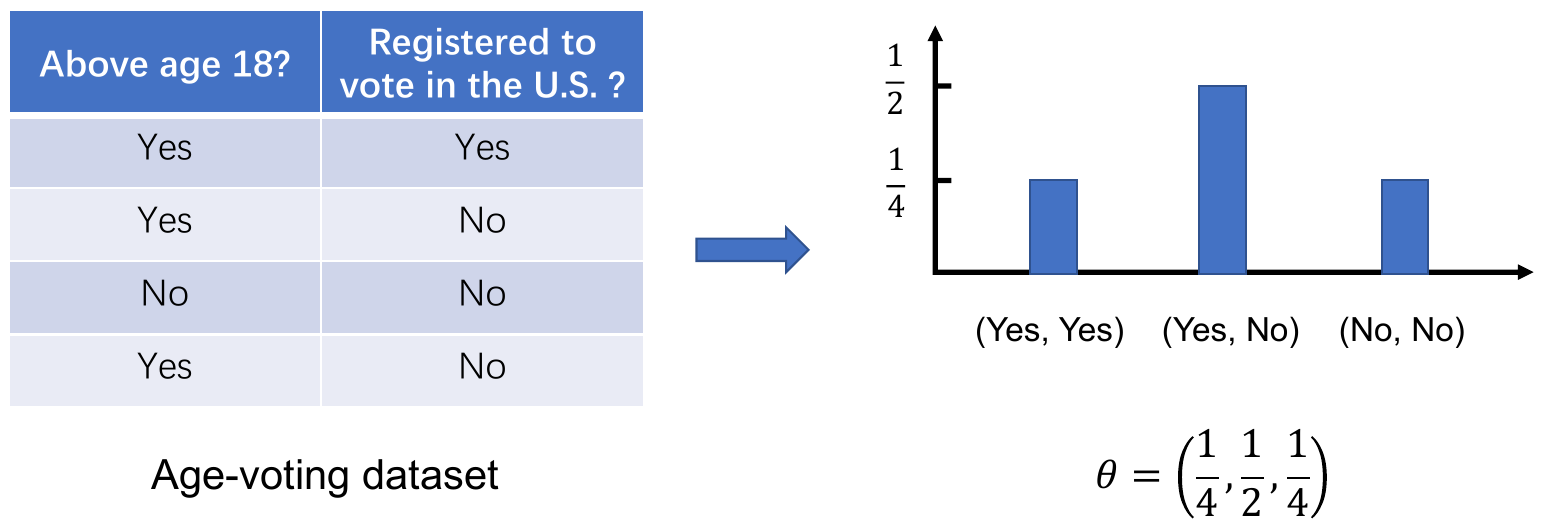}
\caption{For a tabular dataset, we can generate a categorical distribution with each category corresponding to the fraction of records with a given combination of attribute values. $\theta$ is the parameter of the constructed categorical distribution.}
\label{fig:tabular}
\end{figure}

Suppose the set of secret values admits a total ordering, and without loss of generality, suppose that 
$g_1<g_2<\cdots<g_{\secretnum}$. 
For convenience, we use $\secretsetofabbr{i}$ to represent $\secretsetof{i}$, the parameter set with secret value $g_i$, $\forall i\in \brb{\secretnum}$. 

\subsection{Mechanism definition}
\paragraph{Randomized Response (\rr{})} 
\rr{} is a widely-used mechanism in the DP literature for discrete distributions \cite{kairouz2016extremal}. 
We first consider a \rr{} mechanism $\mechrr$ that outputs the original distribution parameter with some  probability, and otherwise releases a different  distribution's parameters uniformly at random {over the parameter set}. 
Specifically, $\mechrr$ can be written as follows, {$\forall \theta \in \paramset{}$}:
\begin{align*} 
\probof{\mechrr\bra{\theta}=\theta'} =
\begin{cases}
\frac{e^{\epsilon}}{\brd{\releaseparamset}+e^{\epsilon}-1}, & \theta' = \theta,\\
\frac{1}{\brd{\releaseparamset}+e^{\epsilon}-1}, & \theta'\in \releaseparamset\setminus \brc{\theta}.
\end{cases}
\end{align*}

\paragraph{Quantization Mechanism (\qm{})}
\qm{} has also been adopted in privacy-preserving mechanism design \cite{lin2023summary,issa2019operational,farokhi2019development}.  This mechanism partitions the secret values into subsets of size $\intervallen$ and uniformly releases a distribution parameter with the secret as the median index of the corresponding bin. 
Specifically, 
$\mechquan$ can be written as  $\forall k \in \brc{0, 1, \cdots, \left\lceil \frac{\secretnum}{\intervallen}\right\rceil-1}, j\in \brb{\intervallen}, \parameterdistribution \in \secretsetofabbr{k\intervallen+j}:$
\begin{align*}
\mechquan\bra{\theta} \sim \text{Unif}\bra{\releasesecretsetofabbr{\midpoint{k}}},
\end{align*}
\normalsize
where $\midpoint{k} = \floor{\bra{k+\frac{1}{2}}\intervallen}+1$, and $\releasesecretsetofabbr{\midpoint{k}}$ represents the released parameter set with secret value $\secretrv_{\midpoint{k}}$.
{More generally, if there is not an ordering of the secret, QM randomly at uniform releases a distribution parameter whose secret is contained in the corresponding bin. In the following analysis, we assume that the secret is the fraction of a category (i.e., one of the PMF values).}

\subsection{Privacy-Utility Tradeoffs}
\label{sec:tradeoffs}
When $\combinationsetestimate=\combinationsetall$, i.e., the data holder correctly estimates the feasible attribute combination set, we analyze and compare the privacy-distortion tradeoffs between \rr{} and \qm{} as follows. Note that in our analysis of \qm{}, we constrain the secret function $\secretnotation$ to be the PMF value for a
specific category (e.g., {``the fraction of people above age 18 registered to vote in the U.S.''} %
) {to simplify the analysis}.

\begin{restatable}{theorem}{privacydistortionrr}
\label{prop:mech_tradeoff}
\emph{(Privacy and Distortion of Randomized Response)} %
\cameraready{For any secret function $\secretnotation$,} the SML and distortion of \rr{} are:
\begin{align*}
\privacyrr=\log\frac{1+\secretnum\rrratio}{1+\rrratio}, \quad \distortionrr = \frac{\categoryleast-1}{\categoryleast\bra{1+\rrratio}}.
\end{align*}
where $\rrratio \triangleq \frac{e^{\epsilon}-1}{\brd{\releaseparamset}}=\frac{e^{\epsilon}-1}{\binom{\samplenum+\categoryleast-1}{\categoryleast-1}}$.

\noindent
\emph{(Privacy and Distortion of Quantization Mechanism)}  %
The privacy of \qm{} is $\privacyquan = \log\ceil{\frac{\secretnum}{\intervallen}}$. When secret is the fraction of a category, the distortion of \qm{} is 
\begin{align*}
\distortionquan = \frac{1}{2}+\frac{\categoryleast\floor{\frac{\intervallen}{2}}-\samplenum}{2\samplenum\bra{\categoryleast-1}}.
\end{align*}

\noindent
\emph{(Mechanism Comparison)}
When secret is the fraction of a category, for any non-trivial privacy budget $T<\log\secretnum$, if $\privacyquan=\privacyrr\leq T$, we have 
$
\lim_{\samplenum\rightarrow\infty}\frac{\distortionrr}{\distortionquan} \geq 1.
$
\end{restatable}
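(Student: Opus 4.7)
My plan is to establish the three sub-claims in sequence, using \cref{prop:sml_calculation} (the SML characterization) for the privacy quantities and explicit combinatorial calculations for the distortions.

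For Randomized Response, the SML follows directly from \cref{prop:sml_calculation}. For any prior $\mathbb{P}_{\Theta|G} \in \{0,1\}$ choosing representatives $\theta_g$, the factor $P_{\Theta'|\Theta}(\theta'|\theta_g)$ equals $e^\epsilon/(|\releaseparamset|+e^\epsilon-1)$ when $\theta' = \theta_g$ and $1/(|\releaseparamset|+e^\epsilon-1)$ otherwise, so any injective choice $g \mapsto \theta_g$ attains the supremum, giving $s$ high-value and $|\releaseparamset|-s$ low-value terms in the sum. Substituting $r = (e^\epsilon-1)/|\releaseparamset|$ yields the stated $\privacyrr$. For the distortion, \cref{lemma:distortion} reduces the task to $\sup_\theta \frac{1}{|\releaseparamset|+e^\epsilon-1}\sum_{\theta'\in\releaseparamset} D_{TV}(\nu_\theta,\nu_{\theta'})$. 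The key subroutine is a symmetry lemma: the average TV distance $\bar{D}(\theta) = |\releaseparamset|^{-1}\sum_{\theta'} D_{TV}(\nu_\theta,\nu_{\theta'})$ is convex in $\theta$ (since $D_{TV}$ is a sum of absolute values, hence convex in its first argument), so its supremum over the simplex is attained at a vertex. A direct combinatorial calculation at a corner such as $\theta = (1,0,\ldots,0)$ yields $\bar{D}(\theta) = (\categoryleast-1)/\categoryleast$ exactly for every $\samplenum$, giving the stated $\distortionrr$.

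For the Quantization Mechanism, the SML is immediate from \cref{prop:sml_calculation}: $\mechquan(\theta)$ is uniform on $\releasesecretsetofabbr{R(k)}$ whenever $\theta$ is in bin $k$, so for each output $\theta' \in \releasesecretsetofabbr{R(k)}$ any $\theta_g$ with $g$ in bin $k$ attains $P(\theta'|\theta_g) = 1/|\releasesecretsetofabbr{R(k)}|$; summing over $\theta'$ in one bin contributes exactly $1$, so the total over all $\ceil{s/I}$ bins gives $\privacyquan = \log\ceil{s/I}$. For the distortion, with the secret defined as a single PMF coordinate, inputs in bin $k$ have that coordinate in one of $I$ consecutive values bracketing $R(k)/\samplenum$, while outputs have it exactly equal to $R(k)/\samplenum$. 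The worst-case input places this coordinate at the bin edge and shifts the residual mass among the remaining $\categoryleast-1$ coordinates to maximize the expected $L^1$ deviation from a uniform draw over $\releasesecretsetofabbr{R(k)}$; identifying this configuration and averaging yields the stated formula.

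For the Mechanism Comparison, fix $T < \log s$ and impose $\privacyrr = \privacyquan$. Letting $c = \ceil{s/I}$, the identity $(1+sr)/(1+r) = c$ yields $r = (c-1)/(s-c)$, which is well-defined since $T < \log s$ forces $I \geq 2$ and thus $c < s$. Substituting back, $\distortionrr = (\categoryleast-1)(s-c)/(\categoryleast(s-1))$, independent of $\samplenum$, whereas $\distortionquan \to (\categoryleast-2)/(2(\categoryleast-1))$ as $\samplenum \to \infty$. The ratio inequality $\distortionrr \geq \distortionquan$ reduces to $2(\categoryleast-1)^2(s-c) \geq \categoryleast(\categoryleast-2)(s-1)$, which I verify as follows: $I \geq 2$ gives $c \leq \ceil{s/2}$, hence $2(s-c) \geq s-1$; and $(\categoryleast-1)^2 > \categoryleast(\categoryleast-2)$ for all $\categoryleast \geq 2$ (the gap equals $1$), so the product inequality holds. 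The case $\categoryleast = 2$ is trivial since the limiting $\distortionquan$ vanishes. The principal obstacle is the distortion analysis: for RR, one must certify convexity-based attainment at a corner and compute the corner value via a combinatorial identity over compositions of $\samplenum$; for QM, the harder step is locating the worst-case input within a bin and tracking how the remaining $\categoryleast-1$ coordinates contribute to the averaged TV distance, with the $\categoryleast\floor{I/2}-\samplenum$ term in the formula hinting at a parity-sensitive case analysis.
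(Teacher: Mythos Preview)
Your privacy computations for both mechanisms are correct and match the paper's. For the RR distortion, your convexity argument (TV distance is convex in its first argument, so the average over $\theta'$ is convex in $\theta$, hence maximized at a simplex vertex) is a clean alternative to the paper's mass-shifting contradiction; both land at a vertex $\theta^*=e_i$ and the same combinatorial evaluation $\bar D(e_i)=(\categoryleast-1)/\categoryleast$. Your QM distortion sketch is on the right track but vague; the paper pins down the worst input explicitly (it lies in the first bin, at the leftmost secret value, with all residual mass on a single other coordinate) and then carries out the sum.

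The mechanism comparison, however, has a genuine gap. Since the secret is the fraction of a category, $s=\samplenum+1$, and $c=\ceil{s/\intervallen}$ with $\privacyquan=\log c\le T$ forces $\intervallen\ge s/c$. Hence whenever $c$ stays bounded (for instance, whenever $T$ is held fixed), $\intervallen$ grows linearly in $\samplenum$ and $\floor{\intervallen/2}/\samplenum$ does \emph{not} tend to $0$. Your asserted limit $\distortionquan\to(\categoryleast-2)/\bigl(2(\categoryleast-1)\bigr)$ is therefore false in this regime: from the exact formula, $\distortionquan=\frac{\categoryleast-2}{2(\categoryleast-1)}+\frac{\categoryleast\,\floor{\intervallen/2}}{2\samplenum(\categoryleast-1)}$, and the second term has a strictly positive limit. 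For example, at $c=1$ (i.e.\ $\intervallen=s$) one gets $\distortionquan\to(3\categoryleast-4)/\bigl(4(\categoryleast-1)\bigr)$. Consequently the inequality $2(\categoryleast-1)^2(s-c)\ge\categoryleast(\categoryleast-2)(s-1)$ you derive is not the correct target: it comes from comparing $\distortionrr$ to an \emph{underestimate} of the limiting $\distortionquan$, so verifying it does not establish $\lim\distortionrr/\distortionquan\ge1$.

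The paper sidesteps this by never passing to a separate limit for $\distortionquan$. It lower-bounds the exact ratio $\distortionrr/\distortionquan$ at each finite $\samplenum$ (using $\floor{\intervallen/2}\le\intervallen/2$ in the denominator), obtains
\[
\frac{\distortionrr}{\distortionquan}\;\ge\;\frac{4(\categoryleast-1)^2(\intervallen\samplenum-\samplenum-1)}{\bigl[\categoryleast\intervallen+2\samplenum(\categoryleast-2)\bigr]\intervallen\categoryleast},
\]
and then splits into the two cases $\lim_{\samplenum\to\infty}\intervallen=\infty$ (where $\intervallen\le s=\samplenum+1$ gives the bound $4(\categoryleast-1)^2/\bigl((3\categoryleast-4)\categoryleast\bigr)\ge1$) and $\lim_{\samplenum\to\infty}\intervallen<\infty$ (where $\intervallen\ge2$ gives $(\categoryleast-1)^2/\bigl(\categoryleast(\categoryleast-2)\bigr)>1$). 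You need an analogous case split, or an argument that tracks how $\intervallen$ scales with $\samplenum$.
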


\noindent(Proof in \cref{sec:proof_mech_tradeoff}.) From \cref{prop:mech_tradeoff}, we know that {when the precision level is high enough}, {for the same SML guarantee,} \qm{} \emph{performs at least as well as} \rr{} as long as SML does not achieve its upper bound $\log\secretnum$. Intuitively, this is because the output space of \rr{} covers the full support of $\Theta'$, while the output space of \qm{} is significantly reduced.

\subsubsection{Robustness to mis-specification of feasible attribute combinations}
\label{sec:robustness}
When \camerareadydelete{$\combinationsetpre\subsetneq \combinationsetall$}\cameraready{$\combinationsetestimate\not= \combinationsetall$}, i.e., the data holder only partially knows the feasible attribute combination set, we provide a robustness result for the privacy of the mechanisms. 

\begin{definition}[Robustness to support mismatch]
Consider a tabular dataset $\mathcal{D}$ with attribute combination set $\combinationset$.
For any mechanism $\mech$, let $\sml$ be the SML of $\mech$ if its released dataset contains samples with  attribute combinations in $\combinationsetestimate\cup\combinationset$, 
and  $\sml^*$ be the SML of $\mech$ if its released dataset only contains samples with attribute combinations in $\combinationsetall$. The mechanism $\mech$ is $\gamma$-robust if for any $\combinationset$,
$
\privacynotation_\mech - \privacynotation^*_\mech \leq \gamma\bra{\categoryactual-\categoryleast}.
$
\end{definition}

\begin{restatable}{proposition}{robustnesstabular}
\label{prop:robustness_tabular}
Consider a dataset $\mathcal{D}$ with $\samplenum$ samples. %
Regardless of the value of $\categorynoise$,
\rr{} is $\log 3$-robust if its hyperparameter $\epsilon$ satisfies $e^{\epsilon}-1 
\leq \binom{\samplenum+\categoryestimate-1}{\categoryestimate-1} / \secretnum$, and %
\qm{} with any interval length $\intervallen$ is $1$-robust when the secret is the fraction of a category.
\end{restatable}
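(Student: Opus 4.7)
The plan is to analyze RR and QM separately by extending the SML formulas from \cref{prop:mech_tradeoff} to the case where the output space reflects the enlarged support $\combinationsetestimate \cup \combinationset$ rather than $\combinationsetall$. Throughout, I decompose the enlargement into two parts: infeasible combinations in $\combinationsetnoise$, which only inflate $\brd{\releaseparamset}$ without introducing new secret values, and feasible-but-unanticipated combinations in $\combinationsetall \setminus \combinationsetestimate$, whose count is bounded by $\categoryactual - \categoryleast$.

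For RR, I first generalize the closed-form to $\sml = \log\frac{1+\secretnum\rho}{1+\rho}$, where $\rho = (e^\epsilon - 1)/\brd{\releaseparamset}$ and $\brd{\releaseparamset}$ is the number of histograms over the enlarged support. Since $\combinationsetestimate \subseteq \combinationsetestimate \cup \combinationset$, the support has at least $\categoryestimate$ categories, hence $\brd{\releaseparamset} \geq \binom{\samplenum + \categoryestimate - 1}{\categoryestimate - 1}$. Combining with the hypothesis $e^\epsilon - 1 \leq \binom{\samplenum + \categoryestimate - 1}{\categoryestimate - 1}/\secretnum$ yields $\rho \leq 1/\secretnum$. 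Since $\log\frac{1+\secretnum\rho}{1+\rho}$ is monotonically increasing in $\rho$, this produces $\sml \leq \log\frac{2\secretnum}{\secretnum+1} < \log 2$. A case split closes the argument. If $\categoryactual = \categoryleast$ (so $\combinationsetpre = \combinationsetall$ and the only mismatch is the spurious $\combinationsetnoise$), the enlarged output space strictly contains the matched one, making $\rho$ weakly smaller than in the matched case and giving $\sml \leq \sml^*$, so robustness holds with slack. Otherwise $\categoryactual - \categoryleast \geq 1$, and since $\sml^* \geq 0$, $\sml - \sml^* \leq \sml < \log 2 < \log 3 \leq \log 3 \cdot (\categoryactual - \categoryleast)$.

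For the quantization mechanism with the secret being the fraction of a specific category, the formula $\privacyquan = \log\lceil \secretnum / \intervallen \rceil$ depends only on $\secretnum$ and $\intervallen$. Since any fraction computed over $\samplenum$ samples lies in the fixed grid $\brc{0, 1/\samplenum, \ldots, 1}$, we have $\secretnum = \samplenum + 1$ regardless of the underlying support, so $\sml = \sml^*$ and QM is in fact $0$-robust, which immediately implies $1$-robust.

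The main obstacle is verifying that the SML-maximizing prior characterization from \cref{prop:sml_calculation} that underlies the derivation of $\privacyrr$ in \cref{prop:mech_tradeoff} still applies in the mismatched setting, i.e., that adding infeasible categories to the output does not disrupt the optimal prior structure (the $\mathbb{P}_{\Theta|G} \in \brc{0,1}$ form). Once that structural step is established, all remaining steps reduce to monotonicity of $\log\frac{1+\secretnum\rho}{1+\rho}$ in $\rho$ and direct comparisons of support sizes.
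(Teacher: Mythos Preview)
There is a genuine gap: you implicitly assume the mechanism has a \emph{single} output space $\releaseparamset$ (histograms over one fixed $\combinationsetestimate\cup\combinationset$), but in the paper's setup the released support is $\combinationsetestimate\cup\support{\theta}$, which \emph{varies with the input}. For an input $\theta$ containing $\beta(\theta)=\brd{\support{\theta}\setminus\combinationsetestimate}$ combinations from $\combinationsetall\setminus\combinationsetestimate$, the RR normalizer is $\binom{\samplenum+\categoryestimate+\beta(\theta)-1}{\categoryestimate+\beta(\theta)-1}+e^\epsilon-1$, not a single constant. The SML supremum over priors in \cref{prop:sml_calculation} is therefore free to place the representatives $\theta_\secretrv$ on parameters with \emph{different} $\beta(\theta_\secretrv)$; doing so makes many outputs $\theta'$ with $\support{\theta'}\not\subseteq\combinationsetestimate$ reachable, and the sum $\sum_{\theta'}\sup_\secretrv\mathbb{P}_{\Theta'|\Theta}(\theta'|\theta_\secretrv)$ exceeds $(1+\secretnum\rho)/(1+\rho)$ for any single $\rho$. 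The paper's lower bound in \cref{prop:privacy_lowerbounds} shows $\privacyrr$ can be of order $(\categoryactual-\categoryleast)\log 2$, so your claim $\privacyrr<\log 2$ is false once $\categoryactual-\categoryleast\geq 2$.

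The same input-dependence breaks your QM argument. You are right that $\secretnum=\samplenum+1$ does not change, but that does not force $\privacyquan=\log\lceil\secretnum/\intervallen\rceil$. Within one bin $k$ the adversary can pick the $\intervallen$ representatives $\theta_\secretrv$ to have \emph{distinct} extra combinations from $\combinationsetall\setminus\combinationsetestimate$; each such $\theta_\secretrv$ then outputs uniformly over a \emph{different} set $\releasesecretsetofabbr{\midpoint{k}}(\theta_\secretrv)$, and the union of these sets is strictly larger than any one of them. Hence the inner sum over $\theta'\in\releaseparamsetall_{\midpoint{k}}$ exceeds $1$ per bin. The paper handles this by grouping outputs according to $\beta(\theta')$ and summing, which produces the multiplicative correction $\bigl(1+\samplenum/(\samplenum+\categoryleast+\categorynoise-2)\bigr)^{\categoryactual-\categoryleast}$ on top of $\lceil\secretnum/\intervallen\rceil$; taking logs gives exactly an additive $(\categoryactual-\categoryleast)\log(1+\cdots)<\categoryactual-\categoryleast$, which is the $1$-robustness bound. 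Your $0$-robustness conclusion is therefore not correct.
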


\noindent(Proof in \cref{section:proof_robustness_tabular}.) \cref{prop:robustness_tabular} indicates that \rr{} is robust to support mismatch %
when it satisfies certain privacy constraints, and \qm{} is robust under certain secret types.

We also analyze the upper and lower bounds on privacy and distortion for both mechanisms when $\combinationsetestimate\not= \combinationsetall$, as well as the tightness of the privacy bounds. {To improve the readability of the paper, the details are deferred to \cref{section:proof_bounds}.}
\section{Empirical Evaluation}
\label{sec:empirical}

In this section, we conduct empirical experiments on a real world dataset to analyze and compare the privacy-distortion tradeoffs of our proposed Randomized Response (\rr{}) and Quantization Mechanism (\qm{}), and evaluate the performance of Quantization Mechanism under downstream tasks. We first introduce the experimental settings. 

\subsection{Experimental Setting}
\paragraph{Dataset}
We conduct our empirical evaluation on the Census Income dataset \cite{misc_adult_2}, which collects information from 48,842 individuals ($\samplenumori = 48,842$) about their income, education level, age, gender, and more. We set the distribution precision level as $\samplenum=\samplenumori$. The dataset contains 22,381 unique attribute combinations ($\categorynum=22,381$), all of which we assume to be contained in the estimated attribute combination set $\combinationsetestimate$, i.e., $\combinationset \subseteq \combinationsetestimate$. As described in \cref{sec:tabular_setup}, the category distribution parameter $\theta$ can be constructed with each category corresponding to a certain attribute combination, e.g., (white, male, age 32, master's degree, etc), and the value being set as the fraction of samples with this attribute combination.

\paragraph{Baselines}
Besides the Quantization Mechanism and Randomized Response proposed, we also include and compare with the mechanism of Wu et al. \cite{wu2020optimal}, which satisfies a maximal leakage guarantee. For ease of reference, we call this mechanism \greedy{}. {It is known to exhibit a privacy-utility tradeoff with bounded sub-optimality under a class of utility functions that does not include distance measures like total variation distance. However, we do not know of other mechanisms that satisfy a maximal leakage guarantee and can be applied to our dataset.} 

In short, for a given distribution parameter set $\paramset$ and a released distribution parameter set $\releaseparamset$, let $\mathbf{A}$ be a subset of $\releaseparamset$, $u$ be the cost function (total variation distance in our experiment), $f$ be a mapping function that satisfies $f\bra{\theta} = \arg\min_{\theta'\in\mathbf{A}}u\bra{\theta,\theta'}$, and $\mathcal{U}_{\paramdistribution}\bra{\mathbf{A}}=\mathbb{E}_{\Theta}\brb{u(\theta, f\bra{\theta})}$ be the expected cost under the prior distribution $\paramdistribution$. \greedy{} works by initializing $\mathbf{A}=\emptyset$, and iteratively adding an element $\theta'\in\releaseparamset\setminus\mathbf{A}$ to $\mathbf{A}$, where $\theta'$ satisfies $\theta'=\arg\min_{\theta'\in\releaseparamset\setminus\mathbf{A}}\mathcal{U}_{\paramdistribution}\bra{\mathbf{A}\cup \theta'}$. MaxL terminates when $\mathbf{A}=\releaseparamset$ or $\min_{\theta'\in\releaseparamset\setminus\mathbf{A}}\mathcal{U}_{\paramdistribution}\bra{\mathbf{A}\cup \theta'}=\mathcal{U}_{\paramdistribution}\bra{\mathbf{A}}$, and design the data release mechanism $\mech_{\text{MaxL}}$ according to the mapping function $f$.

Since we assume the prior distribution $\paramdistribution$ is unknown in our setting, we replace the expected cost $\mathcal{U}_{\paramdistribution}\bra{\mathbf{A}}$ in MaxL by $\max_{\paramdistribution}\mathcal{U}_{\paramdistribution}\bra{\mathbf{A}}$. When the cost function is set as total variation distance, the performance of MaxL is sensitive to the choice of the released parameter set $\releaseparamset$. For example, if $\releaseparamset=\paramset$, we can easily check that MaxL will always release the original distribution parameter, i.e., $\mech_{\text{MaxL}}\bra{\theta} = \theta$, resulting in no privacy protection. To ensure a fair comparison, we design the released parameter set $\releaseparamset$ based on a similar idea to the Quantization Mechanism, that is, we partition the secret values into subsets of size $\intervallen$, and for each subset, we uniformly select a distribution parameter with the secret within the subset and add it to $\releaseparamset$. We vary the interval length $\intervallen$ to change the privacy guarantee of MaxL.

\paragraph{Evaluation Metrics}
Due to the above results showing that attribute privacy has a poor privacy-utility tradeoff under our setting, 
we use SML to measure the privacy performance in the remainder of this paper. To measure the utility of the data release mechanism, we adopt the worst-case expected total variation distance from \cref{eq:utility}. 
We quantify downstream utility by conducting a classification task on the downstream data; namely, we learn a classifier predicting whether an individual has high income or not based on a random forest trained on the released (perturbed) data and compute ROC curves.

\subsection{{Comparison to Attribute Privacy}}
\label{sec:inf-privacy}

\begin{figure}[t]
\centering
\begin{subfigure}{0.46\textwidth}
         \centering
\includegraphics[width=\linewidth]{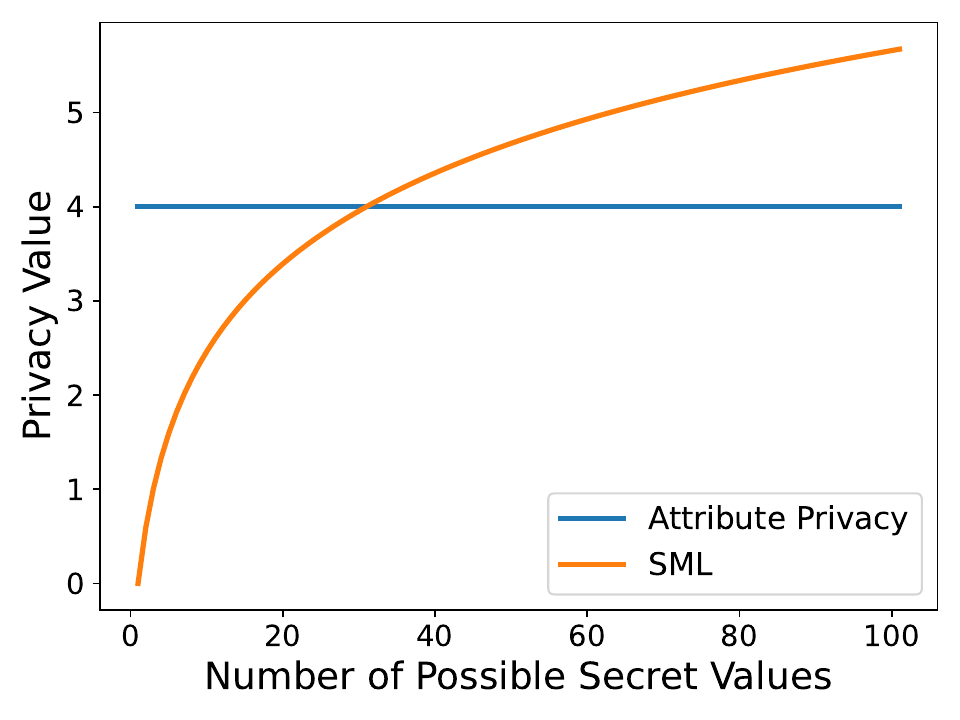}
\caption{Privacy value of \rr{} with $\epsilon=4$ v.s. Number of possible secret values}
\label{exp:secret_num}
\end{subfigure}
\begin{subfigure}{0.46\textwidth}
         \centering
        \includegraphics[width=1\linewidth]{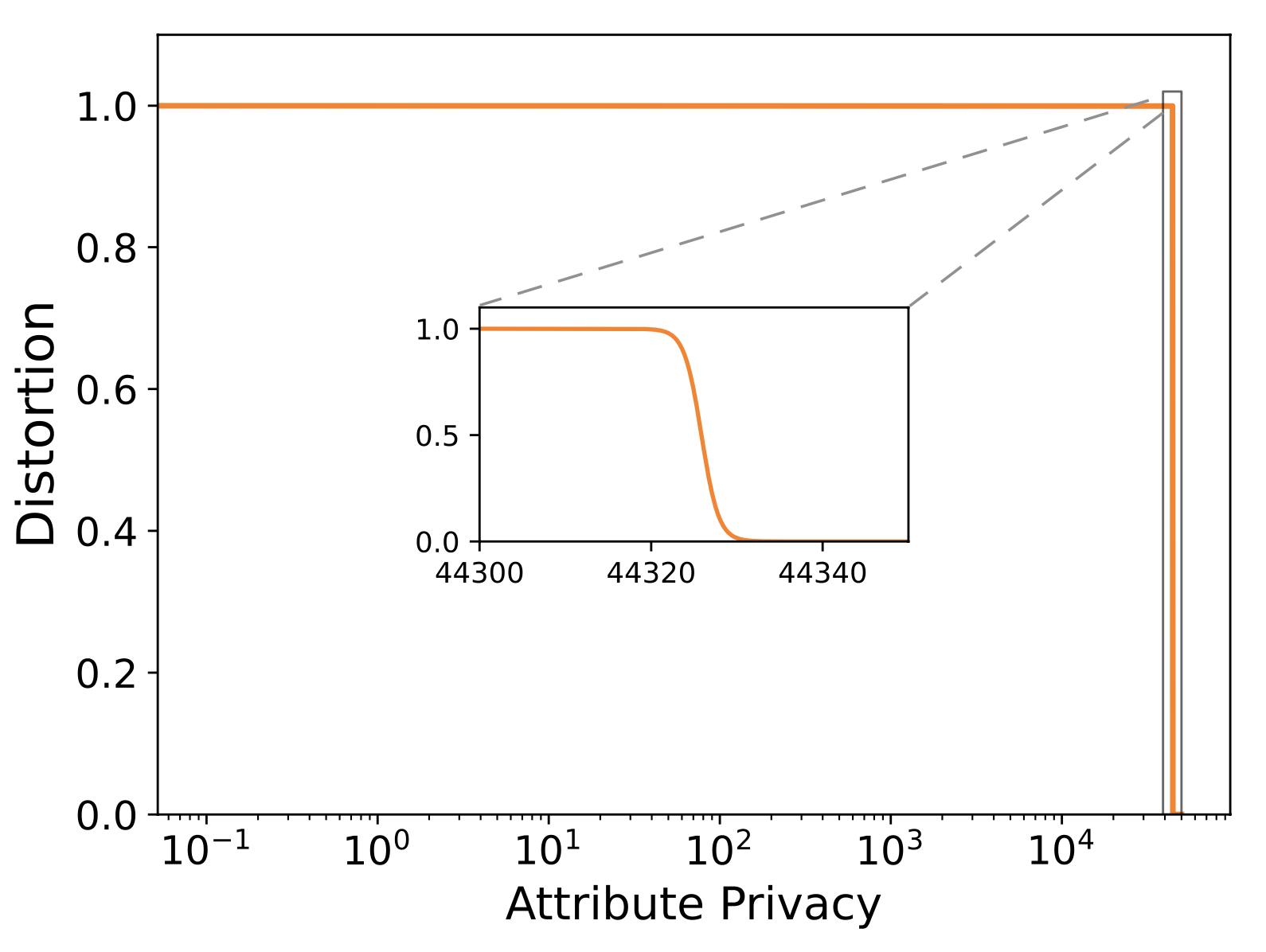}
         \caption{Distortion v.s. Attribute Privacy for \rr{} with $\epsilon=4$\\ \ }
         \label{exp:ap_distortion}
     \end{subfigure}
\caption{Under attribute privacy, the privacy level of \rr{} does not vary with the number of possible secrets, indicating its inability to capture the influence of the secret value space on the information leakage (\cref{exp:secret_num}); the distortion of RR always achieve its trivial upper bound 1 until the attribute privacy value is
larger than 44,320, indicating that attribute privacy is a much more conservative metric (\cref{exp:ap_distortion}). 
}
\label{fig:IP_SML}
\end{figure}

We first compare the privacy (and utility) of a mechanism (RR with hyperparameter $\epsilon=4$) under the privacy measures SML and a differential privacy (DP)-based measure; namely, we consider attribute privacy (AP) \cite{zhang2022attribute}.
We chose RR because it satisfies both AP and SML, whereas QM does not satisfy a finite AP guarantee for any interval size {less than the full range of secrets}. Additionally, the mechanism proposed in \cite{zhang2022attribute} applies only to continuous query results from the dataset, rather than to the entire dataset.
A fair comparison between SML and AP is challenging because these two measures have different definitions, and hence the numeric values of SML and attribute/inferential privacy cannot be directly compared. We thus run two experiments: 
(1) We fix the dataset and mechanism, and study how the values of SML and AP \emph{change} with the number of
possible secret values. (2) We evaluate the privacy-distortion tradeoff for the RR mechanism under the AP privacy measure.
First, in \cref{exp:secret_num}, we vary $s$, the number of possible secret values the original dataset can take, by quantizing the secret quantity to different granularities (i.e., the smaller the quantization interval, the greater the number of possible secret values).%
As the number of possible secret values increases, the value of SML increases, i.e., the privacy measure degrades. This aligns with our intuition: for a given dataset, under the same Randomized Response mechanism, a larger number of possible secret values results in more uncertainty regarding the true secret value prior to data release. However, under randomized response, the information the adversary obtains after data release remains the same---specifically, the probability of RR outputting the original data remains the same regardless of the number of secrets. Therefore, we intuitively expect that RR with more secrets should leak more information, and we observe that SML has this property. %
In contrast, attribute privacy only measures whether the mechanism outputs are different under different secrets, and only depends on the mechanism itself; hence, the attribute privacy level of \rr{} does not change with the number of possible secrets. Furthermore, we evaluate the privacy-distortion tradeoff of \rr{} under attribute privacy in \cref{exp:ap_distortion}, where the distortion metric is the worst-case total variation distance from \cref{eq:utility}. \cref{exp:ap_distortion} shows that the distortion of \rr{} achieves its trivial upper bound of $1$ until the attribute privacy value is larger than $44,320$, which is impractically large. 
This suggests that for our utility metric (which is a worst-case metric, and hence may be conservative) attribute privacy incurs a poor privacy-utility tradeoff for reasonable privacy parameters.

\subsection{Privacy-Distortion Tradeoffs}

\begin{wrapfigure}{r}{0.5\textwidth}
    \centering
        \includegraphics[width=\linewidth]{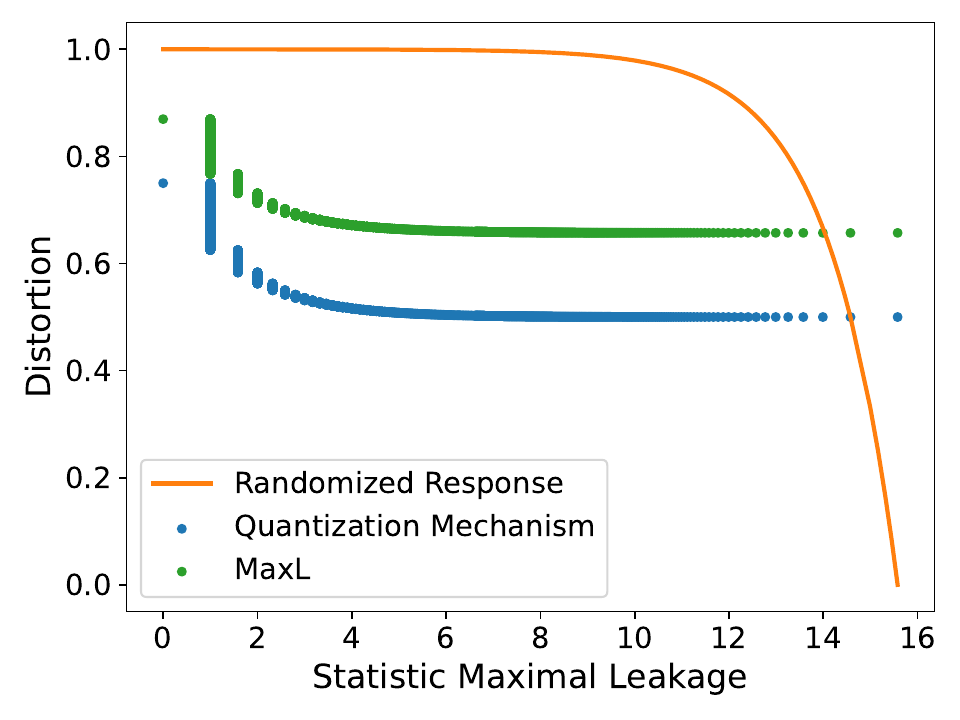}
\caption{Privacy-utility trade-offs of \rr{}, \qm{}, and \greedy{} when the secret is the fraction of an arbitrary category. %
}
\label{fig:trade-offs}
\end{wrapfigure}

In this section, we compare the privacy-distortion tradeoffs between \qm{}, \rr{}, and \greedy{} with the assumption that $\combinationset=\combinationsetall=\combinationsetestimate$, i.e., all feasible attribute combinations are contained in the dataset, and there is no estimate error for the attribute combinations. We defer the privacy-distortion tradeoff analysis under the case where $\combinationset\subseteq\combinationsetall\not=\combinationsetestimate$ in \cref{app:exp_more_res}.

We consider the secret as the fraction of an arbitrary category---{namely, (white, male, age 32, Master's degree)}---and compare the privacy-utility tradeoffs between \rr{}, \qm{}, and \greedy{} in \cref{fig:trade-offs}. To obtain different privacy guarantees, we vary the hyperparameter $\epsilon$ in \rr{} and vary the interval length $\intervallen$ in \qm{} and \greedy{}. The SML for this setting takes values in  $[0,16]$, where the upper bound arises from the limited size of the secret value space. 
From \cref{fig:trade-offs}, we observe that for the same privacy guarantee, the distortion of \rr{} is almost twice as large as that of \qm{} under most SML levels, indicating better performance of \qm{}, which is in line with the theoretical insight in \cref{prop:mech_tradeoff}. For MaxL, which can be regarded as a variant of \qm{} in our setting, it also has a lower distortion than \rr{} at most SML levels, while showing a worse privacy-distortion performance than \qm{}.
{This intuitively makes sense, because MaxL is designed to prevent leakage of \emph{any} secret, whereas QM is only protecting the specified secret.}

\subsection{Evaluation of the Quantization Mechanism under Downstream Tasks}

We next analyze the downstream utility of \qm{} %
by conducting a classification task to predict whether an individual has high income or not based on a random forest model trained on the released data. %
We set the secret as the difference between the proportion of white and non-white high-income people (>\$50k/yr) among their own race groups. The SML for this setting takes values in  $[0,4.09]$. 

\subsubsection{Perfectly-Calibrated Attribute Sets}
We first consider the case where $\combinationset=\combinationsetall=\combinationsetestimate$, and vary the quantization set size $\intervallen$ %
to achieve different levels of SML from $0$ to $2$. %
For each SML level, we conduct the experiment $20$ times with independent mechanism outputs, show the averaged ROC curve of the random forests trained on corresponding released datasets. We then compare this with the performance of the random forest trained on the original dataset in \cref{fig:roc_quan}.
We observe that as SML increases (weaker privacy), AUC (area under the ROC curve) increases, indicating the improvement of the downstream task utility. When SML is as little as $2$, the AUC is close to its upper bound ($0.89$ in raw dataset). When SML is $0$ (perfect privacy), the utility drops mildly ($0.14$ AUC) compared to the utility of the original dataset. Note that a perfect privacy-preserving mechanism can still achieve reasonable utility on this task since we only aim to protect a secret of the dataset, rather than the whole data. 

\begin{figure}[htbp]
\centering
\includegraphics[width=0.5\linewidth]{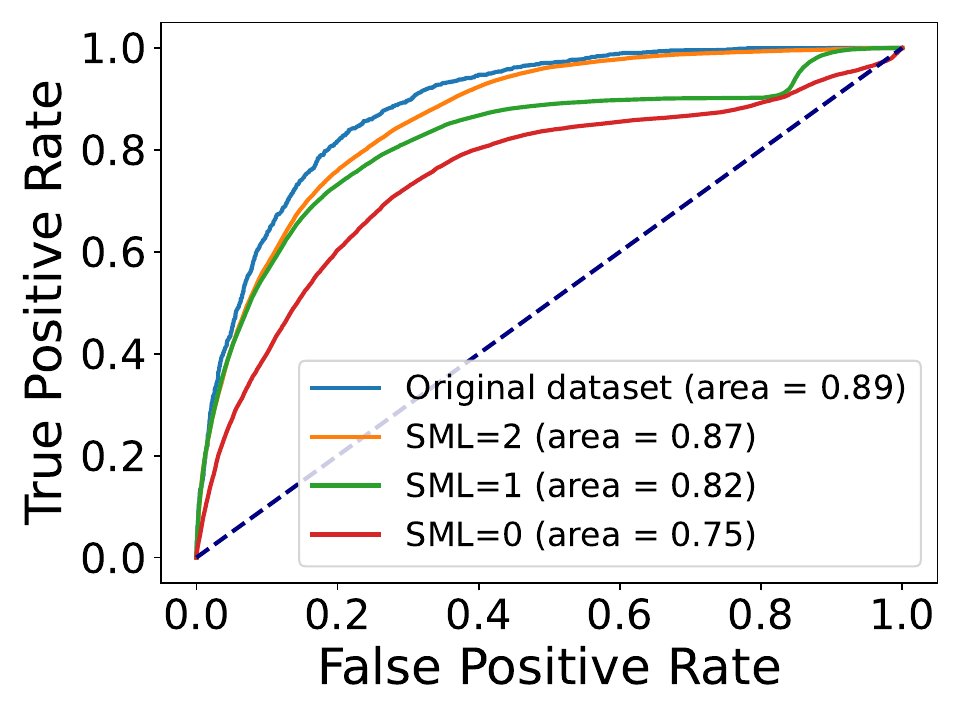}
\caption{Comparison of ROC curves of random forests under \qm{} with different SML.}
\label{fig:roc_quan}
\end{figure}

\begin{figure}[t]
    \centering
    \begin{subfigure}
    {0.48\textwidth}
         \centering
        \includegraphics[width=1\linewidth]{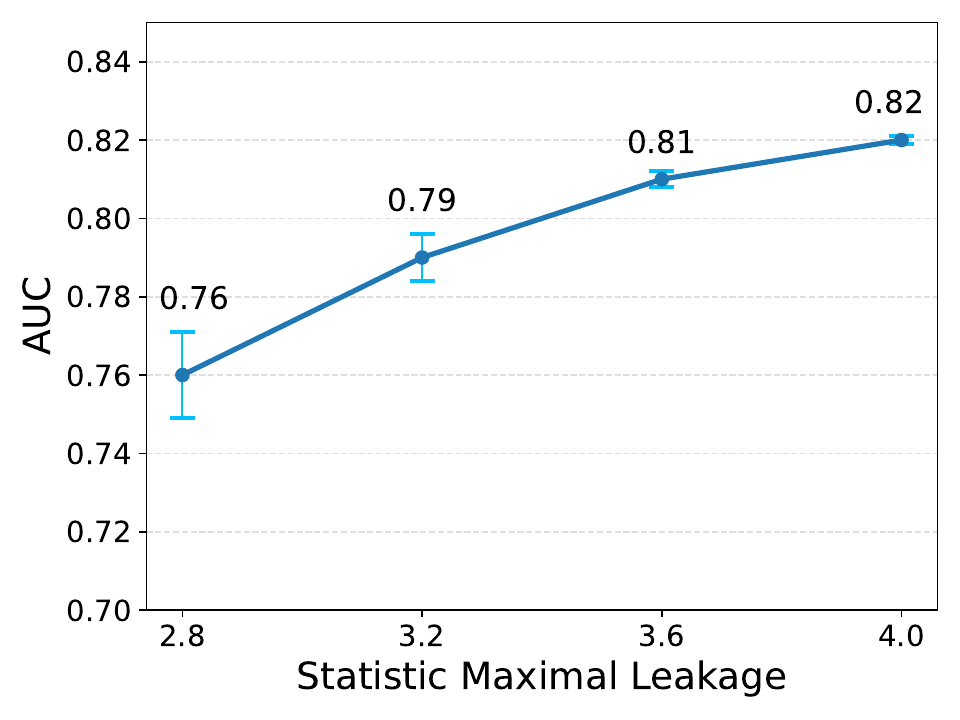}
         \caption{$\categorynoise=0, \categoryactual-\categoryleast=5$.}
         \label{fig:roc_vary_budget}
     \end{subfigure}
     \begin{subfigure}{0.48\textwidth}
         \centering
        \includegraphics[width=1\linewidth]{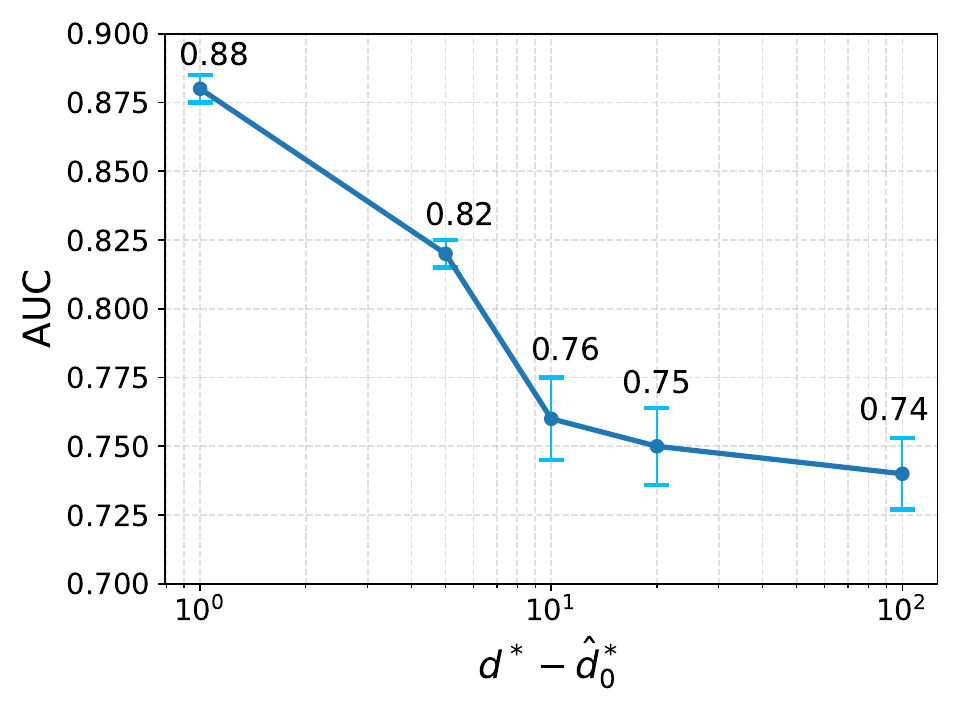}
         \caption{SML$\ =4$, $\categorynoise=0$.}
         \label{fig:roc_vary_miss}
     \end{subfigure}

     \begin{subfigure}{0.32\textwidth}
         \centering
        \includegraphics[width=1\linewidth]{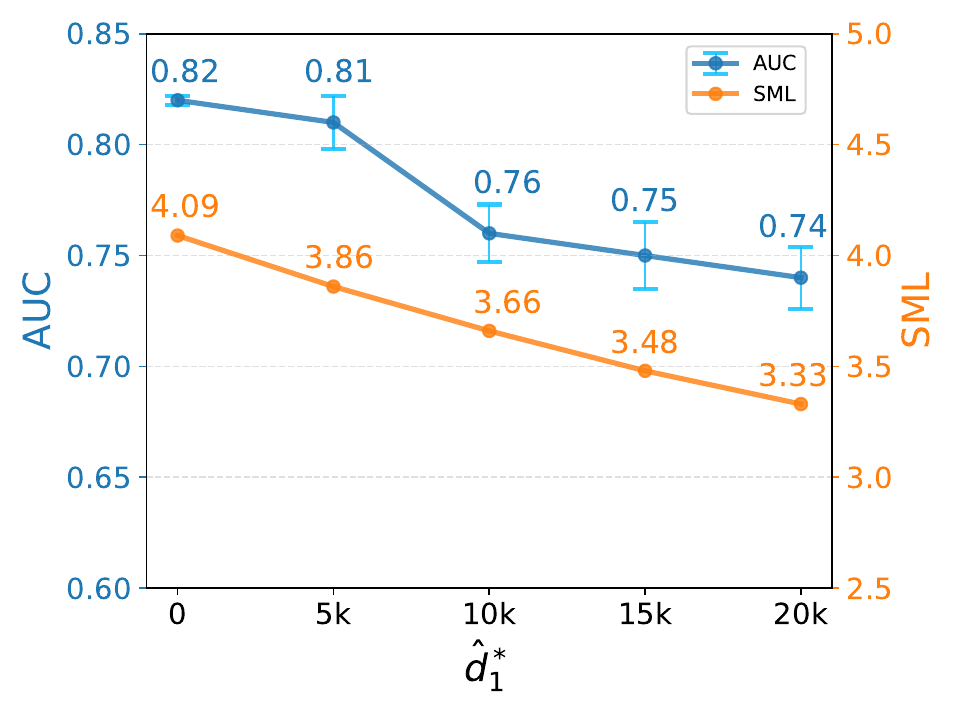}
         \caption{$\categoryactual-\categoryleast=5$.}
         \label{fig:roc_vary_d'_5}
     \end{subfigure}
     \begin{subfigure}{0.32\textwidth}
         \centering
        \includegraphics[width=1\linewidth]{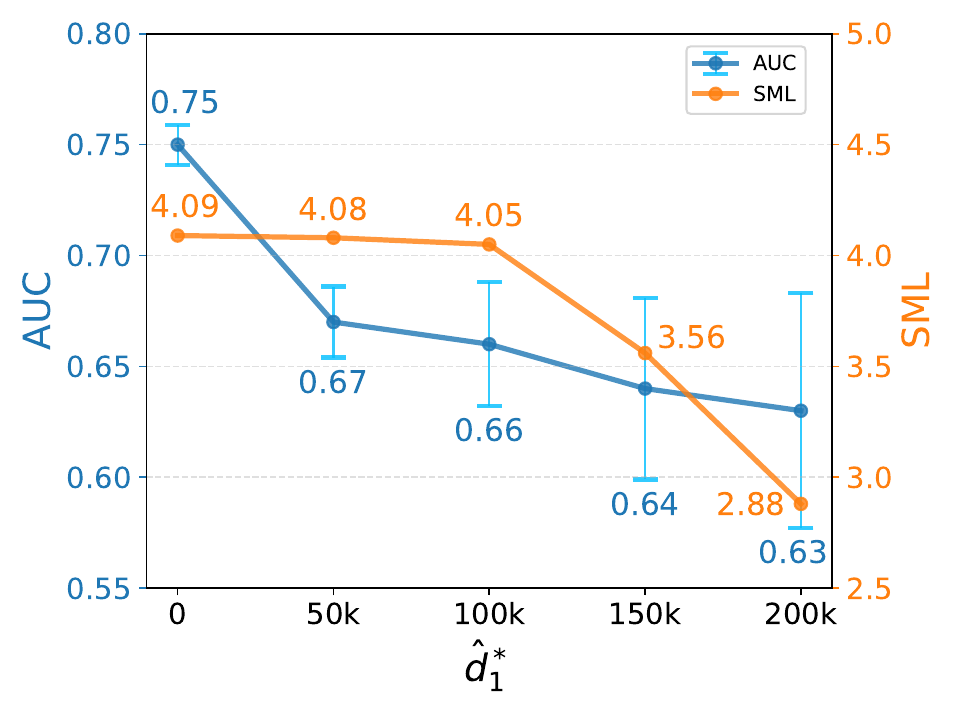}
         \caption{$\categoryactual-\categoryleast=20$.}
         \label{fig:roc_vary_d'_20}
     \end{subfigure}
     \begin{subfigure}{0.32\textwidth}
         \centering
        \includegraphics[width=1\linewidth]{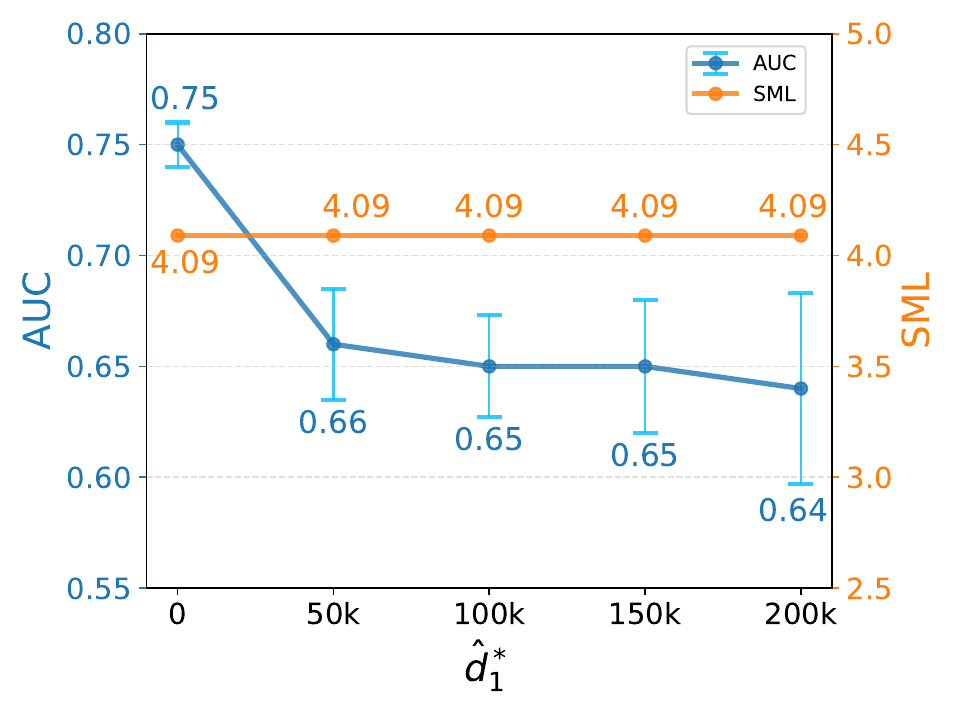}
         \caption{$\categoryactual-\categoryleast=100$.}
         \label{fig:roc_vary_d'_100}
     \end{subfigure}
     \caption{Comparisons of ROC curves of random forests trained on released data by quantization mechanism with different SML, missed feasible attribute combinations $\bra{\categoryactual-\categoryleast}$, or invalid attribute combinations inserted $\bra{\categorynoise}$.%
     }
\label{fig:roc_quan_miss}
\end{figure}

\subsubsection{Mis-calibrated Attribute Sets}
We then focus on the case where $\combinationset\subseteq\combinationsetall\not=\combinationsetestimate$ in \cref{fig:roc_quan_miss}. Under each setting, we conduct 20 independent experiments, show the average AUC with standard deviation of the random forests trained on the released datasets.

In \cref{fig:roc_vary_budget}, we compare the AUC of random forests with different SML levels when the data holder misses $5$ feasible attribute combinations in their pre-held attribute set, i.e., $\categoryactual-\categoryleast=5$. We consider the case where the data holder does not introduce invalid attribute combinations, i.e., $\categorynoise=0$, and vary the SML from $2.8$ to $4$. We observe that compared to \cref{fig:roc_quan}, the performance of the random forests is worse even under more relaxed privacy constraints (AUC = $0.82$ when SML = 4 v.s. AUC = $0.82$ when SML = 1 in \cref{fig:roc_quan}).

In \cref{fig:roc_vary_miss}, we fix the privacy budget as SML = 4, and compare the utility when the data holder has different prior knowledge of feasible attribute combinations, i.e., different number of missing feasible combinations $\categoryactual-\categoryleast$. We vary $\categoryactual-\categoryleast$ from $0$ to $100$, and, similar to the settings of \cref{fig:roc_vary_budget}, set $\categorynoise=0$. As we can observe, with the increasing number of missing feasible attribute combinations, the utility first drops sharply -- AUC drops from $0.87$ to $0.76$ as $\categoryactual-\categoryleast$ increases from $0$ to $10$, and then drops mildly -- AUC drops from $0.76$ to $0.74$ as $\categoryactual-\categoryleast$ increases from $10$ to $100$.

In \cref{fig:roc_vary_d'_5,fig:roc_vary_d'_20,fig:roc_vary_d'_100}, we fix the data holder's pre-held attribute combination sets with $\categoryactual-\categoryleast=5, 20$ and $100$ respectively, and compare the utility under different numbers of introduced invalid attribute combinations, i.e., $\categorynoise$. We observe in \cref{fig:roc_vary_d'_5} that as $\categorynoise$ increases, the privacy upper bounds drops linearly while the utility only degrades slightly when $\categorynoise$ increases from $0$ to 5k, or from 10k to 20k. When the number of feasible attribute combinations the data holder misses increases to $20$ in \cref{fig:roc_vary_d'_20}, the privacy upper bounds drops from $4.09$ to $2.88$ in a concave curve as $\categorynoise$ increases, from 50k to 200k, while the utility only suffers a slight degradation as $\categorynoise$ increases from 50k to 200k. Those results indicate that introducing more invalid attribute combinations results in better (lower) SML while not severely harming utility. We also observe that as the number of missing feasible attribute combinations increases, more invalid attribute combinations are needed to achieve a given improvement in SML. However, as we can observe in \cref{fig:roc_vary_d'_100}, when $\categoryactual-\categoryleast=100$, %
the SML of the mechanism always achieves its trivial upper bound whatever $\categorynoise$ is, which indicates that with more missing feasible attribute combinations, adding invalid attribute combinations eventually cannot help improve the privacy performance of the mechanism. In \cref{sec:curve_explain}, we show through an approximation argument why the privacy expressions decay in the patterns observed in \cref{fig:roc_vary_d'_5,fig:roc_vary_d'_20,fig:roc_vary_d'_100}. %
Roughly, in the approximation regime where $\categorynoise\ll \categoryleast$ and $\categoryactual-\categoryleast \ll \samplenum+\categoryleast$, there exists a positive constant $\alpha = \frac{\log \secretnum - \log \ceil{ \frac{\secretnum}{\intervallen}}}{\log \brb{1+{\samplenum}/\bra{\samplenum+\categoryleast}}}$ such that with increasing $\categorynoise$, the privacy value drops linearly when $0\leq \categoryactual-\categoryleast< \alpha$, and always achieves its trivial upper bound $\log \secretnum$ when $\categoryactual-\categoryleast \geq \alpha$.

\section{Conclusion and Discussion}
\label{sec:conclusion}

In this work, we propose \emph{\privname{}}, a prior-independent and secret-specific privacy measure quantifying the information leakage of a given, known secret. We show that SML satisfies post-processing and composition properties. Although the calculation of statistic maximal leakage is NP-hard in general, under deterministic mechanisms, we show an efficient \privname{} calculation process. %
We analyze and compare the privacy-utility tradeoffs of randomized response and the quantization mechanism.
Both theoretical and empirical results suggest a better privacy-utility tradeoff for the quantization mechanism. %
However, there still remains several interesting future problems.

\paragraph{Design of Alternative Metrics}
The operational meaning of \privname{} is to measure the information gain in an attacker's knowledge about the most likely secret value after data release, under worst-case prior and attack strategy. However, in some scenarios, especially those related to social justice and fairness, protecting the most unlikely posterior secret value can also be crucial. Additionally, while \privname{} assesses the attack success probability over the randomness of the released data, a more conservative approach would measure this probability under the worst-case output. %
In \cref{sec:metric_comparison}, we take a first step towards alternative metrics under those scenarios and analyze their connections with the indistinguishability-based measure inspired by differential privacy.

\paragraph{Extension to Continuous Parameter Space}
The definition and analysis of statistic maximal leakage  in \cref{sec:formulation} are under the discrete parameter space. When extending to the continuous distribution parameter and secret spaces, statistic maximal leakage can be written as
\begin{align}
    \sml = \sup_{\pdf{\Theta}, \pdf{\hat{G}|\Theta'}}\log \frac{\pdfnotation\bra{\hat{G}=G}}{\sup_{\secretrv\in \secretvalueset} \pdfof{G}{\secretrv}},
    \label{eqn:sml_continuous}
\end{align}
where $\pdfnotation$ denotes the probability density function. Similar to \cref{prop:sml_calculation}, we provide an equivalent expression of \cref{eqn:sml_continuous} under continuous parameter space in \cref{section:proof_sml_calculation_continuous}. 
However,  efficiently calculating or estimate \privname{} over a continuous parameter space is an open problem, as is designing data release mechanisms with desirable privacy-utility tradeoffs. 

For instance, we study tabular datasets with categorical values in this work. This is a very limited class of data. Designing mechanisms that satisfy an SML guarantee for more general classes of data is an important and interesting question. 

\paragraph{Relation Between Secret and Utility} In our experiments, we (perhaps surprisingly) observed nonzero utility even at a perfect SML guarantee of zero. This is likely due to our choice of secret, which was correlated with the selected downstream task very loosely. An interesting question is to characterize conditions on the secret and the downstream task to explain how SML will affect utility.

\section*{Acknowledgments}
The authors gratefully acknowledge the support of NSF grants CIF-1705007, CCF-2338772, and RINGS-2148359, as well as support from the Sloan Foundation, Intel, J.P. Morgan Chase, Siemens,
Bosch, and Cisco. This material is based upon work supported in part by the U.S. Army Research Office
and the U.S. Army Futures Command under Contract No. W911NF20D0002. 

\clearpage
\newpage


\newpage
\begin{appendices}
\section{Proofs of \cref{lemma:distortion,property:metric_compare,prop:sml_calculation}}

\subsection{Proof of \cref{lemma:distortion}}
\label{sec:proof_distortion}

\distortionmeasure*

\begin{proof}

We show that the proposed expression is both an upper bound and a lower bound on the original distortion measure.

{Recall that the original distortion measure is defined as 
$$
\smldistortion = \sup_{\paramdistribution} \cameraready{\mathbb{E}_{\Theta, \Theta'=\mech\bra{\Theta}}[\distanceof{\distributionof{X_{\Theta}}} {\distributionof{Y_{\Theta'}}}]}.
$$
}
Let $\paramdistribution^*=\arg\sup_{\paramdistribution} \mathbb{E}_{\Theta, \Theta'=\mech\bra{\Theta}}[\distanceof{\privatedistribution} {\releasedistribution}]$, and for convenience, we define $\mathcal P_\theta \triangleq \mathbb P^*_{\Theta}(\theta)$.
We can get that $\smldistortion$ can be upper bounded by
\begin{align*}
\smldistortion &= \sup_{\paramdistribution} \mathbb{E}_{\Theta, \Theta'=\mech\bra{\theta}}[\distanceof{\privatedistribution} {\releasedistribution}]\\
&= \sum_{\theta\in\paramset}\mathcal{P}_\theta\cdot\mathbb{E}_{\Theta'=\mech\bra{\theta}}[\distanceof{\privatedistribution} {\releasedistribution}]\\
&\leq\sup_{\theta} \mathbb{E}_{\Theta'=\mech\bra{\theta}}[\distanceof{\privatedistribution} {\releasedistribution}].
\end{align*}

To get the lower bound of $\smldistortion$, we construct a prior such that %
$\mathbb{P}_{\Theta}\bra{\theta}=1$ when $\theta = \arg\sup_{\theta} \mathbb{E}_{\Theta'=\mech\bra{\theta}}[\distanceof{\privatedistribution} {\releasedistribution}]$, and we have
\begin{align*}
\smldistortion &= \sup_{\paramdistribution} \mathbb{E}_{\Theta, \Theta'=\mech\bra{\theta}}[\distanceof{\privatedistribution} {\releasedistribution}]\\
&\geq\sup_{\theta} \mathbb{E}_{\Theta'=\mech\bra{\theta}}[\distanceof{\privatedistribution} {\releasedistribution}].
\end{align*}

Therefore,
$
\smldistortion =\sup_{\theta} \mathbb{E}_{\Theta'=\mech\bra{\theta}}[\distanceof{\privatedistribution} {\releasedistribution}].
$
\end{proof}

\subsection{Proof of \cref{property:metric_compare}}
\label{sec:proof_metric_compare}

\relation*

\begin{proof}

For the worst-case min-entropy leakage $\minentropy$, it can be derived as 
\begin{align*}
\minentropy &= \sup_{\paramdistribution}\log \frac{\sum_{\theta'\in\releaseparamset} \sup_{\theta\in \mathbf{\Theta}} \mathbb{P}_{\Theta\Theta'}\bra{\theta,\theta'}}{\sup_{\theta\in \mathbf{\Theta}} \mathbb{P}_\Theta\bra{\theta}}\\
&= \sup_{\paramdistribution}\log \frac{\sum_{\theta'\in\releaseparamset} \sup_{\theta\in \mathbf{\Theta}} \mathbb{P}_{\Theta'|\Theta}\bra{\theta'|\theta}\mathbb{P}_\Theta\bra{\theta}}{\sup_{\theta\in \mathbf{\Theta}} \mathbb{P}_\Theta\bra{\theta}}\\
    & \leq \sup_{\paramdistribution}\log \frac{\sum_{\theta'\in\releaseparamset} \sup_{\theta\in \mathbf{\Theta}} \mathbb{P}_{\Theta'|\Theta}\bra{\theta'|\theta}\cdot\sup_{\theta\in \mathbf{\Theta}}\mathbb{P}_\Theta\bra{\theta}}{\sup_{\theta\in \mathbf{\Theta}} \mathbb{P}_\Theta\bra{\theta}}\\
    &= \log \sum_{\theta'\in\releaseparamset} \sup_{\theta\in \mathbf{\Theta}}\mathbb{P}_{\Theta'|\Theta}\bra{\theta'|\theta}.
\end{align*}
This upper bound can be achieved when $\paramdistribution$ satisfies 
\begin{align*}
\forall \theta'\in\releaseparamset: \  \arg\sup_{\theta\in \mathbf{\Theta}} \mathbb{P}_{\Theta'|\Theta}\bra{\theta'|\theta}\cap \arg\sup_{\theta\in \mathbf{\Theta}}\mathbb{P}_\Theta\bra{\theta} \not= \emptyset.
\end{align*}

Thus,
$\minentropy=\log\sum_{\theta'\in\releaseparamset} \sup_{\theta\in \mathbf{\Theta}}\mathbb{P}_{\Theta'|\Theta}\bra{\theta'|\theta}$.

For \privname $\sml$, we defer the derivations to \cref{section:proof_sml_calculation}, from which we have
\begin{align}
\sml
&= \sup_{\paramdistribution}\log \sum_{\theta'\in\releaseparamset} \sup_{\secretrv\in \secretvalueset} \frac{\sum_{\theta\in \mathbf{\Theta}_{\secretrv}}\mathbb{P}_{\Theta'|\Theta}\bra{\theta'|\theta}\cdot\mathbb{P}_\Theta\bra{\theta}}{\sum_{\theta\in\mathbf{\Theta}_{\secretrv}}\mathbb{P}_\Theta\bra{\theta}}
\nonumber \\
&\geq \log
\sum_{\theta'\in\releaseparamset} \sup_{\secretrv\in \secretvalueset} \frac{\sum_{\theta\in \mathbf{\Theta}_{\secretrv}}\mathbb{P}_{\Theta'|\Theta}\bra{\theta'|\theta}}{|\mathbf{\Theta}_{\secretrv}|}
\label{eq:sml_ineq_1} \\
&\geq \log\frac{ \sum_{\theta'\in\releaseparamset} \sup_{\secretrv\in \secretvalueset} \sup_{\theta\in \mathbf{\Theta}_{\secretrv}}\mathbb{P}_{\Theta'|\Theta}\bra{\theta'|\theta}}{\sup_{\secretrv\in \secretvalueset}|\mathbf{\Theta}_{\secretrv}|}
\nonumber \\
&= \log\frac{ \sum_{\theta'\in\releaseparamset} \sup_{\theta\in \mathbf{\Theta}}\mathbb{P}_{\Theta'|\Theta}\bra{\theta'|\theta}}{\sup_{\secretrv\in \secretvalueset}|\mathbf{\Theta}_{\secretrv}|}
\label{eq:sml_ineq_2} \\
&= \minentropy-\sup_{\secretrv\in \secretvalueset}\log|\mathbf{\Theta}_{\secretrv}|. \nonumber
\end{align}
Equality is achieved in \eqref{eq:sml_ineq_1} %
when $\paramdistribution$ satisfies
$
\forall \secretrv\in\secretvalueset, \forall \theta_1, \theta_2 \in \mathbf{\Theta}_{\secretrv}: ~  \mathbb{P}_{\Theta}\bra{\theta_1} =  \mathbb{P}_{\Theta}\bra{\theta_2},
$ 
and \eqref{eq:sml_ineq_2} %
holds because $\cup_{\secretrv \in \secretvalueset} \mathbf{\Theta}_{\secretrv} = \mathbf{\Theta}$.
Additionally, we have
\begin{align*}
\sml 
&= \sup_{\paramdistribution} \log\sum_{\theta'\in\releaseparamset} \sup_{\secretrv\in \secretvalueset} \frac{\sum_{\theta\in \mathbf{\Theta}_{\secretrv}}\mathbb{P}_{\Theta'|\Theta}\bra{\theta'|\theta}\cdot\mathbb{P}_\Theta\bra{\theta}}{\sum_{\theta\in\mathbf{\Theta}_{\secretrv}}\mathbb{P}_\Theta\bra{\theta}}
\\
&\leq \sup_{\paramdistribution}\log \sum_{\theta'\in\releaseparamset} \sup_{\secretrv\in \secretvalueset} \frac{\sup_{\theta\in \mathbf{\Theta}_{\secretrv}}\mathbb{P}_{\Theta'|\Theta}\bra{\theta'|\theta}\cdot\sum_{\theta\in \mathbf{\Theta}_{\secretrv}}\mathbb{P}_\Theta\bra{\theta}}{\sum_{\theta\in\mathbf{\Theta}_{\secretrv}}\mathbb{P}_\Theta\bra{\theta}}
\\
&= \log \sum_{\theta'\in\releaseparamset} \sup_{\theta\in \mathbf{\Theta}}\mathbb{P}_{\Theta'|\Theta}\bra{\theta'|\theta}
\\
&= \minentropy.
\end{align*}
\normalsize

This gives
\begin{equation}
\begin{aligned}
\label{eqn:sml_minentropy}
\minentropy-\sup_{\secretrv\in \secretvalueset}\log|\mathbf{\Theta}_{\secretrv}| \leq \sml \leq \minentropy.
\end{aligned}
\end{equation}

For maximal leakage $\maxl$, from \cite[Thm.1]{issa2019operational}, we know that
\vspace{-2mm}
\begin{align*}
\maxl = \log\sum_{\theta'\in\releaseparamset} \sup_{\substack{\theta\in \mathbf{\Theta}:\\ \mathbb{P}_{\Theta}\bra{\theta}>0}}\mathbb{P}_{\Theta'|\Theta}\bra{\theta'|\theta}.
\end{align*}
Therefore, we have 
\begin{align*}
\sup_{\paramdistribution}\maxl &= \sup_{\paramdistribution}\log\sum_{\theta'\in\releaseparamset} \sup_{\substack{\theta\in \mathbf{\Theta}:\\ \mathbb{P}_{\Theta}\bra{\theta}>0}}\mathbb{P}_{\Theta'|\Theta}\bra{\theta'|\theta}\\
&= \log\sum_{\theta'\in\releaseparamset} \sup_{\theta\in \mathbf{\Theta}}\mathbb{P}_{\Theta'|\Theta}\bra{\theta'|\theta}\\
&=\minentropy.
\end{align*}
Based on \cref{eqn:sml_minentropy}, we can get that
\begin{align*}
\sup_{\paramdistribution}\maxl-\sup_{\secretrv\in \secretvalueset}\log|\mathbf{\Theta}_{\secretrv}| \leq \sml \leq \sup_{\paramdistribution}\maxl.
\end{align*}

\end{proof}

\subsection{Proof of \cref{prop:sml_calculation}}
\label{section:proof_sml_calculation}

\smlexpression*

\begin{proof}
We first analyze the upper bound of \privname under a fixed prior. Let $\secretset$ be the set of distribution parameters whose secret value is $\secretrv$, i.e, $\secretset = \brc{\theta\in \mathbf{\Theta} | \secretofparam = \secretrv}$.
For \privname $\sml$, we have $\sml = \sup_{\paramdistribution}\log \frac{\sum_{\theta'\in\releaseparamset} \sup_{\secretrv\in \secretvalueset} \mathbb{P}_{G\Theta'}\bra{\secretrv,\theta'}}{\sup_{\secretrv\in \secretvalueset} \mathbb{P}_G\bra{\secretrv}}$. Under a fixed prior distribution $\paramdistribution$, we can get that 
\begin{equation}
\label{eqn:pi}
\begin{aligned}
\log&  \frac{\sum_{\theta'\in\releaseparamset} \sup_{\secretrv\in \secretvalueset} \mathbb{P}_{G\Theta'}\bra{\secretrv,\theta'}}{\sup_{\secretrv\in \secretvalueset} \mathbb{P}_G\bra{\secretrv}}\\
&= \log \frac{\sum_{\theta'\in\releaseparamset} \sup_{\secretrv\in \secretvalueset} \mathbb{P}_{\Theta'|G}\bra{\theta'|\secretrv}\mathbb{P}_G\bra{\secretrv}}{\sup_{\secretrv\in \secretvalueset} \mathbb{P}_G\bra{\secretrv}}\\
& \leq \log \frac{\sum_{\theta'\in\releaseparamset} \sup_{\secretrv\in \secretvalueset}  \mathbb{P}_{\Theta'|G}\bra{\theta'|\secretrv}\cdot\sup_{\secretrv\in \secretvalueset}\mathbb{P}_G\bra{\secretrv}}{\sup_{\secretrv\in \secretvalueset} \mathbb{P}_G\bra{\secretrv}}\\
&= \log \sum_{\theta'\in\releaseparamset} \sup_{\secretrv\in \secretvalueset}\mathbb{P}_{\Theta'|G}\bra{\theta'|\secretrv}\\
&= \log\sum_{\theta'\in\releaseparamset} \sup_{\secretrv\in \secretvalueset}
\sum_{\theta\in \mathbf{\Theta}}
\mathbb{P}_{\Theta'|G\Theta}\bra{\theta'|\secretrv, \theta}\cdot\mathbb{P}_{\Theta|G}\bra{\theta|\secretrv}\\
&= \log \hspace{-1mm} \sum_{\theta'\in\releaseparamset} \sup_{\secretrv\in \secretvalueset}
\sum_{\theta\in \mathbf{\Theta}}
\mathbb{P}_{\Theta'|G\Theta}\bra{\theta'|\secretrv, \theta} \frac{\mathbb{P}_{G|\Theta}\bra{\secretrv|\theta}\cdot\mathbb{P}_\Theta\bra{\theta}}{\sum_{\theta\in\mathbf{\Theta}}\mathbb{P}_{G|\Theta}\bra{\secretrv|\theta}\cdot\mathbb{P}_\Theta\bra{\theta}}
\\
&= \log \sum_{\theta'\in\releaseparamset} \sup_{\secretrv\in \secretvalueset} \frac{\sum_{\theta\in \mathbf{\Theta}_{\secretrv}}\mathbb{P}_{\Theta'|\Theta}\bra{\theta'|\theta}\cdot\mathbb{P}_\Theta\bra{\theta}}{\sum_{\theta\in\mathbf{\Theta}_{\secretrv}}\mathbb{P}_\Theta\bra{\theta}}\\
&=\log \sum_{\theta'\in\releaseparamset} \sup_{\secretrv\in \secretvalueset} \mathbb{E}_{\Theta}\brb{\mathbb{P}_{\Theta'|\Theta}\bra{\theta'|\theta}\big|\theta\in \secretset}\\
&\triangleq V_{\paramdistribution}.
\end{aligned}
\end{equation}
\normalsize
This upper bound can be achieved when $\paramdistribution$ satisfies 
\begin{equation}
\label{condition:pri1}
\begin{aligned}
    \forall \theta'\in\releaseparamset: \  \arg\sup_{\secretrv\in \secretvalueset} \mathbb{P}_{\Theta'|G}\bra{\theta'|\secretrv} \cap \arg\sup_{\secretrv\in \secretvalueset}\mathbb{P}_G\bra{\secretrv} \not= \emptyset,
\end{aligned}
\end{equation}
where $\mathbb{P}_{\Theta'|G}\bra{\theta'|\secretrv}=\mathbb{E}\brb{\mathbb{P}_{\Theta'|\Theta}\bra{\theta'|\theta}\big| \theta\in\secretset}$, $\mathbb{P}_{G}\bra{\secretrv}=\sum_{\theta\in\secretset}\mathbb{P}_{\Theta}\bra{\theta}$.

Let ${\paramdistribution^+}\in \arg\sup_{\paramdistribution} V_{\paramdistribution}$ and $\releaseparamset(\secretrv)$ be the set of released parameters satisfying $\secretrv=\arg\sup_{\secretrv\in \secretvalueset} \mathbb{E}_{\Theta}\brb{\mathbb{P}_{\Theta'|\Theta}\bra{\theta'|\theta}\big|\theta\in \secretset}$
under ${\paramdistribution^+}$. %
We can construct a prior $\paramdistribution^*$ as follows: for any $\secretrv\in\secretvalueset$, select a distribution parameter $\theta_\secretrv\in\paramset$ such that $\theta_\secretrv \in \arg\sup_{\theta\in\secretset}  \sum_{\theta'\in\releaseparamset(\secretrv)}\mathbb{P}_{\Theta'|\Theta}\bra{\theta'|\theta}$\footnote{Let 
$\arg\sup_{\theta\in\secretset}  \sum_{\theta'\in\releaseparamset(\secretrv)}\mathbb{P}_{\Theta'|\Theta}\bra{\theta'|\theta} = \secretset$ when $\releaseparamset(\secretrv)=\emptyset$.}, and let $\mathbb{P}_{\Theta}\bra{\theta_\secretrv} = \frac{1}{\brd{\secretvalueset}}$ and $\mathbb{P}_{\Theta}\bra{\theta} = 0, \forall \theta \in \secretset, \theta\not= \theta_\secretrv$.

We can get that
\begin{align*}
V_{{\paramdistribution^+}} &= \log \sum_{\theta'\in\releaseparamset} \sup_{\secretrv\in \secretvalueset} \mathbb{E}_{\Theta}\brb{\mathbb{P}_{\Theta'|\Theta}\bra{\theta'|\theta}\big|\theta\in \secretset}\\
&= \log \sum_{\secretrv\in\secretvalueset}\sum_{\theta'\in\releaseparamset(\secretrv)}\mathbb{E}_{\Theta}\brb{\mathbb{P}_{\Theta'|\Theta}\bra{\theta'|\theta}\big|\theta\in \secretset}\\
&= \log \sum_{\secretrv\in\secretvalueset}\mathbb{E}_{\Theta}\brb{\sum_{\theta'\in\releaseparamset(\secretrv)}\mathbb{P}_{\Theta'|\Theta}\bra{\theta'|\theta}\bigg|\theta\in \secretset}\\
&\leq \log \sum_{\secretrv\in\secretvalueset} \sup_{\theta\in\secretset}\sum_{\theta'\in\releaseparamset(\secretrv)}\mathbb{P}_{\Theta'|\Theta}\bra{\theta'|\theta}\\
&= V_{\paramdistribution^*}.
\end{align*}
Since ${\paramdistribution^+}\in \arg\sup_{\paramdistribution} V_{\paramdistribution}$, we can get that $\paramdistribution^*\in \arg\sup_{\paramdistribution} V_{\paramdistribution}$. Since $\paramdistribution^*$ satisfies condition \ref{condition:pri1} and $\mathbb{P}_{\Theta|G}\bra{\theta|\secretrv}\in\brc{0,1}, \forall \secretrv\in\secretvalueset, \theta\in\paramset$, we have
\begin{equation}
\begin{aligned}
\label{eqn:sml_variations}
\sml &= \sup_{\paramdistribution}\log \frac{\sum_{\theta'\in\releaseparamset} \sup_{\secretrv\in \secretvalueset} \mathbb{P}_{G\Theta'}\bra{\secretrv,\theta'}}{\sup_{\secretrv\in \secretvalueset} \mathbb{P}_G\bra{\secretrv}} \\
&= \sup_{\paramdistribution}\log \sum_{\theta'\in\releaseparamset} \sup_{\secretrv\in \secretvalueset}\mathbb{P}_{\Theta'|G}\bra{\theta'|\secretrv}\\
&= \sup_{\paramdistribution}\log \sum_{\theta'\in\releaseparamset} \sup_{\secretrv\in \secretvalueset} \frac{\sum_{\theta\in \mathbf{\Theta}_{\secretrv}}\mathbb{P}_{\Theta'|\Theta}\bra{\theta'|\theta}\cdot\mathbb{P}_\Theta\bra{\theta}}{\sum_{\theta\in\mathbf{\Theta}_{\secretrv}}\mathbb{P}_\Theta\bra{\theta}}\\
&= \sup_{\paramdistribution}\log \sum_{\theta'\in\releaseparamset} \sup_{\secretrv\in \secretvalueset} \mathbb{E}_{\Theta}\brb{\mathbb{P}_{\Theta'|\Theta}\bra{\theta'|\theta}\big|\theta\in \secretset}\\
&= \sup_{\paramdistribution:\mathbb{P}_{\Theta|G}\in\brc{0,1}}\log \sum_{\theta'\in\releaseparamset} \sup_{\secretrv\in \secretvalueset} \mathbb{P}_{\Theta'|\Theta}\bra{\theta'|\theta_\secretrv}\\
&= \sup_{\mathbb{P}_{\Theta|G}\in\brc{0,1}}\log \sum_{\theta'\in\releaseparamset} \sup_{\secretrv\in \secretvalueset} \mathbb{P}_{\Theta'|\Theta}\bra{\theta'|\theta_\secretrv},
\end{aligned}
\end{equation}
where $\theta_\secretrv$ satisfies $\mathbb{P}_{\Theta|G}\bra{\theta_\secretrv|\secretrv}=1, \forall \secretrv\in\secretvalueset,$ under a prior with $\mathbb{P}_{\Theta|G}\in\brc{0,1}$.
\end{proof}

\section{Proof of \cref{prop:min-cost-flow}}
\label{section:proof_min-cost-flow}
\smlcompdet*

\begin{proof}
Given a deterministic mechanism $\mech$ and a secret mapping $\secretnotation$, let $\mincost$ be the total cost of the min-cost flow under the network we construct.

We first prove that $\log\bra{-\mincost}\geq \sml$ as follows. Let 
$$
\mathbb{P}^*_{\Theta|G} = \arg\sup_{\mathbb{P}_{\Theta|G}\in\brc{0,1}}\log \sum_{\theta'\in\releaseparamset} \sup_{\secretrv\in \secretvalueset} \mathbb{P}_{\Theta'|\Theta}\bra{\theta'|\theta_\secretrv},
$$
and denote $\theta^*_\secretrv$ such that $\mathbb{P}^*_{\Theta|G}\bra{\theta^*_{\secretrv}|\secretrv}=1$. For any deterministic mechanism, we have $\mathbb{P}_{\Theta'|\Theta}\in \brc{0,1}$, and we denote $\theta'_\secretrv$ such that $\mathbb{P}_{\Theta'|\Theta}\bra{\theta'_g|\theta^*_\secretrv}=1$. From \cref{prop:sml_calculation}, we can get that $\sml = \log \sum_{\theta'\in\releaseparamset} \sup_{\secretrv\in \secretvalueset} \mathbb{P}_{\Theta'|\Theta}\bra{\theta'|\theta^*_\secretrv}=\log \sum_{\theta'\in\releaseparamset}\mathbbm{1}\brc{\theta'=\theta'_\secretrv, \exists \secretrv\in\secretvalueset}$. Denote $\releaseparamset^{\bra{*}}=\brc{\theta' | \theta'=\theta'_\secretrv, \exists \secretrv\in\secretvalueset}$, and then we have $\sml  = \log \brd{\releaseparamset^{\bra{*}}}$. For any $\theta'\in \releaseparamset^{\bra{*}}$, there is at least one secret, denoted as $\secretrv_{\theta'}$, such that $\theta'=\theta'_{\secretrv_{\theta'}}$. We can construct a feasible network flow where for any $\theta'\in \releaseparamset^{\bra{*}}$, there is a $1$-unit flow in the path src$-\secretrv_{\theta'}-\theta^*_{\secretrv_{\theta'}}-\theta'-$sink. Since we have $\mathbb{P}_{\Theta'|\Theta}\bra{\theta'|\theta^*_{\secretrv_{\theta'}}}=\mathbb{P}_{\Theta'|\Theta}\bra{\theta'_{\secretrv_{\theta'}}|\theta^*_{\secretrv_{\theta'}}}=1$, we can easily get that the total cost of the constructed flow is $-\brd{\releaseparamset^{\bra{*}}}\geq \mincost$. Therefore, we have $\log\bra{-\mincost}\geq \sml$.

We then prove that $\log\bra{-\mincost}\leq \sml$ as follows. We can construct a min-cost network flow such that the flow of each edge is either $1$ or $0$ \cite{ford2015flows}. Denote $\releaseparamset^{\bra{+}}$ as the set of nodes in $\Theta'$-column that this min-cost flow goes through. Therefore, for any $\theta'\in\releaseparamset^{\bra{+}}$, there is one $1$-unit flow goes through it, and we denote this flow path as src$-\secretrv_{\langle\theta'\rangle}-\theta^{\secretrv_{\langle\theta'\rangle}}-\theta'-$sink. Since the capacity of each edge is $1$, $\theta^{\secretrv_{\langle\theta'\rangle}}$ has a one-on-one correspondence to $\secretrv_{\langle\theta'\rangle}$ and $\theta'$. We can construct a $\mathbb{P}^{+}_{\Theta|G}$ such that $\mathbb{P}^{+}_{\Theta|G}\bra{\theta^{\secretrv_{\langle\theta'\rangle}}|\secretrv_{\langle\theta'\rangle}}=1$, i.e., $\theta^{\secretrv_{\langle\theta'\rangle}}=\theta^{+}_{\secretrv_{\langle\theta'\rangle}}$ under $\mathbb{P}^{+}_{\Theta|G}$. %
Then we can get that $\log\bra{-\mincost} = \log\sum_{\theta'\in\releaseparamset^{\bra{+}}}\mathbb{P}_{\Theta'|\Theta}\bra{\theta'|\theta^{\secretrv_{\langle\theta'\rangle}}}=\log\sum_{\theta'\in\releaseparamset^{\bra{+}}}\mathbb{P}_{\Theta'|\Theta}\bra{\theta'|\theta^{+}_{\secretrv_{\langle\theta'\rangle}}}\leq \sup_{\mathbb{P}_{\Theta|G}\in\brc{0,1}}\log \sum_{\theta'\in\releaseparamset} \sup_{\secretrv\in \secretvalueset} \mathbb{P}_{\Theta'|\Theta}\bra{\theta'|\theta_\secretrv}=\sml$. 

Hence, we have $\log\bra{-\mincost}= \sml$.
\end{proof}
\section{Proof of \cref{prop:np-hard}}
\label{proof:np-hard}
\hardness*

\begin{proof}

We first prove the NP-hardness of the SML computation. 
Consider a 3-set cover decision problem where given a set $\mathcal{U}$, a collection $\mathcal{T}$ of $m$ 3-size subsets whose union is $\mathcal{U}$, and an integer $k$, decide whether there are $k$ subsets in $\mathcal{T}$ whose union equals to $\mathcal{U}$. It is well-known that 3-set cover decision problem is NP-complete \cite{karp2010reducibility}. 
We will prove the NP-hardness of SML computation by reducing the 3-Set Cover decision problem to it.

Let $\mathcal{T}=\brc{T_1, \ldots, T_m}$, where $T_i=\brc{t^{(i)}_1, t^{(i)}_3, t^{(i)}_3}, \forall i\in\brb{m}$, and let $\secretset=\brc{\theta^{\bra{\secretrv}}_1, \ldots, \theta^{\bra{\secretrv}}_{\brd{\secretset}}}, \forall \secretrv\in\secretvalueset$. For any $\mathcal{U}, \mathcal{T}$ and $k$, we can construct the support of distribution parameter $\paramset$, the secret function $\secretnotation$, and the mechanism $\mech$ such that (i) $\forall \secretrv\in\secretvalueset: \brd{\secretset}=m$; (ii) $\brd{\secretvalueset}=k$, i.e., $\secretvalueset=\brc{\secretrv_1,\ldots, \secretrv_k}$; (iii) $\releaseparamset=\mathcal{U}$; (iv) $\forall \secretrv\in\secretvalueset, i\in\brb{m}: \mathbb{P}_{\Theta'|\Theta}\bra{\theta'|\theta^{\bra{\secretrv}}_{i}} = \frac{1}{3}$ if $\theta'\in T_i$, otherwise, $0$.

We show that solving the 3-set cover decision problem with sets $\mathcal{U}, \mathcal{T}$ and integer $k$ is equivalent to deciding whether the SML of the mechanism constructed above is equal to $\log\bra{\brd{\mathcal{U}}/3}$ as follows. 

If there is a sub-collection $\tilde{\mathcal{T}}=\brc{T_{j_1}, \ldots, T_{j_k}}$ whose union is $\mathcal{U}$, we have $\theta'\in \bigcup_{i\in\brb{k}}T_{j_k}, \forall \theta'\in\releaseparamset$. Let $\theta_{\secretrv_i}=\theta^{\bra{\secretrv_i}}_{j_i}, \forall i\in \brb{k}$ ($\theta_\secretrv$ is defined above \cref{prop:sml_calculation}), and we can get that
\begin{align*}
\sml &= \sup_{\mathbb{P}_{\Theta|G}\in\brc{0,1}}\log \sum_{\theta'\in\releaseparamset} \sup_{\secretrv\in \secretvalueset} \mathbb{P}_{\Theta'|\Theta}\bra{\theta'|\theta_\secretrv}\\
&\geq \log \sum_{\theta'\in\releaseparamset} \sup_{i\in\brb{k}} \mathbb{P}_{\Theta'|\Theta}\bra{\theta'|\theta^{\bra{\secretrv_i}}_{j_i}}\\
&= \log \sum_{\theta'\in\releaseparamset} \frac{1}{3}\\
&= \log\frac{\brd{\mathcal{U}}}{3}.
\end{align*}
Additionally, we have
\begin{align*}
\sml &= \sup_{\mathbb{P}_{\Theta|G}\in\brc{0,1}}\log \sum_{\theta'\in\releaseparamset} \sup_{\secretrv\in \secretvalueset} \mathbb{P}_{\Theta'|\Theta}\bra{\theta'|\theta_\secretrv}\\
&\leq\sup_{\mathbb{P}_{\Theta|G}\in\brc{0,1}}\log \sum_{\theta'\in\releaseparamset} \frac{1}{3}\\
&= \log\frac{\brd{\mathcal{U}}}{3}.
\end{align*}
Therefore, we have $\sml = \log\frac{\brd{\mathcal{U}}}{3}$.

If there is no sub-collection of $\mathcal{T}$ with size $k$ that covers $\mathcal{U}$, then $\sml < \log\frac{\brd{\mathcal{U}}}{3}$, which can be proved by contradiction as follows. Suppose $\sml = \log\frac{\brd{\mathcal{U}}}{3}$, then we can construct a prior distribution of parameter with $\mathbb{P}_{\Theta|G}\in\brc{0,1}$, such that $\forall \theta'\in\releaseparamset, \exists i \in \brb{k}: \mathbb{P}_{\Theta'|\Theta}\bra{\theta'|\theta_{\secretrv_i}}=\frac{1}{3}$. For any $i\in\brb{k}$, there exists $j_i\in \brb{m}$ such that $\theta_{\secretrv_i}=\theta^{\bra{\secretrv_i}}_{j_i}$, and we construct a sub-collection of $\mathcal{T}$ as $\tilde{\mathcal{T}}= \brc{T_{j_1}, \ldots, T_{j_k}}$. Since $\releaseparamset=\mathcal{U}$ and $\mathbb{P}_{\Theta'|\Theta}\bra{\theta'|\theta_{\secretrv_i}} = \frac{1}{3}$ if and only if $\theta'\in T_{j_i} \bra{i\in\brb{k}}$, we can get that $\forall u\in \mathcal{U}, \exists i\in\brb{k}: u \in T_{j_i}$. Therefore, $\bigcup_{i\in\brb{k}}T_{j_i} = \mathcal{U}$, which contradicts with the assumption that there is no sub-collection of $\mathcal{T}$ with size $k$ that covers $\mathcal{U}$. Since $\sml\leq \log\frac{\brd{\mathcal{U}}}{3}$, we have $\sml< \log\frac{\brd{\mathcal{U}}}{3}$.

Hence, we can reduce 3-set cover decision problem to SML computation, %
and therefore, SML computation is NP-hard.

Next, we convert the computation of SML to network flow problems, for which approximation or exact algorithms have been proposed.

\begin{figure}[htbp]
\centering
\includegraphics[width=0.8\linewidth]{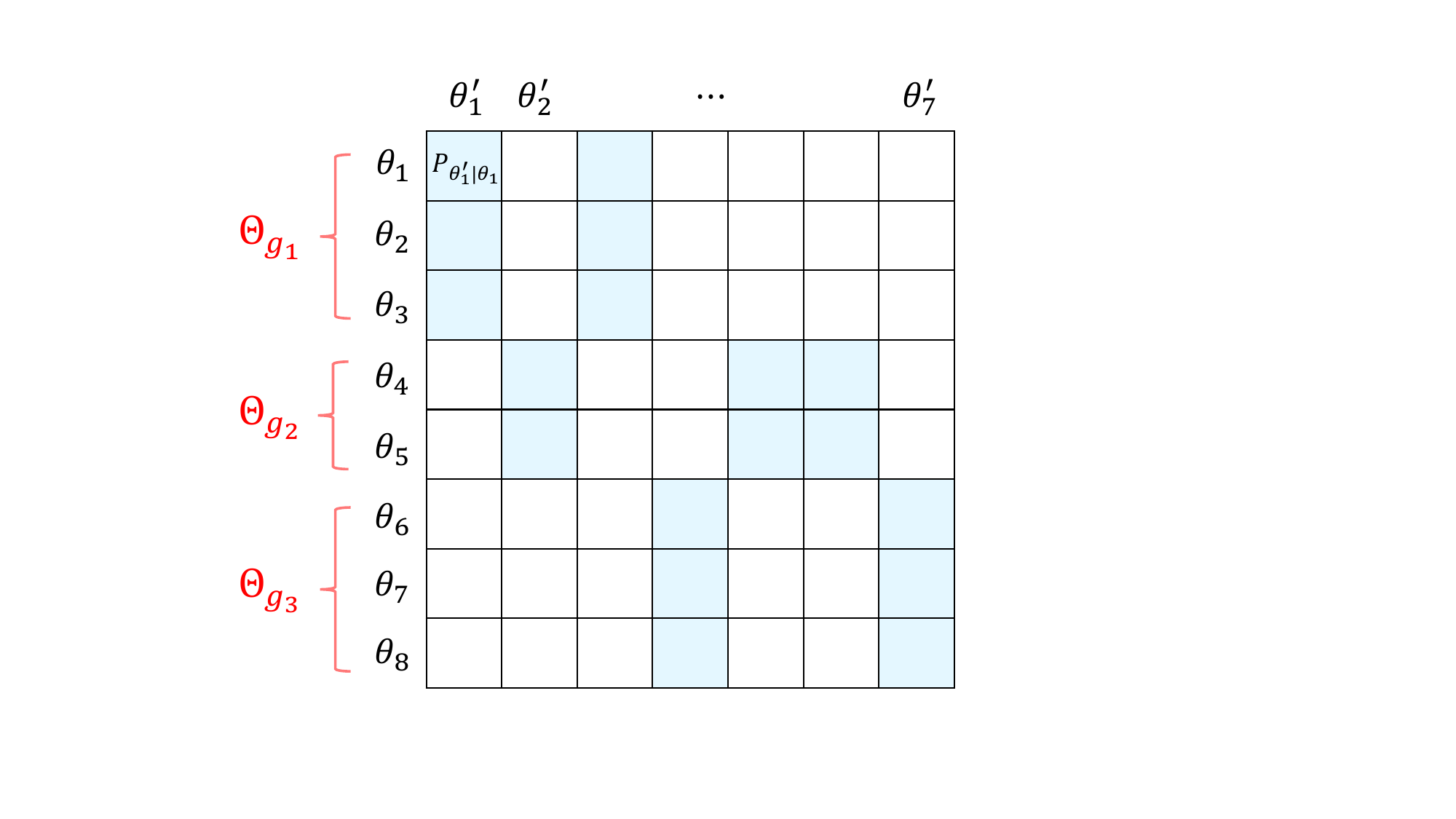}
\caption{Given a mechanism $\mech=\mathbb{P}_{\Theta'|\Theta}$, the left subfigure shows a policy matrix. For each column $j$, the red outlined region indicates rows of parameters with secret $\secretrv$ maximizing $\mathbb{P}_{\Theta'|\Theta}\bra{\theta'_j|\theta_\secretrv}$. The blue cell lies in the row of $\theta_\secretrv$. SML calculation can be converted to an edge cost flow problem (top right) or a min-cost flow problem with conflict constraints (bottom right). Both constructed directed graphs contains three columns of nodes (representing $G, \Theta, \Theta'$ respectively) between the source and sink nodes.}
\label{fig:network_flow}
\end{figure}

\emph{Edge cost flow problem.} 
Edge cost flow problem is a type of min-cost-max-flow problem, where the cost $c$ of an edge satisfies
\begin{align*}
c = 
\begin{cases}
    b & f > 0\\
    0 & f = 0
\end{cases}.
\end{align*}
$f$ is the flow through the edge and $b$ is a constant. This cost function is different from the most common min cost max flow problem, where $c = bf$. 

Edge cost flow problem is shown to be an NP-hard problem \cite[Problem ND32]{lewis1983michael}, %
and calculating SML can be converted as solving an edge cost flow problem.
We illustrate the conversion process in \cref{fig:network_flow}. To construct the network, we introduce the source and sink nodes, and create three columns of nodes between them. 

The first column is $\secretrv$-column, which contains all potential secret values ($\secretrv_1$ and $\secretrv_2$). The capacity of the edge between the source and the $\secretrv$-column node is $\brd{\releaseparamset}$ ($\brd{\releaseparamset}=2$ in \cref{fig:network_flow}), and the cost is $0$. %
Setting the each edge capacity as $\brd{\releaseparamset}$ ensures us to freely select the secret $\secretrv$ for each released parameter value $\theta'\in \releaseparamset$.

The second column is $\theta$-column, which contains all potential parameter value ($\theta_1$, $\theta_2$ and $\theta_3$ in \cref{fig:network_flow}). There will be an edge between node $N_{\secretrv_i}$ and $N_{\theta_j}$ if $\theta_j \in \paramset_{\secretrv_i}$. The capacity of each edge is $\brd{\releaseparamset}$ and the cost is $1$. This ensures that to achieve min-cost goal, all flows from a certain secret $\secretrv$ go to a same node $N_\theta$, which is in line with the SML calculation, i.e., $\mathbb{P}_{\Theta|G}\in\brc{0,1}$.

The third column is the $\theta'$-column, which contains all potential released parameter value ($\theta'_1$ and $\theta'_2$ in \cref{fig:network_flow}).
This column is fully connected with the second column as well as the sink node. To be in line with the privacy calculation, the capacity of the edge between $N_{\theta_i}$ and $N_{\theta'_j}$ column is set as $1$ and the cost is set as $-\mathbb{P}\bra{\theta'_{j}|\theta_i}$. The capacity of the edge between the sink and the third column is $1$ and the cost is $0$, which ensures that the flow that goes through a node in $\theta'$-column is from a single node in $\secretrv$-column. %

Additionally, we fully connect the $\theta$ column and the sink, and set the capacity and cost as $\brd{\releaseparamset}$ and $0$ respectively, which are shown in red edges in \cref{fig:network_flow}.

The max flow of the network is $\brd{\secretvalueset}\cdot \brd{\releaseparamset}$. To achieve the min-cost-max-flow goal, every node in $\theta'$-column has non-zero flow going through it, since only the edge cost between $\theta$ and $\theta'$ columns can be negative. Recall that the flow going through a node in $\theta'$-column is from a single node in $\secretrv$-column, and to achieve the min-cost goal, all flows from a node in $\secretrv$-column will go through only one node in $\theta$-column, denoted as $N_{\theta_{\bra{\secretrv}}}$. Let $\mincost$ be the total cost of the min-cost-max-flow under the network we construct. Similar to the proof of \cref{prop:min-cost-flow}, we can get that
\begin{align*}
    \mincost = \inf_{N_{\theta_{\bra{\secretrv}}}: \theta_{\bra{\secretrv}}\in\secretset, \secretrv\in\secretvalueset} \sum_{\theta'\in\releaseparamset} \inf_{\secretrv\in \secretvalueset} \bra{1-\mathbb{P}\bra{\theta'|\theta_{\bra{\secretrv}}}} = 
    \brd{\releaseparamset}-\sup_{\mathbb{P}_{\Theta|G}\in\brc{0,1}} \sum_{\theta'\in\releaseparamset} \sup_{\secretrv\in \secretvalueset} \mathbb{P}_{\Theta'|\Theta}\bra{\theta'|\theta_\secretrv}= \brd{\releaseparamset}-exp\bra{\sml},
\end{align*}
i.e., $\log\bra{\brd{\releaseparamset}-\mincost}=\sml$. Therefore, we can convert the calculation of SML as the edge cost flow problem.

\cite{krumke1999flow} provides an approximation algorithm for the edge cost flow problem, with approximation ratio $1+\rho$ ($\rho>0$) and running time polynomial in $m$ and $1/\rho$, where $m$ represents the number of edges in the network. Therefore, there exists an approximation algorithm for SML calculation with approximation ratio $1+\rho$ ($\rho>0$) and running time polynomial in $\brd{\Theta}\cdot\brd{\Theta'}$ and $1/\rho$.

\emph{Min-cost flow problem with conflict constraints.} 
We can also convert the calculation process as the min-cost flow problem with conflict constraints, which is also proven to be NP-hard \cite{csuvak2021minimum}. %
The difference of the constructed network with the previous one is that (1) the edge cost between $g$-column and $\theta$-column is $0$. (2) A conflict constraint applies to edges between $g$-column and $\theta$-column: each $g$-node can only send flow to one $\theta$-node. (3) There is no red edges between $\theta$-column and the sink. Similarly, we can also conclude that the calculation of SML can be converted as the min-cost flow problem with conflict constraints.
\cite{csuvak2021minimum} provides an exact solution (with running time exponential to the number of edges) for the min-cost flow problem with conflict constraints.
\end{proof}

\section{Proof of \cref{thm:composition}}
\label{section:proof_composition}
\composition*

\begin{proof}
We first provide \Cref{prop:sml_composed}, which demonstrates how to compute the \privname of an adaptive composition of mechanisms. 

\begin{lemma}
\label{prop:sml_composed}
Suppose there are $\mechnum$ adaptive \datamechanisms, %
and each mechanism $\mech_\indexi$ is defined by $\adaptivemech{\indexi}$, which is adaptively designed given the original data distribution, previous mechanisms $\brc{\mech_k}_{k\in\brb{\indexi-1}}$, and previous outputs $\brc{\Theta'^{(k)}}_{k\in\brb{\indexi-1}}$. Let $\releaseparametervector$ be the sequence of the released distribution parameter, i.e., $\releaseparametervector\in \releaseparamset^{(1)}\times \releaseparamset^{(2)}\times\cdots \releaseparamset^{(\mechnum)}$. The statistic maximal leakage of the adaptively composed mechanism $\boldsymbol{\mech}=\mech_1 \circ \mech_2\circ  \ldots \circ \mech_m$ is
\begin{align*}
&\privacynotation_{\boldsymbol{\mech},\secretnotation} = \sup_{\mathbb{P}_{\Theta|G}\in\brc{0,1}}\log\sum_{\releaseparametervector} \sup_{\secretrv\in \secretvalueset} \prod_{\indexi\in [\mechnum]} \adaptivemech{\indexi}\bra{\theta'^{(\indexi)}|\theta_\secretrv, \brc{\mech_k, \theta'^{(k)}}_{k\in\brb{\indexi-1}}},
\end{align*}
\normalsize
where $\theta_\secretrv$ satisfies $\mathbb{P}_{\Theta|G}\bra{\theta_\secretrv|\secretrv}=1, \forall \secretrv\in\secretvalueset,$ under a prior with $\mathbb{P}_{\Theta|G}\in\brc{0,1}$.
\end{lemma}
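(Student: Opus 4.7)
The plan is to treat the adaptively composed mechanism $\boldsymbol{\mech}$ as a single data release mechanism whose output is the tuple $\releaseparametervector = (\Theta'^{(1)}, \ldots, \Theta'^{(m)})$ taking values in $\releaseparamset^{(1)} \times \cdots \times \releaseparamset^{(m)}$, and then apply \cref{prop:sml_calculation} directly to this combined mechanism. Since \cref{prop:sml_calculation} characterizes SML for \emph{any} conditional distribution $\mathbb{P}_{\Theta'|\Theta}$, once we identify the composed mechanism's conditional distribution, the formula follows mechanically.

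First I would apply the chain rule of probability to decompose the joint conditional distribution of $\releaseparametervector$ given $\Theta = \theta$ as
\begin{align*}
\mathbb{P}_{\releaseparametervector|\Theta}\bra{\releaseparametervector \mid \theta}
= \prod_{\indexi \in [\mechnum]} \mathbb{P}_{\Theta'^{(\indexi)} \mid \Theta, \Theta'^{(1)}, \ldots, \Theta'^{(\indexi-1)}}\bra{\theta'^{(\indexi)} \mid \theta, \theta'^{(1)}, \ldots, \theta'^{(\indexi-1)}}.
\end{align*}
Next, I would argue that because the mechanism $\mech_\indexi$ used at step $\indexi$ is determined by $\theta$ and the prior outputs $\theta'^{(1)}, \ldots, \theta'^{(\indexi-1)}$, each factor equals $\adaptivemech{\indexi}\bra{\theta'^{(\indexi)} \mid \theta, \brc{\mech_k, \theta'^{(k)}}_{k \in [\indexi-1]}}$ by definition of the adaptive protocol. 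Substituting this factorization into the general SML expression from \cref{prop:sml_calculation}, namely
\begin{align*}
\privacynotation_{\boldsymbol{\mech},\secretnotation} = \sup_{\mathbb{P}_{\Theta|G}\in\brc{0,1}}\log \sum_{\releaseparametervector} \sup_{\secretrv\in \secretvalueset} \mathbb{P}_{\releaseparametervector|\Theta}\bra{\releaseparametervector \mid \theta_\secretrv},
\end{align*}
yields the claimed formula.

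The main obstacle is justifying the factorization carefully in the presence of adaptive choice of mechanisms. In particular, the object $\mech_\indexi$ is itself random through its dependence on the history, so one must be careful that, conditional on the realized history $(\theta, \theta'^{(1)}, \ldots, \theta'^{(\indexi-1)})$, the mechanism $\mech_\indexi$ is effectively a deterministic function of that history, and hence $\Theta'^{(\indexi)}$ is conditionally independent of all other variables given this history. Once this Markov-like structure is established, the rest of the argument is a direct invocation of \cref{prop:sml_calculation} treating $\releaseparametervector$ as the single output parameter and $\releaseparamset^{(1)} \times \cdots \times \releaseparamset^{(\mechnum)}$ as the output parameter set, with no new optimization or combinatorial work required.
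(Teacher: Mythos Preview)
Your proposal is correct and follows essentially the same approach as the paper: view the composed mechanism as a single mechanism with output $\releaseparametervector$, apply the characterization of SML from \cref{prop:sml_calculation} (the paper cites the intermediate equation \eqref{eqn:sml_variations} from its proof, which is the same content), and factorize $\mathbb{P}_{\releaseparametervector|\Theta}$ via the chain rule. The paper's proof is terser and does not dwell on the Markov-structure justification you flag, but the argument is identical in substance.
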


\begin{proof}
Based on \cref{eqn:sml_variations}, we have
\begin{align*}
\privacynotation_{\boldsymbol{\mech},\secretnotation} &= \sup_{\paramdistribution} \log\sum_{\releaseparametervector} \sup_{\secretrv\in \secretvalueset}\mathbb{P}_{\theta'^{(1)},\cdots,\theta'^{(\mechnum)}|G}\bra{\boldsymbol{\theta'}|\secretrv}\\
&= \sup_{\paramdistribution}\log \sum_{\releaseparametervector} \sup_{\secretrv\in \secretvalueset}
    \sum_{\theta\in \mathbf{\Theta}}
    \mathbb{P}_{\theta'^{(1)},\cdots,\theta'^{(\mechnum)}|G\Theta}\bra{\releaseparametervector|\secretrv, \theta}\cdot\mathbb{P}_{\Theta|G}\bra{\theta|\secretrv}\\
&= \sup_{\mathbb{P}_{\Theta|G}\in\brc{0,1}} \log\sum_{\releaseparametervector} \sup_{\secretrv\in \secretvalueset} \prod_{\indexi\in [\mechnum]}\adaptivemech{\indexi}\bra{\theta'^{(\indexi)}|\theta_\secretrv, \brc{\mech_k, \theta'^{(k)}}_{k\in\brb{\indexi-1}}}.
\end{align*}
\normalsize

\end{proof}

From that result, we know that the calculation of the \privname{} of adaptively composed mechanism is similar to the process illustrated in \cref{fig:calculation}, except that the policy matrix is changed as $\prod_{\indexi\in [\mechnum]}\adaptivemech{\indexi}$ and the released set of parameters is changed as $\releaseparamset^{(1)}\times \releaseparamset^{(2)}\times\cdots \releaseparamset^{(\mechnum)}$.
Using \cref{prop:sml_composed}, we get that
\begin{align*}
\privacynotation_{\boldsymbol{\mech},\secretnotation} &= \sup_{\mathbb{P}_{\Theta|G}\in\brc{0,1}}\log\sum_{\releaseparametervector} \sup_{\secretrv\in \secretvalueset} \prod_{\indexi\in [\mechnum]}\\
&\quad\quad\quad\adaptivemech{\indexi}\bra{\theta'^{(\indexi)}|\theta_\secretrv, \brc{\mech_k, \theta'^{(k)}}_{k\in\brb{\indexi-1}}}\\
&\quad\leq \sup_{\mathbb{P}_{\Theta|G}\in\brc{0,1}}\log\sum_{\releaseparametervector} \prod_{\indexi\in [\mechnum]}\sup_{\secretrv\in \secretvalueset} \\
&\quad\quad\quad\adaptivemech{\indexi}\bra{\theta'^{(\indexi)}|\theta_\secretrv, \brc{\mech_k, \theta'^{(k)}}_{k\in\brb{\indexi-1}}}\\
&\quad\leq \sup_{\mathbb{P}_{\Theta|G}\in\brc{0,1}}\log\prod_{\indexi\in [\mechnum]}\sum_{\brb{\parameterdistribution'^{(1)},\cdots,\parameterdistribution'^{(i)}}\in \releaseparamset^{(1)}\times\cdots\releaseparamset^{(i)}}\sup_{\secretrv\in \secretvalueset}\\
&\quad\quad\quad\adaptivemech{\indexi}\bra{\theta'^{(\indexi)}|\theta_\secretrv, \brc{\mech_k, \theta'^{(k)}}_{k\in\brb{\indexi-1}}}\\
&\quad\leq \sum_{\indexi\in [\mechnum]}\sup_{\mathbb{P}_{\Theta|G}\in\brc{0,1}}\log \sum_{\brb{\parameterdistribution'^{(1)},\cdots,\parameterdistribution'^{(i)}}\in \releaseparamset^{(1)}\times\cdots\releaseparamset^{(i)}} \sup_{\secretrv\in \secretvalueset}\\
&\quad\quad\quad\adaptivemech{\indexi}\bra{\theta'^{(\indexi)}|\theta_\secretrv, \brc{\mech_k, \theta'^{(k)}}_{k\in\brb{\indexi-1}}}\\
&\quad= \sum_{\indexi \in [\mechnum]} \privacynotation_{\mech_\indexi,\secretnotation}.
\end{align*}
\normalsize
\end{proof}

\section{Proof of \cref{thm:post-processing}}
\label{section:proof_post-processing}
\post*

\begin{proof}
Based on \cref{prop:sml_calculation}, we can get that
\begin{align*}
\privacynotation_{\widetilde{\mech}\circ\mech,\secretnotation}
&= \sup_{\mathbb{P}_{\Theta|G}\in\brc{0,1}}\sum_{\theta''\in\postprocessparamset} \sup_{\secretrv\in \secretvalueset} 
\mathbb{P}_{\Theta''|\Theta}\bra{\theta''|\theta_\secretrv}\\
&= \hspace{-1mm} \sup_{\mathbb{P}_{\Theta|G}\in\brc{0,1}}\sum_{\theta''\in\postprocessparamset} \sup_{\secretrv\in \secretvalueset}\sum_{\parameterdistribution'\in\releaseparamset} 
\mathbb{P}_{\Theta''|\Theta'}\bra{\theta''|\theta'}\cdot\mathbb{P}_{\Theta'|\Theta}\bra{\theta'|\theta_\secretrv}\\
&\leq \hspace{-1mm} \sup_{\mathbb{P}_{\Theta|G}\in\brc{0,1}}\sum_{\theta''\in\postprocessparamset} \sum_{\parameterdistribution'\in\releaseparamset}\mathbb{P}_{\Theta''|\Theta'}\bra{\theta''|\theta'}\cdot\sup_{\secretrv\in \secretvalueset} 
\mathbb{P}_{\Theta'|\Theta}\bra{\theta'|\theta_\secretrv}\\
&= \hspace{-1mm} \sup_{\mathbb{P}_{\Theta|G}\in\brc{0,1}}\sum_{\parameterdistribution'\in\releaseparamset}\sum_{\theta''\in\postprocessparamset}\mathbb{P}_{\Theta''|\Theta'}\bra{\theta''|\theta'}\cdot\sup_{\secretrv\in \secretvalueset} 
\mathbb{P}_{\Theta'|\Theta}\bra{\theta'|\theta_\secretrv}\\
&= \sup_{\mathbb{P}_{\Theta|G}\in\brc{0,1}}\sum_{\parameterdistribution'\in\releaseparamset}\sup_{\secretrv\in \secretvalueset} 
\mathbb{P}_{\Theta'|\Theta}\bra{\theta'|\theta_\secretrv}\\
&= \privacynotation_{\mech,\secretnotation}.
\end{align*}
\normalsize
\end{proof}
\section{Proof of \cref{prop:mech_tradeoff}}
\label{sec:proof_mech_tradeoff}

\privacydistortionrr*

\begin{proof}

 (Privacy and Distortion of Randomized Response)
We first analyze the \privname{} of the Randomized Response with hyperparameter $\epsilon$. %
Based on \cref{prop:sml_calculation}, we can get that
\begin{align*}
\privacyrr &= \sup_{\mathbb{P}_{\Theta|G}\in\brc{0,1}}\log \sum_{\theta'\in\releaseparamset} \sup_{\secretrv\in \secretvalueset} \mathbb{P}_{\Theta'|\Theta}\bra{\theta'|\theta_\secretrv}\\
&= \sup_{\mathbb{P}_{\Theta|G}\in\brc{0,1}}\log \Bigg(\sum_{\theta'\in\brc{\theta_\secretrv}_{\secretrv\in \secretvalueset}} \sup_{\secretrv\in \secretvalueset} \mathbb{P}_{\Theta'|\Theta}\bra{\theta'|\theta_{\secretrv}} +\sum_{\theta'\in 
\releaseparamset\setminus\brc{\theta_\secretrv}_{\secretrv\in \secretvalueset}} \sup_{\secretrv\in \secretvalueset} \mathbb{P}_{\Theta'|\Theta}\bra{\theta'|\theta_{\secretrv}}\Bigg)\\
&= \log\bra{\brd{\secretvalueset}\cdot\frac{e^{\epsilon}}{\brd{\releaseparamset}+e^{\epsilon}-1} + \bra{\brd{\releaseparamset}-\brd{\secretvalueset}}\cdot\frac{1}{\brd{\releaseparamset}+e^{\epsilon}-1}}. 
\end{align*}
\normalsize

Denote $\rrratio = \frac{e^{\epsilon}-1}{\brd{\releaseparamset}}$.
Since $\brd{\secretvalueset}=\secretnum$ and $\brd{\releaseparamset}=\brd{\paramset} = \binom{\samplenum+\categoryleast-1}{\categoryleast-1}$, we have 
$$\privacyrr = \log\bra{\frac{\binom{\samplenum+\categoryleast-1}{\categoryleast-1}+\secretnum\bra{e^{\epsilon}-1}}{\binom{\samplenum+\categoryleast-1}{\categoryleast-1}+e^{\epsilon}-1}} = \log\bra{\frac{1+\secretnum\rrratio}{1+\rrratio}}.
$$
\normalsize

We then analyze the distortion of Randomized Response. For the categorical distribution with parameter $\parameterdistribution$, denote $\mass{\parameterdistribution_{i}}$ as the probability mass of the $i$-th category, where $i\in \brb{\categoryleast}$. 
From \cref{lemma:distortion}, we have $\distortionnotation = \sup_{\theta} \mathbb{E}_{\theta'}[\distanceof{\privatedistribution} {\releasedistribution}]$. Let $\parameterdistribution^* = \arg\sup_{\theta} \mathbb{E}_{\theta'}[\distanceof{\privatedistribution} {\releasedistribution}]$, we first prove by contradiction that $\parameterdistribution^*$ satisfies that $\exists i\in \brb{\categoryleast}:~ \mass{\parameterdistribution^*_i} = 1$ as follows.

Suppose $\max_{i\in\brb{\categoryleast}}\mass{\parameterdistribution^*_i}= \mass{\parameterdistribution^*_j}<1$, then we have $\mass{\parameterdistribution^*_j}\leq 1-\frac{1}{\samplenum}$ and there exist another category $k\not= j$ such that $\mass{\parameterdistribution^*_j}\geq \mass{\parameterdistribution^*_k}\geq\frac{1}{\samplenum}$. We can construct another categorical distribution parameter $\tilde{\parameterdistribution}$ where $\mass{\tilde{\parameterdistribution}_j} = \mass{\parameterdistribution^*_j} + \frac{1}{\samplenum}, \mass{\tilde{\parameterdistribution}_k} = \mass{\parameterdistribution^*_k} - \frac{1}{\samplenum}$, and $\mass{\tilde{\parameterdistribution}_i} = \mass{\parameterdistribution^*_i}, \forall i\in\brb{\samplenum}\setminus\brc{j,k}$. We can get that 
\begin{align*}
&\distanceof{\distributionof{X_{\parameterdistribution^*}}} {\releasedistribution} = 
\begin{cases}
\distanceof{\distributionof{X_{\tilde{\parameterdistribution}}}} {\releasedistribution}+\frac{2}{\samplenum}, & \mass{\parameterdistribution'_{j}} \geq \mass{\tilde{\parameterdistribution}_j}, \mass{\parameterdistribution'_{k}} \leq \mass{\tilde{\parameterdistribution}_k},\\
\distanceof{\distributionof{X_{\tilde{\parameterdistribution}}}} {\releasedistribution}-\frac{2}{\samplenum}, & \mass{\parameterdistribution'_{j}} < \mass{\tilde{\parameterdistribution}_j}, \mass{\parameterdistribution'_{k}} > \mass{\tilde{\parameterdistribution}_k},\\
\distanceof{\distributionof{X_{\tilde{\parameterdistribution}}}} {\releasedistribution}, & \text{otherwise}.\\
\end{cases}
\end{align*}
For every $\parameterdistribution' \in \releaseparamset$ satisfying $\mass{\parameterdistribution'_{j}} \geq \mass{\tilde{\parameterdistribution}_j}, \mass{\parameterdistribution'_{k}} \leq \mass{\tilde{\parameterdistribution}_k}$, we can construct a unique $\parameterdistribution{''}$ as $\mass{\parameterdistribution''_{j}} = \mass{\parameterdistribution'_{k}}, \mass{\parameterdistribution''_{k}} = \mass{\parameterdistribution'_{j}}$, and $\mass{\parameterdistribution''_{i}} = \mass{\parameterdistribution'_{i}}, \forall i\in\brb{\samplenum}\setminus\brc{j,k}$. Since $\mass{\parameterdistribution''_{j}} = \mass{\parameterdistribution'_{k}}\leq \mass{\tilde{\parameterdistribution}_k} < \mass{\parameterdistribution^*_j}$ and $\mass{\parameterdistribution''_{k}} = \mass{\parameterdistribution'_{j}}\geq \mass{\tilde{\parameterdistribution}_j} > \mass{\parameterdistribution^*_k}$, we have $\distanceof{\distributionof{X_{\parameterdistribution^*}}} {\distributionof{Y_{\parameterdistribution''}}} = \distanceof{\distributionof{X_{\tilde{\parameterdistribution}}}} {\distributionof{Y_{\parameterdistribution''}}}-\frac{2}{\samplenum}$. Therefore, we can get that $\mathbb{E}_{\theta'}[\distanceof{\distributionof{X_{\parameterdistribution^*}}} {\releasedistribution}] \leq \mathbb{E}_{\theta'}[\distanceof{\distributionof{X_{\tilde{\parameterdistribution}}}} {\releasedistribution}]$, which contradicts with the definition of $\parameterdistribution^*: \parameterdistribution^*=\arg\sup_{\theta} \mathbb{E}_{\theta'}[\distanceof{\privatedistribution} {\releasedistribution}]$. Therefore, we can get that $\exists i\in \brb{\categoryleast}:~ \mass{\parameterdistribution^*_i} = 1$. Without loss of generality, let $\mass{\parameterdistribution^*_1} = 1$.

We can calculate the distortion as
\begin{align*}
\distortionrr 
&= \mathbb{E}_{\theta'=\mechrr\bra{\parameterdistribution^*}}[\distanceof{\distributionof{X_{\parameterdistribution^*}}} {\releasedistribution}]\\
&= \frac{\brd{\releaseparamset}}{\brd{\releaseparamset}+e^{\epsilon}-1}\cdot\sum_{i\in\brb{\samplenum}} \frac{i}{\samplenum}\cdot \probof{\mass{\parameterdistribution'_{1}}=1-\frac{i}{\samplenum}}\\
&= \frac{\brd{\releaseparamset}}{\brd{\releaseparamset}+e^{\epsilon}-1}\cdot \sum_{i\in\brb{\samplenum}} \frac{i}{\samplenum}\cdot \frac{\binom{i+\categoryleast-2}{\categoryleast-2}}{\brd{\releaseparamset}}\\
&= \frac{1}{\samplenum\binom{\samplenum+\categoryleast-1}{\categoryleast-1}+\samplenum\bra{e^{\epsilon}-1}}\cdot\sum_{i\in\brb{\samplenum}} i\cdot\binom{i+\categoryleast-2}{\categoryleast-2}\\
&=\frac{1}{\samplenum\binom{\samplenum+\categoryleast-1}{\categoryleast-1}+\samplenum\bra{e^{\epsilon}-1}}\cdot\sum_{i\in\brb{\samplenum}} \bra{\categoryleast-1}\cdot\binom{i+\categoryleast-2}{\categoryleast-1}\\
&= \frac{\bra{\categoryleast-1}\cdot\binom{\samplenum+\categoryleast-1}{\categoryleast}}{\samplenum\binom{\samplenum+\categoryleast-1}{\categoryleast-1}+\samplenum\bra{e^{\epsilon}-1}}\\
&= \frac{\categoryleast-1}{\categoryleast\bra{1+\rrratio}}.
\end{align*}

(Privacy and Distortion of Quantization Mechanism) For the privacy of the quantization mechanism, %
we can get that
\begin{align*}
\privacyquan &= \sup_{\mathbb{P}_{\Theta|G}\in\brc{0,1}}\log \sum_{\theta'\in\releaseparamset} \sup_{\secretrv\in \secretvalueset} \mathbb{P}_{\Theta'|\Theta}\bra{\theta'|\theta_\secretrv}\\
&= \sup_{\mathbb{P}_{\Theta|G}\in\brc{0,1}} \hspace{-1mm} \log\hspace{-1mm} \sum_{k \in \brc{0, 1, \cdots, \left\lceil \frac{\secretnum}{\intervallen}\right\rceil-1}}\sum_{\theta'\in\releaseparamset_{\midpoint{k}}} \sup_{\secretrv\in \secretvalueset} \mathbb{P}_{\Theta'|\Theta}\bra{\theta'|\theta_{\secretrv}}\\
&= \sup_{\mathbb{P}_{\Theta|G}\in\brc{0,1}} \log\sum_{k \in \brc{0, 1, \cdots, \left\lceil \frac{\secretnum}{\intervallen}\right\rceil-1}}\sum_{\theta'\in\releaseparamset_{\midpoint{k}}} \frac{1}{\brd{\releaseparamset_{\midpoint{k}}}}\\
&= \log\ceil{ \frac{\secretnum}{\intervallen}}.
\end{align*}

For the distortion analysis, we consider the case where the secret is the fraction of a category, i.e., $\secretofparam =\mass{\parameterdistribution_i}, i\in \brb{\categoryleast}$. Without loss of generality, let $\secretofparam = \mass{\parameterdistribution_1}$. The number of possible secret values is $\secretnum = \samplenum+1$ as $\secretnum \in \brc{0, 1/\samplenum, 2/\samplenum, \cdots, 1}$, and let $\secretrv_{l}=\frac{l-1}{\samplenum}, \forall l\in\brb{\samplenum+1}$. Let $\parameterdistribution^* = \arg\sup_{\theta} \mathbb{E}_{\theta'}[\distanceof{\privatedistribution} {\releasedistribution}]$, with the same distortion analysis of Randomized Response, we can get that $\parameterdistribution^*$ satisfies that $\exists j\in\brb{\categoryleast}\setminus\brc{1}:~ \mass{\parameterdistribution^*_j} = 1-\mass{\parameterdistribution^*_1}$. Without loss of generality, let $\mass{\parameterdistribution^*_2} = 1-\mass{\parameterdistribution^*_1}$. %
Let $\tilde{\parameterdistribution}$ be the distribution parameter satisfying $\mass{\tilde{\parameterdistribution}_1} = \secretrv_{k\intervallen+j}$, where $k \in \brc{0, 1, \cdots, \left\lceil \frac{\secretnum}{\intervallen}\right\rceil-1}, j\in \brb{\intervallen}$, and $\mass{\tilde{\parameterdistribution}_2} = 1 - \mass{\tilde{\parameterdistribution}_1}$. We can get that
\begin{align*}
\mathbb{E}_{\theta'=\mechquan\bra{\tilde{\parameterdistribution}}}&[\distanceof{\distributionof{X_{\tilde{\parameterdistribution}}}} {\releasedistribution}]\\
&= \frac{1}{2}\brb{\frac{\brd{\midpoint{k}-k\intervallen-j}}{\samplenum} +  \sum_{i\in\brb{\samplenum+1}\setminus\brb{\midpoint{k}}}\frac{\brd{i-k\intervallen-j}+i-\midpoint{k}-1}{\samplenum}\probof{\mass{\parameterdistribution'_2}=1-\frac{i-1}{\samplenum}}}\\
&\overset{1}{\leq} \frac{1}{2}\brb{\frac{\brd{\midpoint{k}-k\intervallen-1}}{\samplenum} +  \sum_{i\in\brb{\samplenum+1}\setminus\brb{\midpoint{k}}}\frac{\brd{i-k\intervallen-1}+i-\midpoint{k}-1}{\samplenum}\probof{\mass{\parameterdistribution'_2}=1-\frac{i-1}{\samplenum}}}\\
&= \frac{\brd{\midpoint{k}-k\intervallen-1}}{2\samplenum} + \sum_{i\in\brb{\samplenum+1}\setminus\brb{\midpoint{k}}}\frac{\brd{i-k\intervallen-1}+i-\midpoint{k}-1}{2\samplenum}\frac{\binom{i-\midpoint{k}+\categoryleast-3}{\categoryleast-3}}{\binom{\samplenum+1-\midpoint{k}+\categoryleast-2}{\categoryleast-2}}\\
&\overset{2}{\leq} \frac{\brd{\midpoint{0}-1}}{2\samplenum} + \sum_{i\in\brb{\samplenum+1}\setminus\brb{\midpoint{0}}}\frac{\brd{i-1}+i-\midpoint{0}-1}{2\samplenum}\frac{\binom{i-\midpoint{0}+\categoryleast-3}{\categoryleast-3}}{\binom{\samplenum+1-\midpoint{0}+\categoryleast-2}{\categoryleast-2}}\\
&= \frac{1}{2\samplenum}\floor{\frac{\intervallen}{2}} + \frac{1}{2\samplenum\binom{\samplenum-\floor{\frac{\intervallen}{2}}+\categoryleast-2}{\categoryleast-2}}\sum_{l\in\brb{n-\floor{\frac{\intervallen}{2}}}\cup\brc{0}} \bra{\floor{\frac{\intervallen}{2}}+2l-1}\binom{l+\categoryleast-3}{\categoryleast-3}\\
&= \frac{1}{2\samplenum}\floor{\frac{\intervallen}{2}} + \frac{1}{2\samplenum\binom{\samplenum-\floor{\frac{\intervallen}{2}}+\categoryleast-2}{\categoryleast-2}}\cdot \Bigg[\floor{\frac{\intervallen}{2}}\cdot \binom{\samplenum-\floor{\frac{\intervallen}{2}}+\categoryleast-2}{\categoryleast-2}+\sum_{l\in\brb{n-\floor{\frac{\intervallen}{2}}}}\bra{\categoryleast-2}\binom{l+\categoryleast-3}{\categoryleast-2}\Bigg]\\
&= \frac{1}{\samplenum}\cdot\floor{\frac{\intervallen}{2}} + \frac{\bra{\categoryleast-2}\binom{\samplenum-\floor{\frac{\intervallen}{2}}+\categoryleast-2}{\categoryleast-1}}{2\samplenum\binom{\samplenum-\floor{\frac{\intervallen}{2}}+\categoryleast-2}{\categoryleast-2}}\\
&= \frac{1}{2}+\frac{\categoryleast\floor{\frac{\intervallen}{2}}-\samplenum}{2\samplenum\bra{\categoryleast-1}},
\end{align*}
where $\overset{1}{\leq}$ achieves `$=$' when $j=1$, and $\overset{2}{\leq}$ achieves `$=$' when $k=0$. Therefore, we can get that $$\distortionquan\hspace{-1mm}  
= \mathbb{E}_{\theta'=\mechquan\bra{\parameterdistribution^*}}[\distanceof{\distributionof{X_{\parameterdistribution^*}}} {\releasedistribution}] = \frac{1}{2}+\frac{\categoryleast\floor{\frac{\intervallen}{2}}-\samplenum}{2\samplenum\bra{\categoryleast-1}}.$$

(Mechanism Comparison) Under the case where the secret is the fraction of a category, we compare the distortion of quantization and Randomized Response when they achieve the same non-trivial \privname{}, i.e., $\privacyquan=\privacyrr<\log \secretnum$. Under the same \privname{}, we have
$
\log\frac{1+\secretnum\rrratio}{1+\rrratio} = \log\ceil{ \frac{\secretnum}{\intervallen}}
$, and therefore, 
\begin{align*}
\rrratio = \frac{\secretnum-1}{\secretnum-\ceil{ \frac{\secretnum}{\intervallen}}}-1 \leq \frac{\secretnum-1}{\secretnum-\frac{\secretnum}{\intervallen}-1}-1.
\end{align*}
For the distortion of Randomized Response, we have 
\begin{align*}
\distortionrr = \frac{\categoryleast-1}{\categoryleast\bra{1+\rrratio}} \geq \frac{\bra{\intervallen\secretnum-\secretnum-\intervallen}\bra{\categoryleast-1}}{\intervallen\categoryleast\bra{\secretnum-1}}.
\end{align*}

Since $\secretnum=\samplenum+1$, we can get that 
\begin{align*}
\frac{\distortionrr}{\distortionquan} 
&\geq \frac{\bra{\intervallen\secretnum-\secretnum-\intervallen}\bra{\categoryleast-1}}{\intervallen\categoryleast\bra{\secretnum-1}} \Big / \bra{\frac{1}{2}+\frac{\categoryleast\intervallen-2\samplenum}{4\samplenum\bra{\categoryleast-1}}}\\
&= \frac{4\bra{\categoryleast-1}^2\bra{\intervallen\samplenum-\samplenum-1}}{\brb{\categoryleast\intervallen+2\samplenum\bra{\categoryleast-2}}\intervallen\categoryleast}.
\end{align*}

When $\lim_{\samplenum\rightarrow\infty}\intervallen=\infty$, since $\intervallen\leq \secretnum=\samplenum+1$, we have
\begin{align*}
\lim_{\samplenum\rightarrow\infty} \frac{\distortionrr}{\distortionquan} &\geq \frac{4\samplenum\bra{\categoryleast-1}^2}{\brb{\categoryleast\intervallen+2\samplenum\bra{\categoryleast-2}}\categoryleast}\\
&\geq \frac{4\bra{\categoryleast-1}^2}{\bra{3\categoryleast-4}\categoryleast}\\
&= 1+\frac{\bra{\categoryleast-2}^2}{3\bra{\categoryleast}^2-4\categoryleast}\\
&\geq 1.
\end{align*}

When $\lim_{\samplenum\rightarrow\infty}\intervallen<\infty$, to achieve non-trivial \privname{}, $\intervallen$ should satisfy $\log\ceil{\frac{\secretnum}{\intervallen}}<\log \secretnum$, i.e., $\intervallen>1$. We have
\begin{align*}
\lim_{\samplenum\rightarrow\infty} \frac{\distortionrr}{\distortionquan} &\geq \frac{4\samplenum\bra{\categoryleast-1}^2}{2\samplenum\bra{\categoryleast-2}\categoryleast}\cdot \frac{\intervallen-1}{\intervallen}\\
&\geq \frac{\bra{\categoryleast-1}^2}{\categoryleast\bra{\categoryleast-2}}\\
&= 1+\frac{1}{\categoryleast\bra{\categoryleast-2}}\\
&> 1.
\end{align*}

Hence, we can get that when secret is the fraction of a category, for any non-trivial privacy budget $T<\log\secretnum= \log\bra{\samplenum+1}$, when $\privacyquan=\privacyrr\leq T$, we have 
$$
\lim_{\samplenum\rightarrow\infty} \frac{\distortionrr}{\distortionquan} \geq 1.
$$

\end{proof}
\section{Proof of \cref{prop:robustness_tabular}}
\label{section:proof_robustness_tabular}

\robustnesstabular*

\begin{proof}

Denote $\releaseparamsetall$ as the set that contains all the distribution parameters whose support is or is within $\combinationsetestimate\cup\combinationsetall$, and $\releaseparamsetori$ as the set of parameters whose support is or is within $\combinationsetestimate$. Let $\beta\bra{\parameterdistribution'}$ %
be the number of categories in $\combinationsetall\setminus\combinationsetestimate$ that have non-zero probability mass in $\parameterdistribution'$, i.e., $\beta\bra{\parameterdistribution'} = \brd{\support{\parameterdistribution'}\setminus\combinationsetestimate}$. We analyze the robustness of \rr{} and \qm{} to support mismatch correspondingly as follows.

\subsubsection{Randomized Response}

For Randomized Response with hyperparameter $\epsilon$, from \cref{prop:mech_tradeoff}, we know that
\begin{align*}
\privacyrr^* = \log\frac{\binom{\samplenum+\categoryactual-1}{\categoryactual-1}+\secretnum\bra{e^{\epsilon}-1}}{\binom{\samplenum+\categoryactual-1}{\categoryactual-1}+e^{\epsilon}-1}
\geq \log\frac{\binom{\samplenum+\categoryactual+\categorynoise-1}{\categoryactual+\categorynoise-1}+\secretnum\bra{e^{\epsilon}-1}}{\binom{\samplenum+\categoryactual+\categorynoise-1}{\categoryactual+\categorynoise-1}+e^{\epsilon}-1}.
\end{align*}

For $\privacyrr$, we can get that
\begin{align*}
\privacyrr&= \sup_{\mathbb{P}_{\Theta|G}\in\brc{0,1}}  \log\sum_{\theta'\in\releaseparamsetall} \sup_{\secretrv\in \secretvalueset} \mathbb{P}_{\Theta'|\Theta}\bra{\theta'|\theta_{\secretrv}}\\
&= \sup_{\mathbb{P}_{\Theta|G}\in\brc{0,1}}  \log\Bigg(\sum_{\theta'\in\brc{\theta_\secretrv}_{\secretrv\in \secretvalueset}}\sup_{\secretrv\in \secretvalueset} \mathbb{P}_{\Theta'|\Theta}\bra{\theta'|\theta_{\secretrv}}+\sum_{\theta'\in 
\releaseparamsetall\setminus\brc{\theta_\secretrv}_{\secretrv\in \secretvalueset}} \sup_{\secretrv\in \secretvalueset} \mathbb{P}_{\Theta'|\Theta}\bra{\theta'|\theta_{\secretrv}}\Bigg) \\
&\leq \sup_{\mathbb{P}_{\Theta|G}\in\brc{0,1}}  \log\Bigg(\sum_{\theta'\in \brc{\theta_\secretrv}_{\secretrv\in \secretvalueset}} \frac{ e^{\epsilon}}{\binom{\samplenum+\categoryleast+\categorynoise+\beta\bra{\parameterdistribution'}-1}{\categoryleast+\categorynoise+\beta\bra{\parameterdistribution'}-1}+e^{\epsilon}-1}+\sum_{\theta'\in 
\releaseparamsetall\setminus\brc{\theta_\secretrv}_{\secretrv\in \secretvalueset}} \frac{1}{\binom{\samplenum+\categoryleast+\categorynoise+\beta\bra{\parameterdistribution'}-1}{\categoryleast+\categorynoise+\beta\bra{\parameterdistribution'}-1}+e^{\epsilon}-1}\Bigg)\\
&\leq \sup_{\mathbb{P}_{\Theta|G}\in\brc{0,1}}   \log\Bigg(\frac{\secretnum \bra{e^{\epsilon}-1}}{\binom{\samplenum+\categoryleast+\categorynoise-1}{\categoryleast+\categorynoise-1}+e^{\epsilon}-1}+\sum_{\theta'\in 
\releaseparamsetall} \frac{1}{\binom{\samplenum+\categoryleast+\categorynoise+\beta\bra{\parameterdistribution'}-1}{\categoryleast+\categorynoise+\beta\bra{\parameterdistribution'}-1}+e^{\epsilon}-1}\Bigg)\\
&= \sup_{\mathbb{P}_{\Theta|G}\in\brc{0,1}}   \log\Bigg(\frac{\secretnum \bra{e^{\epsilon}-1}}{\binom{\samplenum+\categoryleast+\categorynoise-1}{\categoryleast+\categorynoise-1}+e^{\epsilon}-1}+\sum_{i\in\brc{0}\cup\brb{\categoryactual-\categoryleast}} \frac{\binom{\categoryactual-\categoryleast}{i}\binom{\samplenum-i+\categoryleast+\categorynoise+i-1}{\categoryleast+\categorynoise+i-1}}{\binom{\samplenum+\categoryleast+\categorynoise+i-1}{\categoryleast+\categorynoise+i-1}+e^{\epsilon}-1}\Bigg)\\
&\leq \sup_{\mathbb{P}_{\Theta|G}\in\brc{0,1}}   \log\Bigg(\frac{\secretnum \bra{e^{\epsilon}-1}}{\binom{\samplenum+\categoryleast+\categorynoise-1}{\categoryleast+\categorynoise-1}+e^{\epsilon}-1}+\frac{\binom{\samplenum+\categoryactual+\categorynoise-1}{\categoryactual+\categorynoise-1}}{\binom{\samplenum+\categoryactual+\categorynoise-1}{\categoryactual+\categorynoise-1}+e^{\epsilon}-1}\sum_{i\in\brc{0}\cup\brb{\categoryactual-\categoryleast}} \frac{\binom{\categoryactual-\categoryleast}{i}\binom{\samplenum-i+\categoryleast+\categorynoise+i-1}{\categoryleast+\categorynoise+i-1}}{\binom{\samplenum+\categoryleast+\categorynoise+i-1}{\categoryleast+\categorynoise+i-1}}\Bigg)\\
&= \sup_{\mathbb{P}_{\Theta|G}\in\brc{0,1}}  \log \Bigg(\frac{\secretnum \bra{e^{\epsilon}-1}}{\binom{\samplenum+\categoryleast+\categorynoise-1}{\categoryleast+\categorynoise-1}+e^{\epsilon}-1}+\hspace{-0.5mm}\frac{\binom{\samplenum+\categoryactual+\categorynoise-1}{\categoryactual+\categorynoise-1}}{\binom{\samplenum+\categoryactual+\categorynoise-1}{\categoryactual+\categorynoise-1}+e^{\epsilon}-1}\sum_{i\in\brc{0}\cup\brb{\categoryactual-\categoryleast}}\hspace{-1mm} \frac{\samplenum!\bra{\samplenum+\categoryleast+\categorynoise-1}!\binom{\categoryactual-\categoryleast}{i}}{\bra{\samplenum-i}!\bra{\samplenum+i+\categoryleast+\categorynoise-1}!}\Bigg)\\
&\leq \sup_{\mathbb{P}_{\Theta|G}\in\brc{0,1}}  \log \Bigg(\frac{\secretnum \bra{e^{\epsilon}-1}}{\binom{\samplenum+\categoryleast+\categorynoise-1}{\categoryleast+\categorynoise-1}+e^{\epsilon}-1}+\frac{\binom{\samplenum+\categoryactual+\categorynoise-1}{\categoryactual+\categorynoise-1}}{\binom{\samplenum+\categoryactual+\categorynoise-1}{\categoryactual+\categorynoise-1}+e^{\epsilon}-1}\sum_{i\in\brc{0}\cup\brb{\categoryactual-\categoryleast}} \bra{\frac{\samplenum}{\samplenum+\categoryleast+\categorynoise-1}}^i \binom{\categoryactual-\categoryleast}{i}\Bigg)\\
&= \log \Bigg(\frac{\secretnum \bra{e^{\epsilon}-1}}{\binom{\samplenum+\categoryleast+\categorynoise-1}{\categoryleast+\categorynoise-1}+e^{\epsilon}-1}+\frac{\binom{\samplenum+\categoryactual+\categorynoise-1}{\categoryactual+\categorynoise-1}}{\binom{\samplenum+\categoryactual+\categorynoise-1}{\categoryactual+\categorynoise-1}+e^{\epsilon}-1} \bra{1+\frac{\samplenum}{\samplenum+\categoryleast+\categorynoise-1}}^{\categoryactual-\categoryleast}\Bigg). 
\end{align*}

When $e^{\epsilon}-1\leq \binom{\samplenum+\categoryleast+\categorynoise-1}{\categoryleast+\categorynoise-1} / \secretnum$, we can get that
\begin{align*}
\privacyrr&-\privacyrr^* \\
&= \log \Bigg(\frac{\binom{\samplenum+\categoryactual+\categorynoise-1}{\categoryactual+\categorynoise-1}+e^{\epsilon}-1}{\binom{\samplenum+\categoryactual+\categorynoise-1}{\categoryactual+\categorynoise-1}+\secretnum\bra{e^{\epsilon}-1}}\frac{\secretnum \bra{e^{\epsilon}-1}}{\binom{\samplenum+\categoryleast+\categorynoise-1}{\categoryleast+\categorynoise-1}+e^{\epsilon}-1}\\
&\hspace{10mm}+\frac{\binom{\samplenum+\categoryactual+\categorynoise-1}{\categoryactual+\categorynoise-1}+e^{\epsilon}-1}{\binom{\samplenum+\categoryactual+\categorynoise-1}{\categoryactual+\categorynoise-1}+\secretnum\bra{e^{\epsilon}-1}}\frac{\binom{\samplenum+\categoryactual+\categorynoise-1}{\categoryactual+\categorynoise-1}}{\binom{\samplenum+\categoryactual+\categorynoise-1}{\categoryactual+\categorynoise-1}+e^{\epsilon}-1} \bra{1+\frac{\samplenum}{\samplenum+\categoryleast+\categorynoise-1}}^{\categoryactual-\categoryleast}\Bigg)\\
&\leq \log \bra{\frac{\binom{\samplenum+\categoryactual+\categorynoise-1}{\categoryactual+\categorynoise-1}+e^{\epsilon}-1}{\binom{\samplenum+\categoryactual+\categorynoise-1}{\categoryactual+\categorynoise-1}+\secretnum\bra{e^{\epsilon}-1}}\frac{\secretnum \bra{e^{\epsilon}-1}}{\binom{\samplenum+\categoryleast+\categorynoise-1}{\categoryleast+\categorynoise-1}+e^{\epsilon}-1}+\bra{1+\frac{\samplenum}{\samplenum+\categoryleast+\categorynoise-1}}^{\categoryactual-\categoryleast}}\\
&\leq \log \bra{1+\bra{1+\frac{\samplenum}{\samplenum+\categoryleast+\categorynoise-1}}^{\categoryactual-\categoryleast}}\\
&< \log 3 \cdot \bra{\categoryactual-\categoryleast},
\end{align*}
which indicates that Randomized Response is $\log 3$-robust when $\epsilon$ satisfies $e^{\epsilon}-1 
\leq \binom{\samplenum+\categoryestimate-1}{\categoryestimate-1} / \secretnum$.

\subsubsection{Quantization Mechanism}

For quantization mechanism with interval length $\intervallen$, from \cref{prop:mech_tradeoff}, we know that 
\begin{align*}
\privacyquan^* = \log\ceil{\frac{\secretnum}{\intervallen}}.
\end{align*}

For convenience, let $\secretrv_{k,\intervallen}$ represent $\secretrv_{\bra{k+\frac{1}{2}}\intervallen}$, the output secret value of the $k$-th interval.
For $\privacyquan$, we have
\begin{align*}
\privacyquan &= \sup_{\mathbb{P}_{\Theta|G}\in\brc{0,1}}  \log\sum_{\theta'\in\releaseparamsetall} \sup_{\secretrv\in \secretvalueset} \mathbb{P}_{\Theta'|\Theta}\bra{\theta'|\theta_{\secretrv}}\\
&= \sup_{\mathbb{P}_{\Theta|G}\in\brc{0,1}}  \log \sum_{k \in \brc{0, 1, \cdots, \left\lceil \frac{\secretnum}{\intervallen}\right\rceil-1}}\sum_{\theta'\in\releaseparamsetall_{\midpoint{k}}} \sup_{\secretrv\in \secretvalueset} \mathbb{P}_{\Theta'|\Theta}\bra{\theta'|\theta_{\secretrv}}\\
&\leq \sup_{\mathbb{P}_{\Theta|G}\in\brc{0,1}}  \log \sum_{k \in \brc{0, 1, \cdots, \left\lceil \frac{\secretnum}{\intervallen}\right\rceil-1}}\sum_{\theta'\in\releaseparamsetall_{\midpoint{k}}} \frac{1}{\binom{\samplenum\bra{1-\secretrv_{k,\intervallen}}+\categoryleast+\categorynoise+\beta\bra{\parameterdistribution'}-2}{\categoryleast+\categorynoise+\beta\bra{\parameterdistribution'}-2}}\\
&= \sup_{\mathbb{P}_{\Theta|G}\in\brc{0,1}}  \log \sum_{k \in \brc{0, 1, \cdots, \left\lceil \frac{\secretnum}{\intervallen}\right\rceil-1}}\sum_{i\in\brc{0,1,\cdots,\categoryactual-\categoryleast}}\frac{\binom{\categoryactual-\categoryleast}{i}\binom{\samplenum\bra{1-\secretrv_{k,\intervallen}}-i+\categoryleast+\categorynoise+i-2}{\categoryleast+\categorynoise+i-2}}{\binom{\samplenum\bra{1-\secretrv_{k,\intervallen}}+\categoryleast+\categorynoise+i-2}{\categoryleast+\categorynoise+i-2}}\\
&= \sup_{\mathbb{P}_{\Theta|G}\in\brc{0,1}}  \log \sum_{k \in \brc{0, 1, \cdots, \left\lceil \frac{\secretnum}{\intervallen}\right\rceil-1}}\sum_{i\in\brc{0,1,\cdots,\categoryactual-\categoryleast}}\\
&\hspace{15mm}\frac{\bra{\samplenum\bra{1-\secretrv_{k,\intervallen}}}!\bra{\samplenum\bra{1-\secretrv_{k,\intervallen}}+\categoryleast+\categorynoise-2}!}{\bra{\samplenum\bra{1-\secretrv_{k,\intervallen}}-i}!\bra{\samplenum\bra{1-\secretrv_{k,\intervallen}}+i+\categoryleast+\categorynoise-2}!}\binom{\categoryactual-\categoryleast}{i}\\
&\leq \sup_{\mathbb{P}_{\Theta|G}\in\brc{0,1}}  \log \sum_{k \in \brc{0, 1, \cdots, \left\lceil \frac{\secretnum}{\intervallen}\right\rceil-1}}\sum_{i\in\brc{0,1,\cdots,\categoryactual-\categoryleast}} \bra{\frac{\samplenum}{\samplenum+\categoryleast+\categorynoise-2}}^i\binom{\categoryactual-\categoryleast}{i}\\
&= \sup_{\mathbb{P}_{\Theta|G}\in\brc{0,1}}  \log \sum_{k \in \brc{0, 1, \cdots, \left\lceil \frac{\secretnum}{\intervallen}\right\rceil-1}}\bra{1+\frac{\samplenum}{\samplenum+\categoryleast+\categorynoise-2}}^{\categoryactual-\categoryleast}\\
&= \log \brb{\ceil{ \frac{\secretnum}{\intervallen}}\bra{1+\frac{\samplenum}{\samplenum+\categoryleast+\categorynoise-2}}^{\categoryactual-\categoryleast}}.
\end{align*}

Therefore, we have
\begin{align*}
\privacyquan-\privacyquan^* &\leq \log \brb{\ceil{ \frac{\secretnum}{\intervallen}}\bra{1+\frac{\samplenum}{\samplenum+\categoryleast+\categorynoise-2}}^{\categoryactual-\categoryleast}} - \log\ceil{\frac{\secretnum}{\intervallen}}\\
&= \bra{\categoryactual-\categoryleast} \log\bra{1+\frac{\samplenum}{\samplenum+\categoryleast+\categorynoise-2}}\\
&< \categoryactual-\categoryleast,
\end{align*}
which indicates that Quantization Mechanism with any interval length $\intervallen$ is $1$-robust when the secret is the fraction of a category.

\end{proof}
\section{Privacy-Utility Tradeoffs Analysis of \rr{} and \qm{} with Misspecified of Feasible Attribute Combinations}
\label{section:proof_bounds}

In this section, we focus on the case where there is mis-specification of feasible attribute combinations, i.e., $\combinationsetestimate\not=\combinationsetall$, and analyze the upper and lower bounds of privacy and distortion for both \rr{} and \qm{} in \cref{subsec:upper,subsec:lower,subsec:distortion}, as well as the tightness of the privacy bounds in \cref{subsec:tightness}.

Similar to \cref{section:proof_robustness_tabular}, we denote $\releaseparamsetall$ as the set that contains all the distribution parameters whose support is or is within $\combinationsetestimate\cup\combinationsetall$, and $\releaseparamsetori$ as the set of parameters whose support is or is within $\combinationsetestimate$. Let $\beta\bra{\parameterdistribution'}$ %
be the number of categories in $\combinationsetall\setminus\combinationsetestimate$ that have non-zero probability mass in $\parameterdistribution'$, i.e., $\beta\bra{\parameterdistribution'} = \brd{\support{\parameterdistribution'}\setminus\combinationsetestimate}$.

\subsection{Privacy Upper Bounds}
\label{subsec:upper}

\begin{proposition}
\label{prop:privacy_upperbounds}
For any secret function $\secretnotation$, the SML upper bound of \rr{} is
\begin{align*}
\privacyrr \leq \min\Bigg\{&\log \Bigg(\frac{\secretnum \bra{e^{\epsilon}-1}}{\binom{\samplenum+\categoryleast+\categorynoise-1}{\categoryleast+\categorynoise-1}+e^{\epsilon}-1}+\frac{\binom{\samplenum+\categoryactual+\categorynoise-1}{\categoryactual+\categorynoise-1}}{\binom{\samplenum+\categoryactual+\categorynoise-1}{\categoryactual+\categorynoise-1}+e^{\epsilon}-1} \bra{1+\frac{\samplenum}{\samplenum+\categoryleast+\categorynoise-1}}^{\categoryactual-\categoryleast}\Bigg),\\
&\log\bra{1+\secretnum-\secretnum\cdot \frac{\bra{\frac{\categoryleast+\categorynoise}{\samplenum+\categoryactual+\categorynoise-1}}^{\categoryactual-\categoryleast}}{1+\bra{e^{\epsilon}-1}/\binom{\samplenum+\categoryactual+\categorynoise-1}{\categoryactual+\categorynoise-1}}}\Bigg\}.
\end{align*}
Specifically, when $e^{\epsilon}-1\leq \binom{\samplenum+\categoryleast+\categorynoise-1}{\categoryleast+\categorynoise-1} / \secretnum$, the upper bound can be written as
\begin{align*}
\privacyrr < \min\brc{\log \Bigg(\frac{\secretnum}{1+\secretnum}+ \bra{1+\frac{\samplenum}{\samplenum+\categoryleast+\categorynoise-1}}^{\categoryactual-\categoryleast}\Bigg), \ 
\log\bra{1+\secretnum-\secretnum\cdot \frac{\bra{\frac{\categoryleast+\categorynoise}{\samplenum+\categoryactual+\categorynoise-1}}^{\categoryactual-\categoryleast}}{1+1/\secretnum}}}.
\end{align*}

For any secret function $\secretnotation$, the SML upper bound of \qm{} is
\begin{align*}
\privacyquan \leq \min\brc{\log \brb{\ceil{ \frac{\secretnum}{\intervallen}}\bra{1+\frac{\samplenum}{\samplenum+\categoryleast+\categorynoise-2}}^{\categoryactual-\categoryleast}}, \ 
\log\bra{\ceil{\frac{\secretnum}{\intervallen}}+\secretnum-\secretnum\cdot\bra{\frac{\categoryleast+\categorynoise-1}{\samplenum+\categoryactual+\categorynoise-2}}^{\categoryactual-\categoryleast}}}.
\end{align*}
\end{proposition}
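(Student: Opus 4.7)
The plan is to apply \cref{prop:sml_calculation} to rewrite
$$\sml \,=\, \sup_{\brc{\theta_\secretrv}_\secretrv}\log \sum_{\theta'\in\releaseparamsetall}\sup_{\secretrv\in\secretvalueset}\mathbb{P}_{\Theta'|\Theta}\bra{\theta'|\theta_\secretrv},$$
and then to upper bound the inner sum in two complementary ways for each of $\mechrr$ and $\mechquan$; the stated $\min\brc{\cdot,\cdot}$ is the minimum of the two. Throughout I would abbreviate $M = \binom{\samplenum+\categoryleast+\categorynoise-1}{\categoryleast+\categorynoise-1}$, the release-set size when $\theta_\secretrv$ has minimum support $\combinationsetestimate$, and $N = \binom{\samplenum+\categoryactual+\categorynoise-1}{\categoryactual+\categorynoise-1}$, the size under full support in $\combinationsetall\cup\combinationsetestimate$; I also use $\beta\bra{\theta'}$ for the number of categories in $\combinationsetall\setminus\combinationsetestimate$ carrying nonzero mass in $\theta'$, and $r = \bra{\categoryleast+\categorynoise}/\bra{\samplenum+\categoryactual+\categorynoise-1}$.

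The first RR upper bound is essentially already derived as an intermediate inequality in the proof of \cref{prop:robustness_tabular}; its derivation splits the outer sum into $\leq\secretnum$ ``diagonal'' terms indexed by $\theta'\in\brc{\theta_\secretrv}_\secretrv$ and an ``off-diagonal'' remainder stratified by $\beta\bra{\theta'}=i$, counts the latter via $\binom{\categoryactual-\categoryleast}{i}\binom{\samplenum+\categoryleast+\categorynoise+i-1}{\categoryleast+\categorynoise+i-1}$, bounds each factorial ratio by $\bra{\samplenum/\bra{\samplenum+\categoryleast+\categorynoise-1}}^i$, and collapses via the binomial identity $\sum_i\binom{\categoryactual-\categoryleast}{i}x^i=\bra{1+x}^{\categoryactual-\categoryleast}$. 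The first QM bound follows the same template, with the per-interval release-set size taking the combinatorial form $\binom{\samplenum\bra{1-\secretrv_{k,\intervallen}}+\categoryleast+\categorynoise+\beta-2}{\categoryleast+\categorynoise+\beta-2}$ under the secret-as-fraction assumption, and summing the stratified bound across the $\ceil{\secretnum/\intervallen}$ distinct intervals contributes the leading factor $\ceil{\secretnum/\intervallen}$.

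The second bound in each line arises from a complementary ``add-and-subtract'' decomposition against a reference input. Picking a reference $\secretrv_0$ whose $\theta_{\secretrv_0}$ attains full support in $\combinationsetall\cup\combinationsetestimate$, so that $\brd{\releaseparamset_{\theta_{\secretrv_0}}}=N$, I would write
$$\sup_\secretrv \mathbb{P}_{\Theta'|\Theta}\bra{\theta'|\theta_\secretrv} \,=\, \mathbb{P}_{\Theta'|\Theta}\bra{\theta'|\theta_{\secretrv_0}} + \brb{\sup_\secretrv \mathbb{P}_{\Theta'|\Theta}\bra{\theta'|\theta_\secretrv} - \mathbb{P}_{\Theta'|\Theta}\bra{\theta'|\theta_{\secretrv_0}}},$$
and sum over $\theta'\in\releaseparamsetall$: the first piece integrates to $1$ (and to $\ceil{\secretnum/\intervallen}$ for QM after aggregating one reference per interval), while the residual is supported on the $\leq\secretnum$ diagonal locations. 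Bounding each such residual and applying the ratio lower bound $M/N \geq r^{\categoryactual-\categoryleast}$---which follows directly from the factorial-ratio computation already invoked in the robustness proof---produces the stated $\log\bra{1+\secretnum-\secretnum c}$ form with $c = r^{\categoryactual-\categoryleast}/\bra{1+\bra{e^\epsilon-1}/N}$. The QM analogue proceeds identically, with the denominator becoming $\samplenum+\categoryactual+\categorynoise-2$ and the leading constant $\ceil{\secretnum/\intervallen}$.

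The main obstacle is not conceptual but combinatorial bookkeeping: the supremum over $\brc{\theta_\secretrv}_\secretrv$ is not attained by any single canonical choice of supports---minimum-support $\theta_\secretrv$ yield larger per-output probabilities but fail to reach $\theta'$ with extra categories, while full-support $\theta_\secretrv$ reach every $\theta'\in\releaseparamsetall$ at uniformly smaller per-output probabilities---so the two bounds are deliberately conservative in complementary directions. Verifying that the algebraic simplifications in both collapses yield exactly the stated closed forms is routine and mirrors the computation already carried out in \cref{section:proof_robustness_tabular}.
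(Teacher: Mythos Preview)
Your plan for the first term in each $\min$ is correct and matches the paper, which simply cites the intermediate inequality already derived in the robustness proof (\cref{section:proof_robustness_tabular}).

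For the second term your add-and-subtract decomposition has a real gap. The claim that the residual $\sup_\secretrv \mathbb{P}_{\Theta'|\Theta}(\theta'\mid\theta_\secretrv)-\mathbb{P}_{\Theta'|\Theta}(\theta'\mid\theta_{\secretrv_0})$ is ``supported on the $\leq\secretnum$ diagonal locations'' is false. Under RR, whenever a representative $\theta_\secretrv$ omits some categories in $\combinationsetall\setminus\combinationsetestimate$ its release set is strictly smaller than $N$, so its off-diagonal probability already exceeds $1/(N+e^\epsilon-1)=\mathbb{P}_{\Theta'|\Theta}(\theta'\mid\theta_{\secretrv_0})$ at \emph{every} $\theta'$ in that release set, not only at $\theta'=\theta_\secretrv$. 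Hence the residual is positive on a set whose size can be of order $N$, and summing only $\secretnum$ diagonal contributions cannot recover the stated $\secretnum(1-c)$ term. (There is also the side issue that the worst-case family $\{\theta_\secretrv\}_\secretrv$ over which you take the outer supremum need not contain any full-support member, so ``picking $\secretrv_0$'' among them is not automatic.) The paper uses a different decomposition: it partitions $\releaseparamsetall=\releaseparamsetori\cup(\releaseparamsetall\setminus\releaseparamsetori)$ according to whether the output $\theta'$ has support inside $\combinationsetestimate$, handles the $\releaseparamsetori$ portion directly, and on the complement replaces $\sup_\secretrv$ by $\sum_\secretrv$ so that each $\secretrv$ contributes at most the total mass its $\theta_\secretrv$ places outside $\releaseparamsetori$, namely $\leq 1-M/(N+e^\epsilon-1)$; the ratio bound $M/N\ge r^{\categoryactual-\categoryleast}$ then yields the stated form. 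Your idea can be repaired in the same spirit---replace the false support claim by $\sup_\secretrv[\,\cdot\,]_+\le\sum_\secretrv[\,\cdot\,]_+$ and integrate each summand over $\theta'$---but as written the diagonal-only bookkeeping is the step that fails.
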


\begin{proof}

\subsubsection{Randomized Response}

$\privacyrr$ can be upper bounded by
\begin{align*}
\privacyrr &= \sup_{\mathbb{P}_{\Theta|G}\in\brc{0,1}}  \log\sum_{\theta'\in\releaseparamsetall} \sup_{\secretrv\in \secretvalueset} \mathbb{P}_{\Theta'|\Theta}\bra{\theta'|\theta_{\secretrv}}\\
&\leq \sup_{\mathbb{P}_{\Theta|G}\in\brc{0,1}}  \log\bra{\sup_{\secretrv\in \secretvalueset}\sum_{\theta'\in\releaseparamsetori}\mathbb{P}_{\Theta'|\Theta}\bra{\theta'|\theta_{\secretrv}}+\sum_{\secretrv\in \secretvalueset} \sum_{\theta'\in\releaseparamsetall\setminus \releaseparamsetori} \mathbb{P}_{\Theta'|\Theta}\bra{\theta'|\theta_{\secretrv}}}\\
&\leq \sup_{\mathbb{P}_{\Theta|G}\in\brc{0,1}}  \log\bra{1+\sum_{\secretrv\in \secretvalueset} \bra{1-\frac{\brd{\releaseparamsetori}}{\binom{\samplenum+\categoryleast+\categorynoise+\beta\bra{\theta_{\secretrv}}-1}{\categoryleast+\categorynoise+\beta\bra{\theta_{\secretrv}}-1}+e^{\epsilon}-1}}}\\
&\leq   \log\bra{1+\secretnum\cdot\frac{\binom{\samplenum+\categoryactual+\categorynoise-1}{\categoryactual+\categorynoise-1}-\binom{\samplenum+\categoryleast+\categorynoise-1}{\categoryleast+\categorynoise-1}+e^{\epsilon}-1}{\binom{\samplenum+\categoryactual+\categorynoise-1}{\categoryactual+\categorynoise-1}+e^{\epsilon}-1}}.
\end{align*}

Since we have
\begin{align*}
\frac{\binom{\samplenum+\categoryleast+\categorynoise-1}{\categoryleast+\categorynoise-1}}{\binom{\samplenum+\categoryactual+\categorynoise-1}{\categoryactual+\categorynoise-1}}=\frac{\bra{\samplenum+\categoryleast+\categorynoise-1}!\bra{\categoryactual+\categorynoise-1}!}{\bra{\samplenum+\categoryactual+\categorynoise-1}!\bra{\categoryleast+\categorynoise-1}!}\geq \bra{\frac{\categoryleast+\categorynoise}{\samplenum+\categoryactual+\categorynoise-1}}^{\categoryactual-\categoryleast},
\end{align*}
we can get that
\begin{align*}
\privacyrr\leq \log\bra{1+\secretnum-\secretnum\cdot \frac{\bra{\frac{\categoryleast+\categorynoise}{\samplenum+\categoryactual+\categorynoise-1}}^{\categoryactual-\categoryleast}}{1+\bra{e^{\epsilon}-1}/\binom{\samplenum+\categoryactual+\categorynoise-1}{\categoryactual+\categorynoise-1}}}.
\end{align*}

Combining with the privacy upper bound provided in \cref{section:proof_robustness_tabular}, we have
\begin{align*}
\privacyrr \leq \min\Bigg\{&\log \Bigg(\frac{\secretnum \bra{e^{\epsilon}-1}}{\binom{\samplenum+\categoryleast+\categorynoise-1}{\categoryleast+\categorynoise-1}+e^{\epsilon}-1}+\frac{\binom{\samplenum+\categoryactual+\categorynoise-1}{\categoryactual+\categorynoise-1}}{\binom{\samplenum+\categoryactual+\categorynoise-1}{\categoryactual+\categorynoise-1}+e^{\epsilon}-1} \bra{1+\frac{\samplenum}{\samplenum+\categoryleast+\categorynoise-1}}^{\categoryactual-\categoryleast}\Bigg),\\
&\log\bra{1+\secretnum-\secretnum\cdot \frac{\bra{\frac{\categoryleast+\categorynoise}{\samplenum+\categoryactual+\categorynoise-1}}^{\categoryactual-\categoryleast}}{1+\bra{e^{\epsilon}-1}/\binom{\samplenum+\categoryactual+\categorynoise-1}{\categoryactual+\categorynoise-1}}}\Bigg\}.
\end{align*}

Specifically, when $e^{\epsilon}-1\leq \binom{\samplenum+\categoryleast+\categorynoise-1}{\categoryleast+\categorynoise-1} / \secretnum$, we have
\begin{align*}
\privacyrr < \min\brc{\log \Bigg(\frac{\secretnum}{1+\secretnum}+ \bra{1+\frac{\samplenum}{\samplenum+\categoryleast+\categorynoise-1}}^{\categoryactual-\categoryleast}\Bigg), \ 
\log\bra{1+\secretnum-\secretnum\cdot \frac{\bra{\frac{\categoryleast+\categorynoise}{\samplenum+\categoryactual+\categorynoise-1}}^{\categoryactual-\categoryleast}}{1+1/\secretnum}}}.
\end{align*}

\subsubsection{Quantization Mechanism}
$\privacyquan$ can be upper bounded by
\begin{align*}
\privacyquan &= \sup_{\mathbb{P}_{\Theta|G}\in\brc{0,1}}  \log\sum_{\theta'\in\releaseparamsetall} \sup_{\secretrv\in \secretvalueset} \mathbb{P}_{\Theta'|\Theta}\bra{\theta'|\theta_{\secretrv}}\\
&\leq \sup_{\mathbb{P}_{\Theta|G}\in\brc{0,1}}  \log \sum_{k \in \brc{0, 1, \cdots, \left\lceil \frac{\secretnum}{\intervallen}\right\rceil-1}}\bra{\sup_{\secretrv\in \secretvalueset}\sum_{\theta'\in\releaseparamsetori_{\midpoint{k}}}\mathbb{P}_{\Theta'|\Theta}\bra{\theta'|\theta_{\secretrv}}+\sum_{\secretrv\in \secretvalueset} \sum_{\theta'\in\releaseparamsetall_{\midpoint{k}}\setminus \releaseparamsetori_{\midpoint{k}}} \mathbb{P}_{\Theta'|\Theta}\bra{\theta'|\theta_{\secretrv}}}\\
&\leq \sup_{\mathbb{P}_{\Theta|G}\in\brc{0,1}}  \log\bra{\ceil{\frac{\secretnum}{\intervallen}}+\sum_{\secretrv\in \secretvalueset} \sum_{\theta'\in\releaseparamsetall\setminus \releaseparamsetori} \mathbb{P}_{\Theta'|\Theta}\bra{\theta'|\theta_{\secretrv}}}\\
&\leq  \log\bra{\ceil{\frac{\secretnum}{\intervallen}}+\secretnum\cdot \sup_{k \in \brc{0, 1, \cdots, \left\lceil \frac{\secretnum}{\intervallen}\right\rceil-1}} \bra{1-\frac{\brd{\releaseparamsetori_{\midpoint{k}}}}{\binom{\samplenum\bra{1-\secretrv_{k,\intervallen}}+\categoryactual+\categorynoise-2}{\categoryactual+\categorynoise-2}}}}\\
&=   \log\bra{\ceil{\frac{\secretnum}{\intervallen}}+\secretnum-\secretnum\cdot\sup_{k \in \brc{0, 1, \cdots, \left\lceil \frac{\secretnum}{\intervallen}\right\rceil-1}}\frac{\binom{\samplenum\bra{1-\secretrv_{k,\intervallen}}+\categoryleast+\categorynoise-2}{\categoryleast+\categorynoise-2}}{\binom{\samplenum\bra{1-\secretrv_{k,\intervallen}}+\categoryactual+\categorynoise-2}{\categoryactual+\categorynoise-2}}}\\
&\leq \log\bra{\ceil{\frac{\secretnum}{\intervallen}}+\secretnum-\secretnum\cdot\bra{\frac{\categoryleast+\categorynoise-1}{\samplenum+\categoryactual+\categorynoise-2}}^{\categoryactual-\categoryleast}}.
\end{align*}

Combining with the privacy upper bound provided in \cref{section:proof_robustness_tabular}, we have
\begin{align*}
\privacyquan \leq \min\brc{\log \brb{\ceil{ \frac{\secretnum}{\intervallen}}\bra{1+\frac{\samplenum}{\samplenum+\categoryleast+\categorynoise-2}}^{\categoryactual-\categoryleast}}, \ 
\log\bra{\ceil{\frac{\secretnum}{\intervallen}}+\secretnum-\secretnum\cdot\bra{\frac{\categoryleast+\categorynoise-1}{\samplenum+\categoryactual+\categorynoise-2}}^{\categoryactual-\categoryleast}}}.
\end{align*}
\end{proof}

\subsection{Lower Bounds for SML}
\label{subsec:lower}

\begin{proposition}
\label{prop:privacy_lowerbounds}
For any secret function $\secretnotation$, the SML lower bound of \rr{} is
\begin{align*}
\privacyrr \geq 
\begin{cases}
\log\bra{\frac{\secretnum \bra{e^{\epsilon}-1}}{\binom{\samplenum+\categoryactual+\categorynoise-1}{\categoryactual+\categorynoise-1}+e^{\epsilon}-1}+\frac{\binom{\samplenum+\categoryleast+\categorynoise-1}{\categoryleast+\categorynoise-1}}{\binom{\samplenum+\categoryleast+\categorynoise-1}{\categoryleast+\categorynoise-1}+e^{\epsilon}-1} \bra{1+\frac{\samplenum-\bra{\categoryactual-\categoryleast}}{\samplenum+\categoryactual+\categorynoise-1}}^{\categoryactual-\categoryleast}}, & \categoryactual-\categoryleast\leq \log{\secretnum},\\
\log\bra{\frac{\secretnum \bra{e^{\epsilon}-1}}{\binom{\samplenum+\categoryactual+\categorynoise-1}{\categoryactual+\categorynoise-1}+e^{\epsilon}-1}+\frac{\binom{\samplenum+\categoryleast+\categorynoise-1}{\categoryleast+\categorynoise-1}}{\binom{\samplenum+\categoryleast+\categorynoise-1}{\categoryleast+\categorynoise-1}+e^{\epsilon}-1} \bra{1+\frac{\samplenum-{\log{\secretnum}}}{\samplenum+\categoryactual+\categorynoise-1}}^{\log{\secretnum}}}, & \log{\secretnum}<\categoryactual-\categoryleast< {\secretnum},\\
\log\bra{\secretnum-\bra{\secretnum-1}\bra{\frac{\categoryleast+\categorynoise+\zeta-1}{\samplenum+\categoryleast+\categorynoise}}^{\zeta}}, & \categoryactual-\categoryleast\geq {\secretnum}.
\end{cases}
\end{align*}

For any secret function $\secretnotation$, the SML lower bound of \qm{} is
\begin{align*}
\privacyquan \geq 
\begin{cases}
\log \brb{\floor{ \frac{\secretnum}{\intervallen}}\bra{1+\frac{\samplenum\intervallen-2\secretnum\bra{\categoryactual-\categoryleast}}{\samplenum\intervallen+2\secretnum\bra{\categoryactual+\categorynoise-2}}}^{\categoryactual-\categoryleast}}, & \categoryactual-\categoryleast \leq \log{\intervallen},\\
\log \brb{\floor{ \frac{\secretnum}{\intervallen}}\bra{1+\frac{\samplenum\intervallen-2\secretnum\cdot{\log{\intervallen}}}{\samplenum\intervallen+2\secretnum\bra{\categoryactual+\categorynoise-2}}}^{\log{\intervallen}}}, & \log{\intervallen}< \categoryactual-\categoryleast < {\intervallen},\\
\log\bra{\secretnum-\bra{\secretnum-\ceil{\frac{\secretnum}{\intervallen}}}\cdot\bra{\frac{\categoryleast+\categorynoise+\zeta-2}{\samplenum\intervallen/2\secretnum+\categoryleast+\categorynoise-1}}^{\zeta}}, & \categoryactual-\categoryleast \geq {\intervallen}.
\end{cases}
\end{align*}
\end{proposition}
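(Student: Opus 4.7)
\textbf{Proof plan for Proposition~\ref{prop:privacy_lowerbounds}.}
The strategy is to invoke Proposition~\ref{prop:sml_calculation} and, rather than taking the supremum over all point priors, to exhibit an explicit prior $\mathbb{P}_{\Theta|G}\in\brc{0,1}$ (i.e., a concrete choice of $\theta_\secretrv\in\secretset$ for each $\secretrv\in\secretvalueset$) whose value of $\log \sum_{\theta'\in\releaseparamsetall} \sup_{\secretrv\in \secretvalueset} \mathbb{P}_{\Theta'|\Theta}\bra{\theta'|\theta_\secretrv}$ matches each stated bound. Because any such concrete assignment furnishes a valid lower bound, the task reduces to choosing representatives $\brc{\theta_\secretrv}_{\secretrv\in\secretvalueset}$ that allocate the ``extra'' attribute mass coming from the $\categoryactual-\categoryleast$ missing feasible categories in the most favourable way for the adversary.

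For Randomized Response, I would split the inner sum into the $\secretnum$ diagonal terms (each contributing $e^{\epsilon}/(\binom{\samplenum+\categoryleast+\categorynoise+\beta(\theta_\secretrv)-1}{\categoryleast+\categorynoise+\beta(\theta_\secretrv)-1}+e^{\epsilon}-1)$) and the off-diagonal contributions, which simplify via the same Vandermonde-type identity already used in the upper-bound derivation of \cref{prop:privacy_upperbounds}, but now with the inequalities running in the opposite direction. The three regimes then correspond to how many missing feasible categories can usefully be spread over the $\secretnum$ representatives: when $\categoryactual-\categoryleast\leq \log\secretnum$, a common choice with $\beta(\theta_\secretrv)=\categoryactual-\categoryleast$ for every $\secretrv$ is optimal and yields the first expression; when $\log\secretnum<\categoryactual-\categoryleast<\secretnum$, the same assignment capped at $\beta=\log\secretnum$ balances the diagonal and off-diagonal contributions and yields the second; and when $\categoryactual-\categoryleast\geq\secretnum$, one instead distributes the extra categories across the $\secretnum$ representatives so that each $\theta_\secretrv$ has support in a disjoint extension of $\combinationsetestimate$, at which point the binomial ratio $\binom{\samplenum+\categoryleast+\categorynoise-1}{\categoryleast+\categorynoise-1}/\binom{\samplenum+\categoryleast+\categorynoise+\zeta-1}{\categoryleast+\categorynoise+\zeta-1}\leq \bra{(\categoryleast+\categorynoise+\zeta-1)/(\samplenum+\categoryleast+\categorynoise)}^{\zeta}$ controls the residual off-diagonal mass and produces the third bound.

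For the Quantization Mechanism, the argument is structurally identical, except that the per-sample mass from $\theta_\secretrv$ onto each $\theta'\in\releasesecretsetofabbr{\midpoint{k}}$ is $1/\brd{\releasesecretsetofabbr{\midpoint{k}}}$ rather than governed by $\epsilon$. I would pick one representative per interval so as to maximise (or partition) the effective release-set sizes, and then carry out the same three-case split, now with the critical thresholds at $\log\intervallen$ and $\intervallen$ rather than $\log\secretnum$ and $\secretnum$. The mechanical calculation of the corresponding sums (essentially the same multinomial manipulations as in \cref{sec:proof_mech_tradeoff} and \cref{section:proof_robustness_tabular}, but now applied to the specific concrete prior rather than bounded uniformly over all priors) delivers the three claimed bounds.

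The main obstacle will be the combinatorial case analysis: verifying that the prior constructed in each regime actually attains the target value rather than a strictly looser one requires a careful accounting of multinomial coefficients and a tight control of the above binomial ratio, which in turn pins down the precise role of the free integer $\zeta$ in the large-misspecification regime as the optimal ``spread'' parameter. Once those inequalities are justified, collecting diagonal and off-diagonal contributions and taking logarithms parallels the upper-bound computation in \cref{prop:privacy_upperbounds} and is essentially mechanical.
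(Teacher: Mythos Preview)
Your overall strategy---invoke Proposition~\ref{prop:sml_calculation} and exhibit an explicit choice of $\brc{\theta_\secretrv}_{\secretrv\in\secretvalueset}$ in each regime---matches the paper, and your third-regime construction (partition the $\categoryactual-\categoryleast$ missing categories into disjoint blocks across the representatives, with $\zeta=\floor{(\categoryactual-\categoryleast)/\secretnum}$ for \rr{} and $\zeta=\floor{(\categoryactual-\categoryleast)/\intervallen}$ for \qm{}) is exactly what the paper does.

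The genuine gap is in your first-regime construction. Taking $\beta(\theta_\secretrv)=\categoryactual-\categoryleast$ for every $\secretrv$ does \emph{not} yield the stated bound: with that choice every representative has the maximal release-set size $\binom{\samplenum+\categoryactual+\categorynoise-1}{\categoryactual+\categorynoise-1}$, so every off-diagonal term equals $1/\bigl(\binom{\samplenum+\categoryactual+\categorynoise-1}{\categoryactual+\categorynoise-1}+e^{\epsilon}-1\bigr)$ and the sum carries no factor of the form $(1+\cdots)^{\categoryactual-\categoryleast}$. The paper's construction is different and is the source of that factor: since $\categoryactual-\categoryleast\le\log\secretnum$ means $2^{\categoryactual-\categoryleast}\le\secretnum$, one can assign to \emph{each} subset $\combinationsetsingle\subseteq\combinationsetall\setminus\combinationsetpre$ a distinct representative $\theta_\secretrv$ with $\support{\theta_\secretrv}=\combinationsetsingle\cup\combinationsetpre$. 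Then for every output $\theta'$ the supremum $\sup_\secretrv\mathbb{P}_{\Theta'|\Theta}(\theta'|\theta_\secretrv)$ is realised by a representative whose extra categories coincide exactly with those of $\theta'$, so the relevant denominator depends on $\beta(\theta')$ rather than on the worst case $\categoryactual-\categoryleast$. Grouping $\theta'$ by $i=\beta(\theta')$ produces the sum $\sum_i\binom{\categoryactual-\categoryleast}{i}(\cdot)^i$, which is what gives the $(1+\cdots)^{\categoryactual-\categoryleast}$ term after bounding the factorial ratios. The threshold $\log\secretnum$ is precisely where this subset-covering construction ceases to be feasible; the second regime simply reuses the first-regime construction restricted to $\log\secretnum$ of the missing categories. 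The identical remark applies to \qm{} with $\intervallen$ in place of $\secretnum$: the paper places $2^{\categoryactual-\categoryleast}$ representatives \emph{within each quantization interval} (one per subset), not ``one representative per interval'' as you propose.
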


\begin{proof}

\subsubsection{Randomized Response}

When $\categoryactual-\categoryleast\leq \log{\secretnum}$, we can find a prior distribution $\paramdistribution$ of the distribution parameter such that $\mathbb{P}_{\Theta|G}\in\brc{0,1}$ and $\forall \combinationsetsingle \subseteq \combinationsetall\setminus
\combinationsetpre, \ \exists\secretrv\in\secretvalueset: \support{\theta_{\secretrv}}=\combinationsetsingle\cup\combinationsetpre$. Therefore, we can get that 
\begin{align*}
\privacyrr&= \sup_{\mathbb{P}_{\Theta|G}\in\brc{0,1}}  \log\sum_{\theta'\in\releaseparamsetall} \sup_{\secretrv\in \secretvalueset} \mathbb{P}_{\Theta'|\Theta}\bra{\theta'|\theta_{\secretrv}}\\
&= \sup_{\mathbb{P}_{\Theta|G}\in\brc{0,1}}  \log\Bigg(\sum_{\theta'\in\brc{\theta_\secretrv}_{\secretrv\in \secretvalueset}}\sup_{\secretrv\in \secretvalueset} \mathbb{P}_{\Theta'|\Theta}\bra{\theta'|\theta_{\secretrv}}+\sum_{\theta'\in 
\releaseparamsetall\setminus\brc{\theta_\secretrv}_{\secretrv\in \secretvalueset}} \sup_{\secretrv\in \secretvalueset} \mathbb{P}_{\Theta'|\Theta}\bra{\theta'|\theta_{\secretrv}}\Bigg) \\
&\geq \log\bra{\sum_{\theta'\in \brc{\theta_\secretrv}_{\secretrv\in \secretvalueset}} \frac{ e^{\epsilon}}{\binom{\samplenum+\categoryleast+\categorynoise+\beta\bra{\parameterdistribution'}-1}{\categoryleast+\categorynoise+\beta\bra{\parameterdistribution'}-1}+e^{\epsilon}-1}+\sum_{\theta'\in 
\releaseparamsetall\setminus\brc{\theta_\secretrv}_{\secretrv\in \secretvalueset}} \frac{1}{\binom{\samplenum+\categoryleast+\categorynoise+\beta\bra{\parameterdistribution'}-1}{\categoryleast+\categorynoise+\beta\bra{\parameterdistribution'}-1}+e^{\epsilon}-1}}\\
&\geq    \log\bra{\frac{\secretnum \bra{e^{\epsilon}-1}}{\binom{\samplenum+\categoryactual+\categorynoise-1}{\categoryactual+\categorynoise-1}+e^{\epsilon}-1}+\sum_{\theta'\in 
\releaseparamsetall} \frac{1}{\binom{\samplenum+\categoryleast+\categorynoise+\beta\bra{\parameterdistribution'}-1}{\categoryleast+\categorynoise+\beta\bra{\parameterdistribution'}-1}+e^{\epsilon}-1}}\\
&= \log\bra{\frac{\secretnum \bra{e^{\epsilon}-1}}{\binom{\samplenum+\categoryactual+\categorynoise-1}{\categoryactual+\categorynoise-1}+e^{\epsilon}-1}+\sum_{i\in\brc{0}\cup\brb{\categoryactual-\categoryleast}} \frac{\binom{\categoryactual-\categoryleast}{i}\binom{\samplenum-i+\categoryleast+\categorynoise+i-1}{\categoryleast+\categorynoise+i-1}}{\binom{\samplenum+\categoryleast+\categorynoise+i-1}{\categoryleast+\categorynoise+i-1}+e^{\epsilon}-1}}\\
&\geq \log\bra{\frac{\secretnum \bra{e^{\epsilon}-1}}{\binom{\samplenum+\categoryactual+\categorynoise-1}{\categoryactual+\categorynoise-1}+e^{\epsilon}-1}+\frac{\binom{\samplenum+\categoryleast+\categorynoise-1}{\categoryleast+\categorynoise-1}}{\binom{\samplenum+\categoryleast+\categorynoise-1}{\categoryleast+\categorynoise-1}+e^{\epsilon}-1}\sum_{i\in\brc{0}\cup\brb{\categoryactual-\categoryleast}} \frac{\binom{\categoryactual-\categoryleast}{i}\binom{\samplenum-i+\categoryleast+\categorynoise+i-1}{\categoryleast+\categorynoise+i-1}}{\binom{\samplenum+\categoryleast+\categorynoise+i-1}{\categoryleast+\categorynoise+i-1}}}\\
&= \log\bra{\frac{\secretnum \bra{e^{\epsilon}-1}}{\binom{\samplenum+\categoryactual+\categorynoise-1}{\categoryactual+\categorynoise-1}+e^{\epsilon}-1}+\frac{\binom{\samplenum+\categoryleast+\categorynoise-1}{\categoryleast+\categorynoise-1}}{\binom{\samplenum+\categoryleast+\categorynoise-1}{\categoryleast+\categorynoise-1}+e^{\epsilon}-1}\sum_{i\in\brc{0}\cup\brb{\categoryactual-\categoryleast}}\hspace{-1mm} \frac{\samplenum!\bra{\samplenum+\categoryleast+\categorynoise-1}!\binom{\categoryactual-\categoryleast}{i}}{\bra{\samplenum-i}!\bra{\samplenum+i+\categoryleast+\categorynoise-1}!}}\\
&\geq \log\bra{\frac{\secretnum \bra{e^{\epsilon}-1}}{\binom{\samplenum+\categoryactual+\categorynoise-1}{\categoryactual+\categorynoise-1}+e^{\epsilon}-1}+\frac{\binom{\samplenum+\categoryleast+\categorynoise-1}{\categoryleast+\categorynoise-1}}{\binom{\samplenum+\categoryleast+\categorynoise-1}{\categoryleast+\categorynoise-1}+e^{\epsilon}-1}\sum_{i\in\brc{0}\cup\brb{\categoryactual-\categoryleast}} \bra{\frac{\samplenum-\bra{\categoryactual-\categoryleast}}{\samplenum+\categoryactual+\categorynoise-1}}^i \binom{\categoryactual-\categoryleast}{i}}\\
&= \log\bra{\frac{\secretnum \bra{e^{\epsilon}-1}}{\binom{\samplenum+\categoryactual+\categorynoise-1}{\categoryactual+\categorynoise-1}+e^{\epsilon}-1}+\frac{\binom{\samplenum+\categoryleast+\categorynoise-1}{\categoryleast+\categorynoise-1}}{\binom{\samplenum+\categoryleast+\categorynoise-1}{\categoryleast+\categorynoise-1}+e^{\epsilon}-1} \bra{1+\frac{\samplenum-\bra{\categoryactual-\categoryleast}}{\samplenum+\categoryactual+\categorynoise-1}}^{\categoryactual-\categoryleast}}. 
\end{align*}

When $\log{\secretnum} <\categoryactual-\categoryleast < \secretnum$, we can easily get that
\begin{align*}
\privacyrr \geq \log\bra{\frac{\secretnum \bra{e^{\epsilon}-1}}{\binom{\samplenum+\categoryactual+\categorynoise-1}{\categoryactual+\categorynoise-1}+e^{\epsilon}-1}+\frac{\binom{\samplenum+\categoryleast+\categorynoise-1}{\categoryleast+\categorynoise-1}}{\binom{\samplenum+\categoryleast+\categorynoise-1}{\categoryleast+\categorynoise-1}+e^{\epsilon}-1} \bra{1+\frac{\samplenum-{\log{\secretnum}}}{\samplenum+\categoryactual+\categorynoise-1}}^{\log{\secretnum}}}
\end{align*}

When $\categoryactual-\categoryleast \geq \secretnum$, denote $\combinationsetsingle_{\theta}\triangleq \support{\theta}\setminus \combinationsetpre$. We can find a prior distribution $\paramdistribution$ of the distribution parameter such that $\mathbb{P}_{\Theta|G}\in\brc{0,1}$ and $\forall \secretrv_1, \secretrv_2\in \secretvalueset, \secretrv_1\not= \secretrv_2: \combinationsetsingle_{\theta_{\secretrv_1}}\cap \combinationsetsingle_{\theta_{\secretrv_1}} = \emptyset$ and $\brd{\combinationsetsingle_{\theta_{\secretrv_1}}}=\brd{\combinationsetsingle_{\theta_{\secretrv_2}}}=\floor{\frac{\categoryactual-\categoryleast}{\secretnum}}$. Denote $\zeta \triangleq \floor{\frac{\categoryactual-\categoryleast}{\secretnum}}$. Therefore, we can get that
\begin{align*}
\privacyrr &= \sup_{\mathbb{P}_{\Theta|G}\in\brc{0,1}}  \log\sum_{\theta'\in\releaseparamsetall} \sup_{\secretrv\in \secretvalueset} \mathbb{P}_{\Theta'|\Theta}\bra{\theta'|\theta_{\secretrv}}\\
&\geq   \log\bra{\frac{\brd{\releaseparamsetori}}{\binom{\samplenum+\categoryleast+\categorynoise+\zeta-1}{\categoryleast+\categorynoise+\zeta-1}+e^{\epsilon}-1}+\sum_{\secretrv\in \secretvalueset} \bra{1-\frac{\brd{\releaseparamsetori}}{\binom{\samplenum+\categoryleast+\categorynoise+\zeta-1}{\categoryleast+\categorynoise+\zeta-1}+e^{\epsilon}-1}}}\\
&\geq \log\bra{\secretnum-\bra{\secretnum-1}\cdot \frac{\binom{\samplenum+\categoryleast+\categorynoise-1}{\categoryleast+\categorynoise-1}}{\binom{\samplenum+\categoryleast+\categorynoise+\zeta-1}{\categoryleast+\categorynoise+\zeta-1}+e^{\epsilon}-1}}\\
&\geq \log\bra{\secretnum-\bra{\secretnum-1}\bra{\frac{\categoryleast+\categorynoise+\zeta-1}{\samplenum+\categoryleast+\categorynoise}}^{\zeta}}.
\end{align*}

\subsubsection{Quantization Mechanism}

Denote $\secretvalueset_{k}$ as the set of the secret values within the $k$-th interval, i.e., $\secretvalueset_{k}=\brc{\secretrv_{k\intervallen+j}}_{j\in \brb{\intervallen}}$.
When $\categoryactual-\categoryleast\leq \log{\intervallen}$, we can find a prior distribution $\paramdistribution$ of the distribution parameter such that $\mathbb{P}_{\Theta|G}\in\brc{0,1}$ and $\forall \combinationsetsingle \subseteq \combinationsetall\setminus
\combinationsetpre, k \in \brc{0, 1, \cdots,  \floor{\frac{\secretnum}{\intervallen}}-1}, \ \exists \secretrv\in\secretvalueset_{k}: \support{\parameterdistribution_{\secretrv}}=\combinationsetsingle\cup\combinationsetpre$. Therefore, we can get that 
\begin{align*}
\privacyquan &= \sup_{\mathbb{P}_{\Theta|G}\in\brc{0,1}}  \log\sum_{\theta'\in\releaseparamsetall} \sup_{\secretrv\in \secretvalueset} \mathbb{P}_{\Theta'|\Theta}\bra{\theta'|\theta_{\secretrv}}\\
&\geq \sup_{\mathbb{P}_{\Theta|G}\in\brc{0,1}}  \log \sum_{k \in \brc{0, 1, \cdots,  \floor{\frac{\secretnum}{\intervallen}}-1}}\sum_{\theta'\in\releaseparamsetall_{\midpoint{k}}} \sup_{\secretrv\in \secretvalueset} \mathbb{P}_{\Theta'|\Theta}\bra{\theta'|\theta_{\secretrv}}\\
&\geq  \log \sum_{k \in \brc{0, 1, \cdots, \floor{\frac{\secretnum}{\intervallen}}-1}}\sum_{\theta'\in\releaseparamsetall_{\midpoint{k}}} \frac{1}{\binom{\samplenum\bra{1-\secretrv_{k,\intervallen}}+\categoryleast+\categorynoise+\beta\bra{\parameterdistribution'}-2}{\categoryleast+\categorynoise+\beta\bra{\parameterdistribution'}-2}}\\
&=  \log \sum_{k \in \brc{0, 1, \cdots, \floor{\frac{\secretnum}{\intervallen}}-1}}\sum_{i\in\brc{0,1,\cdots,\categoryactual-\categoryleast}}\frac{\binom{\categoryactual-\categoryleast}{i}\binom{\samplenum\bra{1-\secretrv_{k,\intervallen}}-i+\categoryleast+\categorynoise+i-2}{\categoryleast+\categorynoise+i-2}}{\binom{\samplenum\bra{1-\secretrv_{k,\intervallen}}+\categoryleast+\categorynoise+i-2}{\categoryleast+\categorynoise+i-2}}\\
&=  \log \sum_{k \in \brc{0, 1, \cdots, \floor{\frac{\secretnum}{\intervallen}}-1}}\sum_{i\in\brc{0,1,\cdots,\categoryactual-\categoryleast}}\frac{\bra{\samplenum\bra{1-\secretrv_{k,\intervallen}}}!\bra{\samplenum\bra{1-\secretrv_{k,\intervallen}}+\categoryleast+\categorynoise-2}!}{\bra{\samplenum\bra{1-\secretrv_{k,\intervallen}}-i}!\bra{\samplenum\bra{1-\secretrv_{k,\intervallen}}+i+\categoryleast+\categorynoise-2}!}\binom{\categoryactual-\categoryleast}{i}\\
&\geq \log \sum_{k \in \brc{0, 1, \cdots, \floor{\frac{\secretnum}{\intervallen}}-1}}\sum_{i\in\brc{0,1,\cdots,\categoryactual-\categoryleast}} \bra{\frac{\samplenum\intervallen-2\secretnum\bra{\categoryactual-\categoryleast}}{\samplenum\intervallen+2\secretnum\bra{\categoryactual+\categorynoise-2}}}^i\binom{\categoryactual-\categoryleast}{i}\\
&=  \log \sum_{k \in \brc{0, 1, \cdots, \floor{\frac{\secretnum}{\intervallen}}-1}}\bra{1+\frac{\samplenum\intervallen-2\secretnum\bra{\categoryactual-\categoryleast}}{\samplenum\intervallen+2\secretnum\bra{\categoryactual+\categorynoise-2}}}^{\categoryactual-\categoryleast}\\
&= \log \brb{\floor{ \frac{\secretnum}{\intervallen}}\bra{1+\frac{\samplenum\intervallen-2\secretnum\bra{\categoryactual-\categoryleast}}{\samplenum\intervallen+2\secretnum\bra{\categoryactual+\categorynoise-2}}}^{\categoryactual-\categoryleast}}.
\end{align*}

When $\log{\intervallen} <\categoryactual-\categoryleast < \intervallen$, we can easily get that
\begin{align*}
\privacyquan\geq \log \brb{\floor{ \frac{\secretnum}{\intervallen}}\bra{1+\frac{\samplenum\intervallen-2\secretnum\cdot{\log{\intervallen}}}{\samplenum\intervallen+2\secretnum\bra{\categoryactual+\categorynoise-2}}}^{\log{\intervallen}}}.
\end{align*}

When $\categoryactual-\categoryleast \geq \intervallen$, denote $\combinationsetsingle_{\theta}\triangleq \support{\theta}\setminus \combinationsetpre$. We can find a prior distribution $\paramdistribution$ of the distribution parameter such that $\mathbb{P}_{\Theta|G}\in\brc{0,1}$ and $\forall k \in \brc{0, 1, \cdots,  \floor{\frac{\secretnum}{\intervallen}}-1}, \secretrv_1, \secretrv_2\in \secretvalueset_k, \secretrv_1\not= \secretrv_2: \combinationsetsingle_{\theta_{\secretrv_1}}\cap \combinationsetsingle_{\theta_{\secretrv_1}} = \emptyset$ and $\brd{\combinationsetsingle_{\theta_{\secretrv_1}}}=\brd{\combinationsetsingle_{\theta_{\secretrv_2}}}=\floor{\frac{\categoryactual-\categoryleast}{\intervallen}}$. Denote $\zeta \triangleq \floor{\frac{\categoryactual-\categoryleast}{\intervallen}}$. Therefore, we can get that
\begin{align*}
\privacyquan &= \sup_{\mathbb{P}_{\Theta|G}\in\brc{0,1}}  \log\sum_{\theta'\in\releaseparamsetall} \sup_{\secretrv\in \secretvalueset} \mathbb{P}_{\Theta'|\Theta}\bra{\theta'|\theta_{\secretrv}}\\
&\geq  \log\brb{\sum_{k \in \brc{0, 1, \cdots, \left\lceil \frac{\secretnum}{\intervallen}\right\rceil-1}}\bra{\frac{\brd{\releaseparamsetori_{\midpoint{k}}}}{\binom{\samplenum\bra{1-\secretrv_{k,\intervallen}}+\categoryleast+\zeta+\categorynoise-2}{\categoryleast+\zeta+\categorynoise-2}}+\min\brc{\intervallen, \secretnum-k\intervallen}\cdot  \bra{1-\frac{\brd{\releaseparamsetori_{\midpoint{k}}}}{\binom{\samplenum\bra{1-\secretrv_{k,\intervallen}}+\categoryleast+\zeta+\categorynoise-2}{\categoryleast+\zeta+\categorynoise-2}}}}}\\
&\geq   \log\bra{\secretnum-\bra{\secretnum-\ceil{\frac{\secretnum}{\intervallen}}}\cdot\sup_{k \in \brc{0, 1, \cdots, \left\lceil \frac{\secretnum}{\intervallen}\right\rceil-1}}\frac{\binom{\samplenum\bra{1-\secretrv_{k,\intervallen}}+\categoryleast+\categorynoise-2}{\categoryleast+\categorynoise-2}}{\binom{\samplenum\bra{1-\secretrv_{k,\intervallen}}+\categoryleast+\zeta+\categorynoise-2}{\categoryleast+\zeta+\categorynoise-2}}}\\
&\geq   \log\bra{\secretnum-\bra{\secretnum-\ceil{\frac{\secretnum}{\intervallen}}}\cdot\bra{\frac{\categoryleast+\categorynoise+\zeta-2}{\samplenum\intervallen/2\secretnum+\categoryleast+\categorynoise-1}}^{\zeta}}.
\end{align*}

\end{proof}

\subsection{Tightness of Lower and Upper Bounds for SML}
\label{subsec:tightness}

\begin{proposition}
Let $\boundgaprr$ and $\boundgapquan$ be the gaps between SML lower and upper bounds of \rr{} and \qm{} under precision level $\samplenum$ respectively.

For \rr{}, we have
\begin{align*}
&\lim_{\samplenum\rightarrow\infty}\boundgaprr \leq 1, \quad \text{when } \categoryactual-\categoryleast< \secretnum, \ e^{\epsilon}-1\leq \binom{\samplenum+\categoryleast+\categorynoise-1}{\categoryleast+\categorynoise-1} / \secretnum,\\
&\lim_{\samplenum\rightarrow\infty}\boundgaprr = 0, \quad \text{when } \categoryactual-\categoryleast\geq \secretnum.
\end{align*}

For \qm{}, we have
\begin{align*}
&\lim_{\samplenum\rightarrow\infty}\boundgapquan \leq 1, \quad \text{when } \categoryactual-\categoryleast< \intervallen,\\
&\lim_{\samplenum\rightarrow\infty}\boundgapquan = 0, \quad \text{when } \categoryactual-\categoryleast\geq \intervallen.
\end{align*}
Specifically, when $\secretnum$ can be divided by $\intervallen$, $\lim_{\samplenum\rightarrow\infty}\boundgapquan=0$ whatever $\categoryactual-\categoryleast$ is.
\end{proposition}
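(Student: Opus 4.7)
The plan is to take the $\samplenum\to\infty$ limit of the explicit closed-form upper and lower bounds from Propositions on privacy upper/lower bounds, and then compare them term-by-term in each regime of $\categoryactual-\categoryleast$. Throughout, I will repeatedly use three asymptotic facts that hold with all other quantities ($\categoryleast,\categorynoise,\categoryactual,\secretnum,\intervallen,\epsilon$) held fixed: (i) $\samplenum/(\samplenum+O(1))\to 1$, so factors like $\bigl(1+\samplenum/(\samplenum+\categoryleast+\categorynoise-1)\bigr)^{\categoryactual-\categoryleast}\to 2^{\categoryactual-\categoryleast}$; (ii) $\bigl((\categoryleast+\categorynoise)/(\samplenum+\categoryactual+\categorynoise-1)\bigr)^{\categoryactual-\categoryleast}\to 0$ and similar ratios vanish; and (iii) the binomial coefficient $\binom{\samplenum+k-1}{k-1}$ grows polynomially in $\samplenum$, so terms of the form $(e^{\epsilon}-1)/\bigl(\binom{\samplenum+k-1}{k-1}+e^{\epsilon}-1\bigr)$ vanish while $\binom{\samplenum+k-1}{k-1}/\bigl(\binom{\samplenum+k-1}{k-1}+e^{\epsilon}-1\bigr)\to 1$. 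I will also use the trivial bound $\sml\le\log\secretnum$ to cap any upper bound expression that exceeds $\log\secretnum$.

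For \rr{}, I first take the limit in the regime $\categoryactual-\categoryleast\le\log\secretnum$. The limiting upper bound from the first expression in the \emph{min} becomes $\log\bigl(\secretnum/(1+\secretnum)+2^{\categoryactual-\categoryleast}\bigr)$ by (i)--(iii), while the lower bound from the first case of the lower-bound proposition becomes $\log\bigl(0+1\cdot 2^{\categoryactual-\categoryleast}\bigr)$. Subtracting,
\begin{align*}
\lim_{\samplenum\to\infty}\boundgaprr
\;\le\;\log\!\left(1+\frac{\secretnum/(1+\secretnum)}{2^{\categoryactual-\categoryleast}}\right)
\;\le\;\log 2\;=\;1.
\end{align*}
In the intermediate regime $\log\secretnum<\categoryactual-\categoryleast<\secretnum$, the first upper bound exceeds $\log\secretnum$ (since $2^{\categoryactual-\categoryleast}>\secretnum$) and is therefore capped, while the lower bound limit is $\log(2^{\log\secretnum})=\log\secretnum$, so the gap vanishes. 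When $\categoryactual-\categoryleast\ge\secretnum$, the third case of the lower bound converges to $\log\secretnum$ because $((\categoryleast+\categorynoise+\zeta-1)/(\samplenum+\categoryleast+\categorynoise))^{\zeta}\to 0$; the upper bound is likewise capped at $\log\secretnum$, giving $\boundgaprr\to 0$.

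For \qm{}, the same three asymptotics give, in the regime $\categoryactual-\categoryleast\le\log\intervallen$, an upper-bound limit of $\log\bigl(\lceil\secretnum/\intervallen\rceil\cdot 2^{\categoryactual-\categoryleast}\bigr)$ and a lower-bound limit of $\log\bigl(\lfloor\secretnum/\intervallen\rfloor\cdot 2^{\categoryactual-\categoryleast}\bigr)$. Hence
\begin{align*}
\lim_{\samplenum\to\infty}\boundgapquan
\;\le\;\log\!\left(\frac{\lceil\secretnum/\intervallen\rceil}{\lfloor\secretnum/\intervallen\rfloor}\right),
\end{align*}
which equals $0$ when $\intervallen\mid\secretnum$ (proving the ``specifically'' clause) and is at most $\log 2=1$ whenever $\secretnum\ge\intervallen$. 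In the intermediate regime $\log\intervallen<\categoryactual-\categoryleast<\intervallen$, the upper bound again gets capped at $\log\secretnum$ while the lower bound tends to $\log\bigl(\lfloor\secretnum/\intervallen\rfloor\cdot\intervallen\bigr)$, and writing $\secretnum=q\intervallen+r$ with $0\le r<\intervallen$ yields gap $\log\bigl(1+r/(q\intervallen)\bigr)\le 1$ (and $=0$ when $r=0$). For $\categoryactual-\categoryleast\ge\intervallen$, the term $\bigl((\categoryleast+\categorynoise+\zeta-2)/(\samplenum\intervallen/2\secretnum+\categoryleast+\categorynoise-1)\bigr)^{\zeta}$ in the lower bound vanishes, so the lower bound converges to $\log\secretnum$, matching the capped upper bound, so $\boundgapquan\to 0$.

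\textbf{Main obstacle.} The bookkeeping is the main source of difficulty: the upper- and lower-bound propositions each split into multiple cases (in $\categoryactual-\categoryleast$ vs.\ $\log\secretnum$ or $\log\intervallen$, and vs.\ $\secretnum$ or $\intervallen$), so I must verify that the correct branch of each \emph{min} or case statement becomes the binding one after the limit. In particular, for the intermediate regimes I have to explicitly invoke the trivial cap $\sml\le\log\secretnum$ to replace an upper bound that has grown beyond $\log\secretnum$; this cap is what actually forces the gap down to $\le 1$ (and to $0$ once $\categoryactual-\categoryleast$ is large enough). The rest is routine limit evaluation using (i)--(iii).
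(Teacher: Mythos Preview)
Your approach is essentially the same as the paper's: take the $\samplenum\to\infty$ limits of the closed-form upper and lower bounds from the privacy-bound propositions case by case, invoke the trivial cap $\sml\le\log\secretnum$ in the intermediate regimes, and subtract. The only minor difference is that you hold $\epsilon$ fixed (which yields slightly sharper limits, e.g., gap $\to 0$ rather than $\le\log(1+1/\secretnum)$ in the intermediate RR regime), whereas the paper works only from the constraint $e^\epsilon-1\le\binom{\samplenum+\categoryleast+\categorynoise-1}{\categoryleast+\categorynoise-1}/\secretnum$, which produces the factor $1/(1+1/\secretnum)$ in place of your limit value $1$; both versions establish the stated $\le 1$ bound.
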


\begin{proof}
    
\subsubsection{Randomized Response}

When $\categoryactual-\categoryleast\leq \log \secretnum$ and $e^{\epsilon}-1\leq \binom{\samplenum+\categoryleast+\categorynoise-1}{\categoryleast+\categorynoise-1} / \secretnum$, we have
\begin{align*}
\lim_{\samplenum\rightarrow\infty}\boundgaprr &\leq \lim_{\samplenum\rightarrow\infty}\log \Bigg(\frac{\secretnum \bra{e^{\epsilon}-1}}{\binom{\samplenum+\categoryleast+\categorynoise-1}{\categoryleast+\categorynoise-1}+e^{\epsilon}-1}+\frac{\binom{\samplenum+\categoryactual+\categorynoise-1}{\categoryactual+\categorynoise-1}}{\binom{\samplenum+\categoryactual+\categorynoise-1}{\categoryactual+\categorynoise-1}+e^{\epsilon}-1} \bra{1+\frac{\samplenum}{\samplenum+\categoryleast+\categorynoise-1}}^{\categoryactual-\categoryleast}\Bigg)\\ 
&\quad - \lim_{\samplenum\rightarrow\infty}\log\bra{\frac{\secretnum \bra{e^{\epsilon}-1}}{\binom{\samplenum+\categoryactual+\categorynoise-1}{\categoryactual+\categorynoise-1}+e^{\epsilon}-1}+\frac{\binom{\samplenum+\categoryleast+\categorynoise-1}{\categoryleast+\categorynoise-1}}{\binom{\samplenum+\categoryleast+\categorynoise-1}{\categoryleast+\categorynoise-1}+e^{\epsilon}-1} \bra{1+\frac{\samplenum-\bra{\categoryactual-\categoryleast}}{\samplenum+\categoryactual+\categorynoise-1}}^{\categoryactual-\categoryleast}}\\
&= \lim_{\samplenum\rightarrow\infty} \log\bra{\frac{\frac{\secretnum \bra{e^{\epsilon}-1}}{\binom{\samplenum+\categoryleast+\categorynoise-1}{\categoryleast+\categorynoise-1}+e^{\epsilon}-1}+\frac{\binom{\samplenum+\categoryactual+\categorynoise-1}{\categoryactual+\categorynoise-1}}{\binom{\samplenum+\categoryactual+\categorynoise-1}{\categoryactual+\categorynoise-1}+e^{\epsilon}-1}\cdot 2^{\categoryactual-\categoryleast}}{\frac{\secretnum \bra{e^{\epsilon}-1}}{\binom{\samplenum+\categoryactual+\categorynoise-1}{\categoryactual+\categorynoise-1}+e^{\epsilon}-1}+\frac{\binom{\samplenum+\categoryleast+\categorynoise-1}{\categoryleast+\categorynoise-1}}{\binom{\samplenum+\categoryleast+\categorynoise-1}{\categoryleast+\categorynoise-1}+e^{\epsilon}-1}\cdot 2^{\categoryactual-\categoryleast}}}\\
&\leq \log\bra{\frac{\frac{1}{1+1/\secretnum}+2^{\categoryactual-\categoryleast}}{\frac{1}{1+1/\secretnum}\cdot2^{\categoryactual-\categoryleast}}}\\
&= \log\bra{1+\frac{1}{\secretnum}+2^{-\bra{\categoryactual-\categoryleast}}}.
\end{align*}
Since we focus on the nontrivial case where $\secretnum>1$ and $\categoryactual-\categoryleast>0$, we have $\lim_{\samplenum\rightarrow\infty}\boundgaprr\leq 1$.

When $\log \secretnum< \categoryactual-\categoryleast< \secretnum$ and $e^{\epsilon}-1\leq \binom{\samplenum+\categoryleast+\categorynoise-1}{\categoryleast+\categorynoise-1} / \secretnum$, we have
\begin{align*}
\lim_{\samplenum\rightarrow\infty}\boundgaprr &\leq \log \secretnum - \lim_{\samplenum\rightarrow\infty}\log\bra{\frac{\secretnum \bra{e^{\epsilon}-1}}{\binom{\samplenum+\categoryactual+\categorynoise-1}{\categoryactual+\categorynoise-1}+e^{\epsilon}-1}+\frac{\binom{\samplenum+\categoryleast+\categorynoise-1}{\categoryleast+\categorynoise-1}}{\binom{\samplenum+\categoryleast+\categorynoise-1}{\categoryleast+\categorynoise-1}+e^{\epsilon}-1} \bra{1+\frac{\samplenum-\log \secretnum}{\samplenum+\categoryactual+\categorynoise-1}}^{\log\secretnum}}\\
&\leq \log\bra{\frac{\secretnum}{\frac{1}{1+1/\secretnum}\cdot2^{\log \secretnum}}}\\
&= \log\bra{1+\frac{1}{\secretnum}}\\
&<1.
\end{align*}

When $\categoryactual-\categoryleast\geq \secretnum$, denote $\zeta \triangleq \floor{\frac{\categoryactual-\categoryleast}{\secretnum}}$ and we have
\begin{align*}
\lim_{\samplenum\rightarrow\infty}\boundgaprr &\leq \log \secretnum - \lim_{\samplenum\rightarrow\infty}\log\bra{\secretnum-\bra{\secretnum-1}\bra{\frac{\categoryleast+\categorynoise+\zeta-1}{\samplenum+\categoryleast+\categorynoise}}^{\zeta}}
= \log \secretnum - \log \secretnum = 0.
\end{align*}

Hence, we can get that $\lim_{\samplenum\rightarrow\infty}\boundgaprr\leq 1$ when $\categoryactual-\categoryleast\leq \secretnum$ and $e^{\epsilon}-1\leq \binom{\samplenum+\categoryleast+\categorynoise-1}{\categoryleast+\categorynoise-1} / \secretnum$, and $\lim_{\samplenum\rightarrow\infty}\boundgaprr=0$ when $\categoryactual-\categoryleast> \secretnum$.

\subsubsection{Quantization Mechanism}

Let $\boundgapquan$ be the gap of lower and upper bounds of the privacy of randomized response under a dataset with $\samplenum$ samples.
When $\categoryactual-\categoryleast\leq \log\intervallen$, we have
\begin{align*}
\lim_{\samplenum\rightarrow\infty}\boundgapquan &\leq \lim_{\samplenum\rightarrow\infty}\log \brb{\ceil{ \frac{\secretnum}{\intervallen}}\bra{1+\frac{\samplenum}{\samplenum+\categoryleast+\categorynoise-2}}^{\categoryactual-\categoryleast}} - \lim_{\samplenum\rightarrow\infty}\log \brb{\floor{ \frac{\secretnum}{\intervallen}}\bra{1+\frac{\samplenum\intervallen-2\secretnum\bra{\categoryactual-\categoryleast}}{\samplenum\intervallen+2\secretnum\bra{\categoryactual+\categorynoise-2}}}^{\categoryactual-\categoryleast}}\\
& = \log \ceil{ \frac{\secretnum}{\intervallen}} - \log \floor{ \frac{\secretnum}{\intervallen}}\\
&\leq 1.
\end{align*}
Specifically, when $\secretnum$ can be divided by $\intervallen$, $\lim_{\samplenum\rightarrow\infty}\boundgapquan = 0$.

When $\log\intervallen<\categoryactual-\categoryleast< \intervallen$, we have
\begin{align*}
\lim_{\samplenum\rightarrow\infty}\boundgapquan &\leq \log\secretnum - \lim_{\samplenum\rightarrow\infty}\log \brb{\floor{ \frac{\secretnum}{\intervallen}}\bra{1+\frac{\samplenum\intervallen-2\secretnum\log\intervallen}{\samplenum\intervallen+2\secretnum\bra{\categoryactual+\categorynoise-2}}}^{\log\intervallen}}\\
&=\log\secretnum - \log\bra{\intervallen\cdot\floor{\frac{\secretnum}{\intervallen}}}\\
&<1.
\end{align*}
Specifically, when $\secretnum$ can be divided by $\intervallen$, $\lim_{\samplenum\rightarrow\infty}\boundgapquan = 0$.

When $\categoryactual-\categoryleast\geq \intervallen$, denote $\combinationsetsingle_{\theta}\triangleq \support{\theta}\setminus \combinationsetpre$ and we have
\begin{align*}
\lim_{\samplenum\rightarrow\infty}\boundgapquan &\leq \log\secretnum - \lim_{\samplenum\rightarrow\infty}\log\bra{\secretnum-\bra{\secretnum-\ceil{\frac{\secretnum}{\intervallen}}}\cdot\bra{\frac{\categoryleast+\categorynoise+\zeta-2}{\samplenum\intervallen/2\secretnum+\categoryleast+\categorynoise-1}}^{\zeta}} = \log\secretnum - \log\secretnum = 0.
\end{align*}

Hence, we can get that $\lim_{\samplenum\rightarrow\infty}\boundgapquan\leq 1$ when $\categoryactual-\categoryleast\leq \intervallen$, and $\lim_{\samplenum\rightarrow\infty}\boundgapquan=0$ otherwise. Specifically, when $\secretnum$ can be divided by $\intervallen$, $\lim_{\samplenum\rightarrow\infty}\boundgapquan=0$ whatever $\categoryactual-\categoryleast$ is.

\end{proof}

\subsection{Lower and Upper Bounds for Distortion}
\label{subsec:distortion}

\subsubsection{Randomized Response}

Similar to the proof of \cref{prop:mech_tradeoff}, we can bound the distortion of Randomized Response as
\begin{align*}
\frac{{\categoryleast+\categorynoise-1}}{\bra{\categoryleast+\categorynoise}\bra{1+\rrratio_1}}\leq \distortionrr \leq \frac{{\categoryactual+\categorynoise-1}}{\bra{\categoryactual+\categorynoise}\bra{1+\rrratio_2}},
\end{align*}
where $\rrratio_1 \triangleq \frac{e^{\epsilon}-1}{\binom{\samplenum+\categoryleast+\categorynoise-1}{\categoryleast+\categorynoise-1}}$ and $\rrratio_2 \triangleq \frac{e^{\epsilon}-1}{\binom{\samplenum+\categoryactual+\categorynoise-1}{\categoryactual+\categorynoise-1}}$.

\subsubsection{Quantization Mechanism} 
When secret is the fraction of a category, similar to the proof of \cref{prop:mech_tradeoff},  we can bound the distortion of Quantization Mechanism as 
\begin{align*}
\frac{1}{2}+\frac{\bra{\categoryleast+\categorynoise}\floor{\frac{\intervallen}{2}}-\samplenum}{2\samplenum\bra{\categoryleast+\categorynoise-1}} \leq \distortionquan \leq \frac{1}{2}+\frac{\bra{\categoryactual+\categorynoise}\floor{\frac{\intervallen}{2}}-\samplenum}{2\samplenum\bra{\categoryactual+\categorynoise-1}}.
\end{align*}
\section{Analysis of privacy performance with the increase of $\categorynoise$}
\label{sec:curve_explain}

In this section, we provide theoretical analysis on the patterns of decrease in privacy value under different numbers of missing feasible attribute combinations $\categoryactual-\categoryleast$, that is, with increasing $\categorynoise$, the privacy value drops linearly when $\categoryactual-\categoryleast=5$, drops in a concave curve when $\categoryactual-\categoryleast=20$, and always achieves its trivial upper bound $\log \secretnum$ when $\categoryactual-\categoryleast=100$. %

From \cref{prop:privacy_upperbounds} we know that the privacy of Quantization mechanism can be upper bounded by 
\begin{align*}
\privacyrr \leq \min\brc{\log \brb{\ceil{ \frac{\secretnum}{\intervallen}}\bra{1+\frac{\samplenum}{\samplenum+\categoryleast+\categorynoise-2}}^{\categoryactual-\categoryleast}}, \ 
\log\bra{\ceil{\frac{\secretnum}{\intervallen}}+\secretnum-\secretnum\cdot\bra{\frac{\categoryleast+\categorynoise-1}{\samplenum+\categoryactual+\categorynoise-2}}^{\categoryactual-\categoryleast}}}.
\end{align*}

For the Census Income Dataset we adopt in our experiment, we have $\samplenum=48,842, \categoryleast=22,381$, and $\log \secretnum=4.09$. When $\categoryactual-\categoryleast=5$ and $\categorynoise\ll\samplenum+\categoryleast$, based on simple calculation, we can get that the privacy is upper bounded by 
$
\privacyrr \leq \log \brb{\ceil{ \frac{\secretnum}{\intervallen}}\bra{1+\frac{\samplenum}{\samplenum+\categoryleast+\categorynoise-2}}^5}
$. Then we have
\begin{align*}
\privacyrr &\leq \log \brb{\ceil{ \frac{\secretnum}{\intervallen}}\bra{1+\frac{\samplenum}{\samplenum+\categoryleast+\categorynoise-2}}^5}\\
&= \log \ceil{\frac{\secretnum}{\intervallen}} + 5\brb{\log \bra{2\samplenum+\categoryleast+\categorynoise-2}-\log\bra{\samplenum+\categoryleast+\categorynoise-2}}\\
&= \log \ceil{\frac{\secretnum}{\intervallen}} + 5\brb{\log\frac{2\samplenum+\categoryleast-2}{\samplenum+\categoryleast-2} + \log \bra{1 + \frac{\categorynoise}{2\samplenum+\categoryleast-2}}-\log \bra{1 + \frac{\categorynoise}{\samplenum+\categoryleast-2}}}\\
&= \log \ceil{\frac{\secretnum}{\intervallen}} + 5\brb{\log\frac{2\samplenum+\categoryleast-2}{\samplenum+\categoryleast-2} - \frac{\samplenum}{\bra{2\samplenum+\categoryleast-2}\bra{\samplenum+\categoryleast-2}}\cdot \categorynoise} + o\bra{\frac{\categorynoise}{\samplenum+\categoryleast}},
\end{align*}
which indicates a linear function of $\categorynoise$ dominate the privacy upper bound. Therefore, when $\categoryactual-\categoryleast=5$ and $\categorynoise\ll\samplenum+\categoryleast$, the privacy upper bound of the Quantization mechanism decrease linearly with the increase of $\categorynoise$.

When $\categoryactual-\categoryleast=20$ and $\samplenum\leq \categorynoise<5\samplenum$, we can get that the privacy is upper bounded by 
$\privacyrr \leq 
\log\bra{\ceil{\frac{\secretnum}{\intervallen}}+\secretnum-\secretnum\cdot\bra{\frac{\categoryleast+\categorynoise-1}{\samplenum+\categoryactual+\categorynoise-2}}^{20}}$. Let
$
f\bra{\categorynoise} \triangleq 
\bra{\frac{\categoryleast+\categorynoise-1}{\samplenum+\categoryactual+\categorynoise-2}}^{20}
$ and we have
\begin{align*}
\frac{\mathrm{d}^2 f\bra{\categorynoise}}{\mathrm{d} \bra{\categorynoise}^2} = \frac{20\bra{\samplenum-1}\bra{\frac{\categoryleast+\categorynoise-1}{\samplenum+\categoryactual+\categorynoise-2}}^{18}}{\bra{\samplenum+\categoryactual+\categorynoise-2}^4}\cdot \brb{19\bra{\samplenum-1}-2\bra{\categoryleast+\categorynoise-1}}>0.
\end{align*}
Therefore, $g\bra{\categorynoise}=\ceil{\frac{\secretnum}{\intervallen}}+\secretnum-\secretnum\cdot\bra{\frac{\categoryleast+\categorynoise-1}{\samplenum+\categoryactual+\categorynoise-2}}^{20}=\ceil{\frac{\secretnum}{\intervallen}}+\secretnum-\secretnum\cdot f\bra{\categorynoise}$ is a concave function. Since $l(x)=\log\bra{x}$ is a concave and monotonically increasing function, we have $h\bra{\categorynoise} = \log\bra{\ceil{\frac{\secretnum}{\intervallen}}+\secretnum-\secretnum\cdot\bra{\frac{\categoryleast+\categorynoise-1}{\samplenum+\categoryactual+\categorynoise-2}}^{20}} = l\circ g\bra{\categorynoise}$ is a concave function. Therefore, when $\categoryactual-\categoryleast=20$ and $\categorynoise<5\samplenum$, with the increase of $\categorynoise$, the privacy upper bound of the Quantization mechanism decreases in a concave curve.

When $\categoryactual-\categoryleast=100$ and $\categorynoise<5\samplenum$, based on simple calculation, we can get that the privacy upper bound is always $\log \secretnum$.

More generally, we analyze the patterns of decrease in privacy value with increasing $\categorynoise$ under the case where $\categorynoise\ll \categoryleast$ and $\categoryactual-\categoryleast \ll \samplenum+\categoryleast$. Under this setting, we can get that
$\frac{\samplenum}{\samplenum+\categoryleast+\categorynoise-2} \approx \frac{\samplenum}{\samplenum+\categoryleast}\triangleq r$ and $\frac{\categoryleast+\categorynoise-1}{\samplenum+\categoryactual+\categorynoise-2} \approx \frac{\categoryleast}{\samplenum+\categoryleast}=1-r$. Therefore, when $\log \brb{\ceil{ \frac{\secretnum}{\intervallen}}\bra{1+\frac{\samplenum}{\samplenum+\categoryleast+\categorynoise-2}}^{\categoryactual-\categoryleast}} \leq 
\log\bra{\ceil{\frac{\secretnum}{\intervallen}}+\secretnum-\secretnum\cdot\bra{\frac{\categoryleast+\categorynoise-1}{\samplenum+\categoryactual+\categorynoise-2}}^{\categoryactual-\categoryleast}}$, we have
\begin{align*}
\ceil{\frac{\secretnum}{\intervallen}}\bra{1+r}^{\categoryactual-\categoryleast} + \secretnum \bra{1-r}^{\categoryactual-\categoryleast} \leq \ceil{\frac{\secretnum}{\intervallen}} + \secretnum.
\end{align*}
Let $y\bra{\categoryactual-\categoryleast} = \ceil{\frac{\secretnum}{\intervallen}}\bra{1+r}^{\categoryactual-\categoryleast} + \secretnum \bra{1-r}^{\categoryactual-\categoryleast}$, and we can get that 
$\frac{\mathrm{d}^2 y\bra{\categoryactual-\categoryleast}}{\mathrm{d} \bra{\categoryactual-\categoryleast}^2}\geq 0$ when $\categoryactual-\categoryleast \geq 0$. Let $\alpha = \frac{\log \secretnum - \log \ceil{ \frac{\secretnum}{\intervallen}}}{\log \bra{1+r}}$, since $\bra{1-r}^\alpha \leq 1 / \bra{1+r}^\alpha$, we have
\begin{align*}
\ceil{\frac{\secretnum}{\intervallen}}\bra{1+r}^{\alpha} + \secretnum \bra{1-r}^{\alpha} \leq \ceil{\frac{\secretnum}{\intervallen}} + \secretnum.
\end{align*}
Since $y\bra{\categoryactual-\categoryleast}$ is a convex function and $y\bra{0} = \ceil{\frac{\secretnum}{\intervallen}} + \secretnum$, we can get that 
$\ceil{\frac{\secretnum}{\intervallen}}\bra{1+r}^{\categoryactual-\categoryleast} + \secretnum \bra{1-r}^{\categoryactual-\categoryleast} \leq \ceil{\frac{\secretnum}{\intervallen}} + \secretnum$ when $0\leq \categoryactual-\categoryleast\leq \alpha$.
Additionally, we can easily get that when $\categoryactual-\categoryleast > \alpha$,  \begin{align*}
\min\brc{\log \brb{\ceil{ \frac{\secretnum}{\intervallen}}\bra{1+\frac{\samplenum}{\samplenum+\categoryleast+\categorynoise-2}}^{\categoryactual-\categoryleast}}, \ 
\log\bra{\ceil{\frac{\secretnum}{\intervallen}}+\secretnum-\secretnum\cdot\bra{\frac{\categoryleast+\categorynoise-1}{\samplenum+\categoryactual+\categorynoise-2}}^{\categoryactual-\categoryleast}}} > \log \secretnum.
\end{align*}
Therefore, the privacy value is upper bounded by 
\begin{align*}
\privacyrr \leq
\begin{cases}
\log \brb{\ceil{ \frac{\secretnum}{\intervallen}}\bra{1+\frac{\samplenum}{\samplenum+\categoryleast+\categorynoise-2}}^{\categoryactual-\categoryleast}}, & 0\leq \categoryactual-\categoryleast< \alpha,\\
\log \secretnum, &\beta\leq\categoryactual-\categoryleast.
\end{cases}
\end{align*}

Similar to the analyses above, we can get that with increasing $\categorynoise$, the privacy value drops linearly when $0\leq \categoryactual-\categoryleast< \alpha$, and always achieves its trivial upper bound $\log \secretnum$ when $\categoryactual-\categoryleast \geq \alpha$.
\section{Privacy-Distortion Tradeoff Analysis when $\combinationset\subseteq\combinationsetall\not=\combinationsetestimate$}
\label{app:exp_more_res}

\begin{figure}[htbp]
    \centering
    \begin{subfigure}{0.46\textwidth}
         \centering
        \includegraphics[width=1\linewidth]{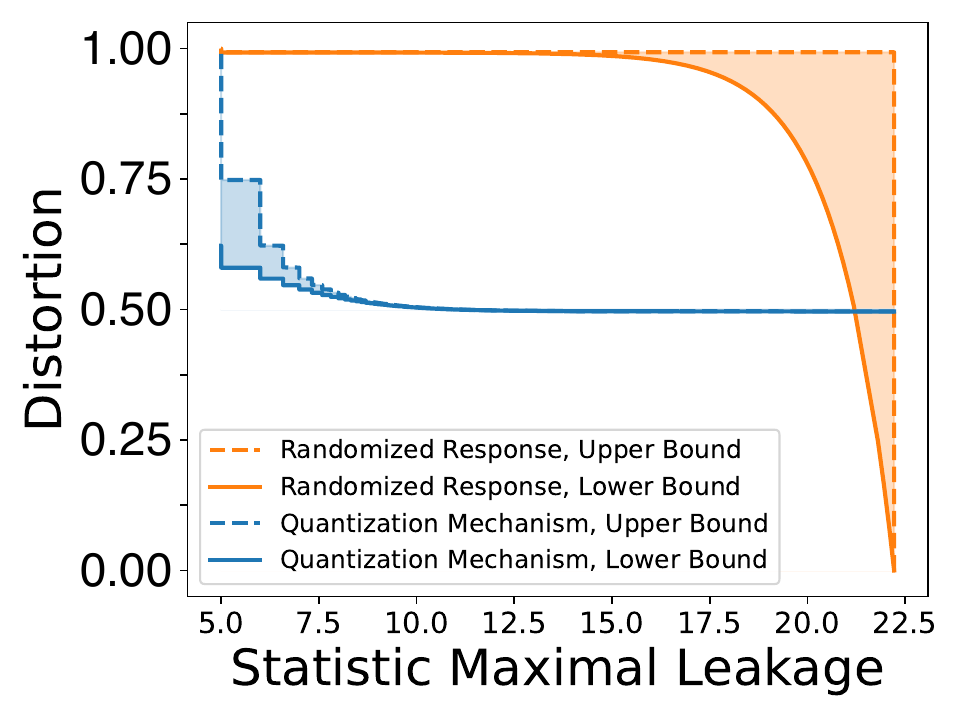}
         \caption{$\categoryactual-\categoryleast=5$.}
     \end{subfigure}
     \begin{subfigure}{0.46\textwidth}
         \centering
        \includegraphics[width=1\linewidth]{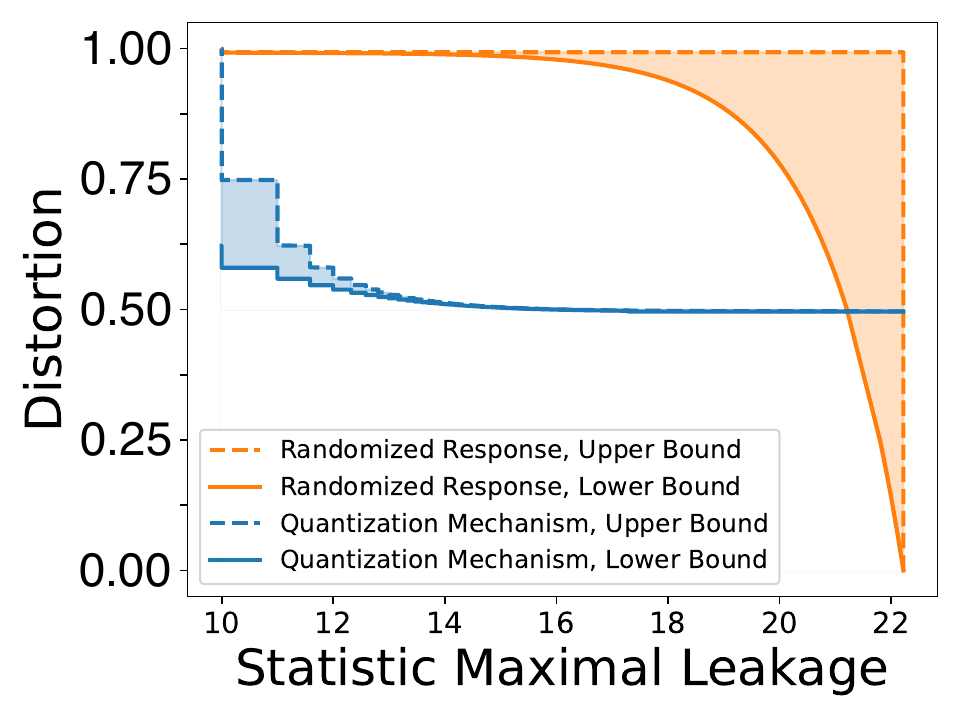}
         \caption{$\categoryactual-\categoryleast=10$.}
     \end{subfigure}

    \begin{subfigure}{0.46\textwidth}
    \centering
    \includegraphics[width=1\linewidth]{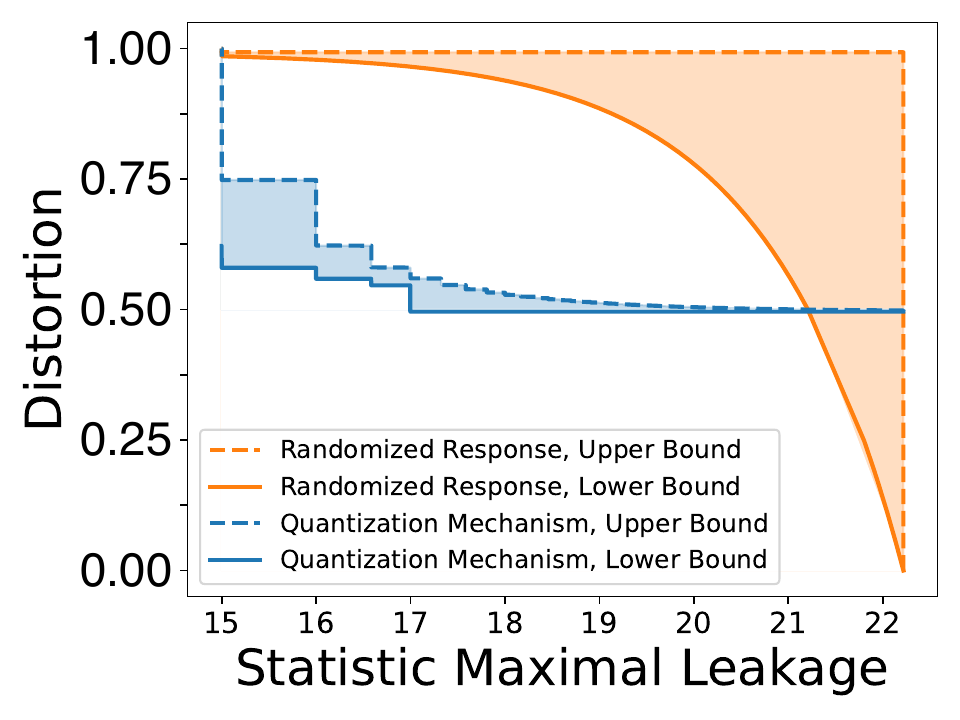}
        \caption{$\categoryactual-\categoryleast=15$.}
     \end{subfigure}
     \begin{subfigure}{0.46\textwidth}
    \centering
    \includegraphics[width=1\linewidth]{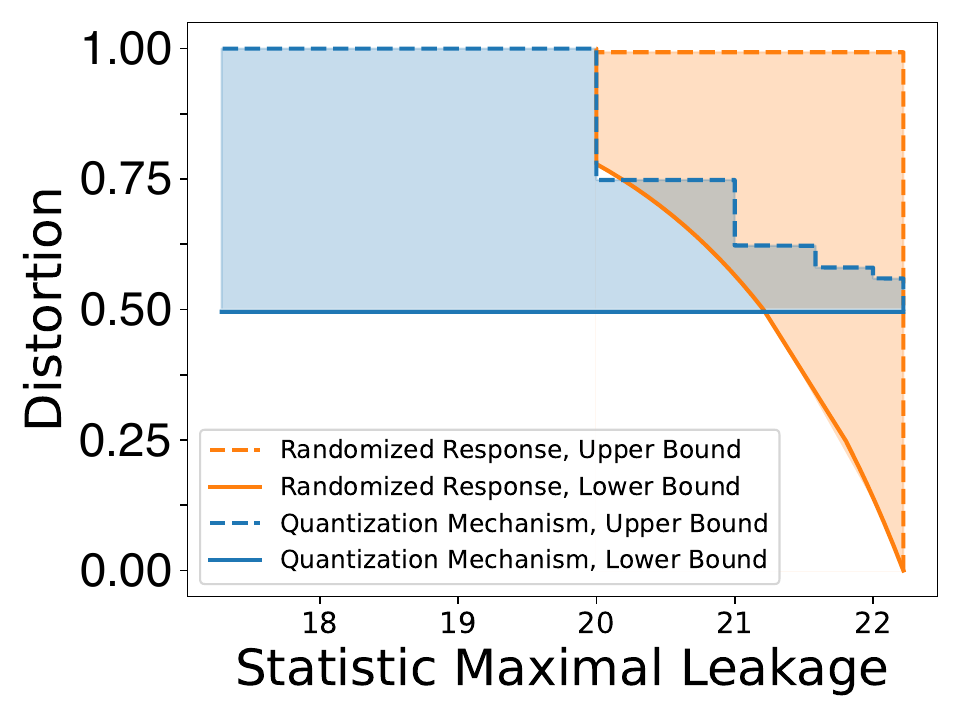}
        \caption{$\categoryactual-\categoryleast=20$.}
     \end{subfigure}
     \caption{The lower and upper bounds of the privacy-distortion performance of quantization mechanism and Randomized Response with different number of feasible attribute combinations the data holder misses, i.e., $\categoryactual-\categoryleast$.}
\end{figure}

We conduct numerical experiments to illustrate the lower and upper bounds of the privacy-distortion performance of our proposed quantization mechanism and Randomized Response. We consider a dataset with 5M samples and 100 feasible attribute combinations in total, i.e., $\categoryactual=100$. We vary the number of actually feasible attribute combinations that the data holder does not consider to be valid, i.e., $\categoryactual-\categoryleast$, from $5$ to $20$. We assume the data holder introduces $40$ invalid attribute combinations during the data release process, i.e., $\categorynoise=40$. We can observe that with the increase of the number of missed feasible attribute combinations, the minimal achievable statistic maximal leakage of both mechanisms increase. The reason is that correlation may exists between the secret value and the appearance of those missed feasible attribute combinations in the original dataset. The existence of certain missed feasible attribute combinations in the released dataset attacker may serve as the indication of certain secret values. We can also observe that with smaller number of missed feasible attribute combinations, the gaps between the lower and upper bounds of the privacy-distortion performance for both mechanism are smaller. With different number of missed feasible attribute combination, quantization mechanism consistently outperforms Randomized Response under most leakage regimes (statistic maximal leakage smaller than $21$).
\section{Alternative Privacy Measures}
\label{sec:metric_comparison}

Recall that our proposed \privname{} can be written as \begin{align*}
\sml = \sup_{\paramdistribution, \mathbb{P}_{\hat{G}|\Theta'}}\log \frac{\probof{\hat{G}=G}}{\sup_{\secretrv\in \secretvalueset} \mathbb{P}_G\bra{\secretrv}},
\end{align*}
which aims to measure the information gain of the attacker’s knowledge about
what the secret value is most likely to be after data release under the worst case prior $\paramdistribution$ and attack strategy $\mathbb{P}_{\hat{G}|\Theta'}$. 

In scenarios related to social justice and fairness, where knowing what the secret is most unlikely to be also reveals sensitive information, we propose an alternative privacy metric to measure the information gain of the attacker’s knowledge about
what the secret value is most unlikely to be:
\begin{align*}
\smlinf=
\begin{cases}
\infty, & \exists \paramdistribution, \mathbb{P}_{\hat{G}|\Theta'}:\ \mathbb{P}_G>0 \ \&\  \probof{\hat{G}=G}=0,\\
\sup_{\paramdistribution:\mathbb{P}_G>0; \mathbb{P}_{\hat{G}|\Theta'}}\log \frac{\inf_{\secretrv\in \secretvalueset} \mathbb{P}_G\bra{\secretrv}}{\probof{\hat{G}=G}}, & \text{otherwise}.
\end{cases}
\end{align*}
Similarly to \privname{}, $\smlinf$ is also defined under the worst-case prior $\paramdistribution$ and the attack strategy $\mathbb{P}_{\hat{G}|\Theta'}$, but consider the lowest prior probability of the secret value in the numerator of the ratio. A higher value of $\smlinf$ indicates that the attacker gains more information about the least likely secret. We set $\smlinf$ to infinity when the adversary can successfully infer what the secret value is not after the data release.

Furthermore, statistic maximal leakage measures the attack success probability over the randomness of the
released data. In a more conservative scenario, however, one may prefer to measure the attack success probability under the worst-case output. For this scenario, we propose alternative privacy measures, $\smlworse$ and $\smlworseinf$, which quantify the information leakage of the most likely and most unlikely secret values, respectively:
\begin{align*}
\smlworse&=\sup_{\paramdistribution, \theta'\in\Theta',\mathbb{P}_{\hat{G}|\theta'}} \log\frac{\probof{\hat{G}=G}}{\sup_{\secretrv\in \secretvalueset} \mathbb{P}_G\bra{\secretrv}},\\
\smlworseinf&=
\begin{cases}
\infty, & \exists \paramdistribution, \theta'\in\releaseparamset, \mathbb{P}_{\hat{G}|\theta'}:\ \mathbb{P}_G>0 \ \&\  \probof{\hat{G}=G}=0,\\
\sup_{\paramdistribution:\mathbb{P}_G>0;\theta'\in\releaseparamset; \mathbb{P}_{\hat{G}|\theta'}}\log \frac{\inf_{\secretrv\in \secretvalueset} \mathbb{P}_G\bra{\secretrv}}{\probof{\hat{G}=G}}, & \text{otherwise}.
\end{cases}
\end{align*}
In addition to the worst-case prior and attack strategy, $\smlworse$ and $\smlworseinf$ are also defined under the worst-case data output $\theta'$; thus, the attack strategy is represented as $\mathbb{P}_{\hat{G}|\theta'}$.

Additionally, inspired by differential privacy, we design the indistinguishability-based privacy measure $\ipmetric$, and we say that a data release mechanism $\mech$ is $\ip$-$\ipmetric$ if
\begin{align*} 
\probof{\mathcal{M}\bra{\theta_1}\in \releaseparamset_0} \leq e^{\ip}\cdot\probof{\mathcal{M}\bra{\theta_2}\in \releaseparamset_0}, \quad \forall \theta_1, \theta_2\in \paramset, \secretnotation\bra{\theta_1}\not= \secretnotation\bra{\theta_2}; \forall \releaseparamset_0 \subseteq \releaseparamset.
\end{align*}

We analyze the connections between those privacy measures in \cref{prop:alternative_metric}.

\begin{proposition}
\label{prop:alternative_metric}
If a data release mechanism $\mech$ is $\ip$-$\ipmetric$, then it satisfies $\sml\leq \ip$ and $\smlinf\leq \ip$.

If a data release mechanism $\mech$ satisfies $\smlworse\leq \ip_1$ and $\smlworseinf\leq \ip_2$, then it is $\bra{\ip_1+\ip_2}$-$\ipmetric$.
\end{proposition}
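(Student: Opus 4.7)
The plan is to handle the two directions independently, each via the same key observation: the $\ipmetric$ condition $\probof{\mech(\theta_1) \in \releaseparamset_0} \leq e^\ip \probof{\mech(\theta_2) \in \releaseparamset_0}$ is, in the discrete setting, equivalent to the pointwise statement $\mathbb{P}_{\Theta'|\Theta}(\theta'\mid\theta_1) \leq e^\ip \mathbb{P}_{\Theta'|\Theta}(\theta'\mid\theta_2)$ for every $\theta' \in \releaseparamset$ and every $\theta_1, \theta_2$ with $\secretnotation(\theta_1) \neq \secretnotation(\theta_2)$. For the first direction, I will also show this condition extends to the ``marginal over each secret class'' form $\mathbb{P}_{\Theta'|G}(\theta'\mid g_1) \leq e^\ip \mathbb{P}_{\Theta'|G}(\theta'\mid g_2)$ for all $g_1 \neq g_2$ by writing each $\mathbb{P}_{\Theta'|G}(\theta'\mid g)$ as a convex combination of the pointwise conditionals and then absorbing that convex combination into the bound.

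For direction 1, fix any prior $\paramdistribution$ and any attacker $\mathbb{P}_{\hat G\mid\Theta'}$, and introduce the quantity $q_{g,g'} \triangleq \sum_{\theta'} \mathbb{P}_{\Theta'\mid G}(\theta'\mid g') \mathbb{P}_{\hat G\mid\Theta'}(g\mid\theta')$, which is the probability the attacker outputs $g$ when the true secret is $g'$. Row-stochasticity gives $\sum_{g} q_{g,g'} = 1$ for every $g'$. The extended $\ipmetric$ bound yields $q_{g,g'} \leq e^\ip q_{g, g''}$ whenever $g' \neq g''$; since the trivial bound $q_{g,g'} \leq e^\ip q_{g, g'}$ also holds, this gives $q_{g,g'} \leq e^\ip q_{g,g''}$ for \emph{all} pairs. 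Averaging yields $|\secretvalueset|\, q_{g,g} \leq e^\ip \sum_{g'} q_{g,g'} = e^\ip$, and symmetrically $|\secretvalueset|\, q_{g,g} \geq e^{-\ip}$. Summing over $g$ gives $e^{-\ip} \leq \sum_g q_{g,g} \leq e^\ip$, and writing $\probof{\hat G = G} = \sum_g \mathbb{P}_G(g) q_{g,g}$, I sandwich this sum between $e^{-\ip}\inf_g \mathbb{P}_G(g)$ and $e^\ip \sup_g \mathbb{P}_G(g)$, which yields $\sml \leq \ip$ and $\smlinf \leq \ip$ after taking logs and suprema. The bound also rules out the infinity case in $\smlinf$ when the prior has full support.

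For direction 2, I will exploit the freedom to pick the prior in the definitions of $\smlworse$ and $\smlworseinf$. Fix $\theta_1, \theta_2 \in \paramset$ with $\secretnotation(\theta_i) = g_i$, $g_1 \neq g_2$, and $\theta' \in \releaseparamset$. Construct a prior that is supported on a single representative $\theta_g \in \secretsetofabbr{g}$ for each $g \in \secretvalueset$, with $\theta_{g_i} = \theta_i$, and assigns uniform mass $1/|\secretvalueset|$ to each. This induces a uniform $\mathbb{P}_G$, so $\sup_g \mathbb{P}_G(g) = \inf_g \mathbb{P}_G(g) = 1/|\secretvalueset|$. Applying the two hypotheses at the chosen $\theta'$ gives $\sup_g \mathbb{P}_{G\mid\Theta'}(g\mid\theta') \leq e^{\ip_1}/|\secretvalueset|$ and $\inf_g \mathbb{P}_{G\mid\Theta'}(g\mid\theta') \geq e^{-\ip_2}/|\secretvalueset|$. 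Because only $\theta_i$ carries mass inside $\secretsetofabbr{g_i}$, Bayes' rule gives $\mathbb{P}_{G\mid\Theta'}(g_i\mid\theta') \propto \mathbb{P}_{\Theta'\mid\Theta}(\theta'\mid\theta_i)$ with a common normalizer, so the pointwise ratio $\mathbb{P}_{\Theta'\mid\Theta}(\theta'\mid\theta_1)/\mathbb{P}_{\Theta'\mid\Theta}(\theta'\mid\theta_2) = \mathbb{P}_{G\mid\Theta'}(g_1\mid\theta')/\mathbb{P}_{G\mid\Theta'}(g_2\mid\theta')$ is bounded by $e^{\ip_1+\ip_2}$. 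Summing over $\theta' \in \releaseparamset_0$ upgrades the pointwise bound to the set-version of $\ipmetric$.

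The main obstacle I anticipate is the prior construction in direction 2: I need a prior that simultaneously places positive mass on every secret value (otherwise $\inf_g \mathbb{P}_G(g) = 0$ and $\smlworseinf$ is vacuous) while also isolating the contributions of $\theta_1$ and $\theta_2$ so the posterior ratio collapses exactly to the likelihood ratio I want to bound. The uniform-over-representatives prior achieves both, and it is the right choice because it equalizes the sup and inf of $\mathbb{P}_G$, so that the two hypotheses combine multiplicatively with no slack.
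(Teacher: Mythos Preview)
Your Part~2 argument is correct and is essentially the paper's: construct a uniform-over-representatives prior with $\theta_{g_i}=\theta_i$, use $\smlworse\le\mu_1$ and $\smlworseinf\le\mu_2$ at the fixed $\theta'$ to sandwich $\mathbb{P}_{G\mid\Theta'}(\cdot\mid\theta')$, and read off the likelihood ratio via Bayes.

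Your Part~1 argument, however, has a genuine gap. After correctly establishing $q_{g,g'}\le e^{\mu}q_{g,g''}$ for all $g',g''$, you write
\[
|\secretvalueset|\,q_{g,g}\;\le\; e^{\mu}\sum_{g'}q_{g,g'}\;=\;e^{\mu},
\]
but $\sum_{g'}q_{g,g'}$ sums over the \emph{conditioning} (true-secret) index and is \emph{not} equal to $1$; only $\sum_{g}q_{g,g'}=1$. In fact the pointwise claim $q_{g,g}\le e^{\mu}/|\secretvalueset|$ is false: take $\mu=0$, so $\Theta'\perp G$ and $q_{g,g'}=p_g$ depends only on $g$ with $\sum_g p_g=1$; then $q_{1,1}=p_1$ can be any value in $[0,1]$.

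The fix is immediate with the indices swapped. Pick any reference $g_0\in\secretvalueset$; from $q_{g,g}\le e^{\mu}q_{g,g_0}$ sum over the \emph{first} index to get
\[
\sum_{g}q_{g,g}\;\le\; e^{\mu}\sum_{g}q_{g,g_0}\;=\;e^{\mu},
\]
and symmetrically $\sum_{g}q_{g,g}\ge e^{-\mu}$. Your sandwiching step then goes through unchanged. The paper sidesteps this whole $q$-matrix computation by invoking the closed form $\sml=\sup_{\mathbb{P}_{\Theta|G}\in\{0,1\}}\log\sum_{\theta'}\sup_{g}\mathbb{P}_{\Theta'|\Theta}(\theta'\mid\theta_g)$ from Proposition~\ref{prop:sml_calculation} and bounding $\sup_{g}\mathbb{P}_{\Theta'|\Theta}(\theta'\mid\theta_g)\le e^{\mu}\mathbb{P}_{\Theta'|\Theta}(\theta'\mid\theta_0)$ directly; your route is more self-contained but needs the corrected summation.
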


\begin{proof}
We first prove that $\sml\leq \ip$ and $\smlinf\leq \ip$ if a data release mechanism $\mech$ is $\ip$-$\mathcal{L}_{DP}$ as follows.

When a mechanism $\mech$ is $\ip$-$\ipmetric$, i.e.,
\begin{align*}
\probof{\mathcal{M}\bra{\theta_1}\in \releaseparamset_0} \leq e^{\ip}\cdot\probof{\mathcal{M}\bra{\theta_2}\in \releaseparamset_0}, \quad \forall \theta_1, \theta_2\in \paramset, \secretnotation\bra{\theta_1}\not= \secretnotation\bra{\theta_2}; \forall \releaseparamset_0 \subseteq \releaseparamset,
\end{align*}
we have
\begin{align}
\label{eqn:ip_detail}
\mathbb{P}_{\Theta'|\Theta}\bra{\theta'|\theta_1} \leq e^{\ip} \cdot \mathbb{P}_{\Theta'|\Theta}\bra{\theta'|\theta_2}, \quad \forall \theta_1, \theta_2\in \paramset, \secretnotation\bra{\theta_1}\not= \secretnotation\bra{\theta_2}; \forall \theta'\in \releaseparamset.
\end{align}

From \cref{prop:sml_calculation}, we know that
\begin{align*}
\sml &= \sup_{\mathbb{P}_{\Theta|G}\in\brc{0,1}}\log \sum_{\theta'\in\releaseparamset} \sup_{\secretrv\in \secretvalueset} \mathbb{P}_{\Theta'|\Theta}\bra{\theta'|\theta_\secretrv}.
\end{align*}
We can find a $\theta_0\in \paramset$ that satisfies $\exists \secretrv\in\secretvalueset, \mathbb{P}^*_{\Theta|G} \in \arg\sup_{\mathbb{P}_{\Theta|G}}\sml: \theta_0 = \theta_{\secretrv}$. 
Under the prior distribution $\mathbb{P}^*_{\Theta|G}$, we can get that
\begin{align*}
\sml & = \log \sum_{\theta'\in\releaseparamset} \sup_{\secretrv\in \secretvalueset} \mathbb{P}_{\Theta'|\Theta}\bra{\theta'|\theta_\secretrv}
\leq \log \sum_{\theta'\in\releaseparamset} e^{\ip} \cdot  \mathbb{P}_{\Theta'|\Theta}\bra{\theta'|\theta_0}
= \ip.
\end{align*}

From \cref{eqn:ip_detail}, we know that $\forall \theta\in\paramset, \theta'\in \releaseparamset: \mathbb{P}_{\Theta'|\Theta}\bra{\theta'|\theta}>0$. For the value of $\smlworse$, since $\forall \paramset: \mathbb{P}_G>0, \forall \mathbb{P}_{\hat{G}|\Theta'},$
\begin{align*}
\probof{\hat{G}=G} &\geq \sum_{\theta'\in\releaseparamset} \inf_{\secretrv\in\secretvalueset}\mathbb{P}_{G|\Theta'}\bra{\secretrv|\theta'}\mathbb{P}_{\Theta'}\bra{\theta'}\\
& = \sum_{\theta'\in\releaseparamset} \inf_{\secretrv\in\secretvalueset}\mathbb{P}_{\Theta'|G}\bra{\theta'|\secretrv}\mathbb{P}_{G}\bra{g}\\
&= \sum_{\theta'\in\releaseparamset} \inf_{\secretrv\in\secretvalueset}\sum_{\theta\in\secretset}\mathbb{P}_{\Theta'|\Theta}\bra{\theta'|\theta}\mathbb{P}_{\Theta|G}\bra{\theta|g}\mathbb{P}_{G}\bra{g}\\
&> 0,
\end{align*}
we can get that $\smlworse<\infty$.
Similar to the proof of \cref{prop:sml_calculation}, we can get that when $\smlworse<\infty$,
\begin{align*}
\smlworse &= \sup_{\mathbb{P}_{\Theta|G}\in\brc{0,1}}\log \frac{1}{\sum_{\theta'\in\releaseparamset} \inf_{\secretrv\in \secretvalueset} \mathbb{P}_{\Theta'|\Theta}\bra{\theta'|\theta_\secretrv}}.
\end{align*}
We can find a $\theta_1\in \paramset$ that satisfies $\exists \secretrv\in\secretvalueset, \mathbb{P}^*_{\Theta|G} \in \arg\sup_{\mathbb{P}_{\Theta|G}}\smlworse: \theta_1 = \theta_{\secretrv}$. 
Under the prior distribution $\mathbb{P}^*_{\Theta|G}$, we can get that
\begin{align*}
\smlworse & = \log \frac{1}{\sum_{\theta'\in\releaseparamset} \inf_{\secretrv\in \secretvalueset} \mathbb{P}_{\Theta'|\Theta}\bra{\theta'|\theta_\secretrv}}
\leq \log \frac{1}{\sum_{\theta'\in\releaseparamset} e^{-\ip} \cdot  \mathbb{P}_{\Theta'|\Theta}\bra{\theta'|\theta_0}}
= \ip.
\end{align*}

Above all, we have $\sml\leq \ip$ and $\smlinf\leq \ip$ when a data release mechanism $\mech$ is $\ip$-$\mathcal{L}_{DP}$.

If $\mech$ satisfies $\smlworse\leq \ip_1$ and $\smlworseinf\leq \ip_2$, we have
\begin{align*}
\ip_1+\ip_2
&\geq
\smlworse + \smlworseinf\\ 
&= \sup_{\paramdistribution, \theta'\in\Theta',\mathbb{P}_{\hat{G}|\theta'}} \log\frac{\probof{\hat{G}=G}}{\sup_{\secretrv\in \secretvalueset} \mathbb{P}_G\bra{\secretrv}} + \sup_{\paramdistribution:\mathbb{P}_G>0;\theta'\in\releaseparamset; \mathbb{P}_{\hat{G}|\theta'}}\log \frac{\inf_{\secretrv\in \secretvalueset} \mathbb{P}_G\bra{\secretrv}}{\probof{\hat{G}=G}}\\
&\geq \sup_{\paramdistribution:\mathbb{P}_G=1/\secretnum,\atop\theta'\in\releaseparamset} \bra{\log\frac{\sup_{\mathbb{P}_{\hat{G}|\theta'}}\probof{\hat{G}=G}}{\sup_{\secretrv\in \secretvalueset} \mathbb{P}_G\bra{\secretrv}}+\log \frac{\inf_{\secretrv\in \secretvalueset} \mathbb{P}_G\bra{\secretrv}}{\inf_{\mathbb{P}_{\hat{G}|\theta'}}\probof{\hat{G}=G}}}\\
&= \sup_{\paramdistribution:\mathbb{P}_G=1/\secretnum,\atop\theta'\in\releaseparamset}\log \frac{\sup_{\mathbb{P}_{\hat{G}|\theta'}}\probof{\hat{G}=G}}{\inf_{\mathbb{P}_{\hat{G}|\theta'}}\probof{\hat{G}=G}}\\
&= \sup_{\paramdistribution:\mathbb{P}_G=1/\secretnum,\atop\theta'\in\releaseparamset}\log\frac{\sup_{\secretrv\in\secretvalueset}\mathbb{P}\bra{\secretrv|\theta'}}{\inf_{\secretrv\in\secretvalueset}\mathbb{P}\bra{\secretrv|\theta'}}\\
&= \sup_{\paramdistribution:\mathbb{P}_G=1/\secretnum,\atop\theta'\in\releaseparamset}\log\frac{\sup_{\secretrv\in\secretvalueset}\mathbb{P}\bra{\theta'|\secretrv}\cdot\mathbb{P}_G\bra{\secretrv}}{\inf_{\secretrv\in\secretvalueset}\mathbb{P}\bra{\theta'|\secretrv}\cdot\mathbb{P}_G\bra{\secretrv}}\\
&\geq \sup_{\paramdistribution:\mathbb{P}_G=1/\secretnum,\mathbb{P}_{\Theta|G}\in\brc{0,1},\atop\theta'\in\releaseparamset}\log\frac{\sup_{\secretrv\in\secretvalueset}\mathbb{P}\bra{\theta'|\theta_\secretrv}}{\inf_{\secretrv\in\secretvalueset}\mathbb{P}\bra{\theta'|\theta_\secretrv}},
\end{align*}
where $\theta_{\secretrv}$ satisfies $\mathbb{P}_{\Theta|G}\bra{\theta_\secretrv|g}=1$. Therefore, we can get that 
\begin{align*}
\mathbb{P}_{\Theta'|\Theta}\bra{\theta'|\theta_1} \leq e^{\ip_1+\ip_2} \cdot \mathbb{P}_{\Theta'|\Theta}\bra{\theta'|\theta_2}, \quad \forall \theta_1, \theta_2\in \paramset, \secretnotation\bra{\theta_1}\not= \secretnotation\bra{\theta_2}; \forall \theta'\in \releaseparamset,
\end{align*}
i.e., $\mech$ is $\bra{\ip_1+\ip_2}$-$\ipmetric$.
\end{proof}

\section{\Privname{} under continuous parameter space}
\label{section:proof_sml_calculation_continuous}

\begin{proposition}
\label{prop:sml_calculation_continuous}
Under continuous parameter space, \privname{} satisfies
\begin{align*}
\sml = \sup_{\pdf{\Theta}: \pdfof{\Theta}{\theta}\in\brc{0, \pdfof{G}{\secretrv}}, \forall \theta\in \secretset, \secretrv\in\secretvalueset}\log \int_{\theta'\in\releaseparamset} \sup_{\secretrv\in \secretvalueset} \pdfof{\Theta'|\Theta}{\theta'|\theta_\secretrv}\mathrm{d}\theta',
\end{align*}
where $\theta_\secretrv$ satisfies $\pdfof{\Theta}{\theta_{\secretrv}}=\pdfof{G}{\secretrv}$.
\end{proposition}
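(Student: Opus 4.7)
The plan is to mirror the discrete proof in \cref{section:proof_sml_calculation} almost line-for-line, with sums replaced by integrals and the discrete restriction $\mathbb{P}_{\Theta \mid G} \in \{0,1\}$ replaced by its density-level analog. Note that even in the ``continuous parameter space'' regime, the secret random variable $G$ is still discrete (it takes values in the finite set $\secretvalueset$), so $\pdfof{G}{\secretrv}$ acts as a probability mass and the condition $\pdfof{\Theta}{\theta} \in \{0, \pdfof{G}{\secretrv}\}$ for $\theta \in \secretset$ is the natural density-level statement that the prior concentrates, for each $\secretrv$, on a single point $\theta_\secretrv \in \secretset$ carrying the full mass $\pdfof{G}{\secretrv}$. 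This identifies the optimizing priors as point-mass mixtures, exactly as the discrete case selects $\theta_\secretrv$ via $\mathbb{P}_{\Theta|G}(\theta_\secretrv|\secretrv) = 1$.

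First I would derive the closed-form optimum over the attack strategy $\pdf{\hat G \mid \Theta'}$. For any fixed prior $\pdf{\Theta}$, the standard maximum-a-posteriori argument gives
\begin{align*}
\sup_{\pdf{\hat G \mid \Theta'}} \pdfnotation\bra{\hat G = G}
\;=\; \int_{\theta' \in \releaseparamset} \sup_{\secretrv \in \secretvalueset} \pdfof{\Theta' \mid G}{\theta' \mid \secretrv} \cdot \pdfof{G}{\secretrv} \, \mathrm{d}\theta',
\end{align*}
so that, following \eqref{eqn:pi} with sums replaced by integrals over $\secretset$,
\begin{align*}
\sml \;=\; \sup_{\pdf{\Theta}} \log \int_{\theta' \in \releaseparamset} \sup_{\secretrv \in \secretvalueset} \frac{\int_{\theta \in \secretset} \pdfof{\Theta' \mid \Theta}{\theta' \mid \theta} \, \pdfof{\Theta}{\theta} \, \mathrm{d}\theta}{\int_{\theta \in \secretset} \pdfof{\Theta}{\theta} \, \mathrm{d}\theta} \, \mathrm{d}\theta',
\end{align*}
after absorbing the $\sup_{\secretrv} \pdfof{G}{\secretrv}$ in the denominator of \cref{eq:privmetric} into the posterior via Bayes' rule.

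Next I would show both inclusions. For the upper bound, for any prior $\pdf{\Theta}$ and any $\secretrv$, the conditional expectation satisfies
\begin{align*}
\frac{\int_{\theta \in \secretset} \pdfof{\Theta' \mid \Theta}{\theta' \mid \theta} \, \pdfof{\Theta}{\theta} \, \mathrm{d}\theta}{\int_{\theta \in \secretset} \pdfof{\Theta}{\theta} \, \mathrm{d}\theta} \;\leq\; \operatorname*{ess\,sup}_{\theta \in \secretset} \pdfof{\Theta' \mid \Theta}{\theta' \mid \theta},
\end{align*}
so the supremum over arbitrary $\pdf{\Theta}$ is dominated by the supremum over point-mass priors, which is precisely the claimed restricted supremum. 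For the matching lower bound, I would construct, for each target tuple $\{\theta_\secretrv\}_{\secretrv \in \secretvalueset}$ with $\theta_\secretrv \in \secretset$, a point-mass prior assigning mass $\pdfof{G}{\secretrv}$ to $\theta_\secretrv$; under such a prior the inner fraction collapses exactly to $\pdfof{\Theta' \mid \Theta}{\theta' \mid \theta_\secretrv}$, recovering the claimed expression. As in the discrete proof, picking each $\theta_\secretrv$ to maximize $\sum_{\theta' \in \releaseparamset(\secretrv)} \pdfof{\Theta' \mid \Theta}{\theta' \mid \theta}$ (with $\releaseparamset(\secretrv)$ the set where $\secretrv$ is the argmax) shows the restricted supremum dominates any non-concentrated optimum.

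The main obstacle is measure-theoretic rather than conceptual. When $\secretset$ has positive Lebesgue measure, the ``point-mass'' priors are not absolutely continuous, so the formal condition $\pdfof{\Theta}{\theta} \in \{0, \pdfof{G}{\secretrv}\}$ must be interpreted with respect to the base measure (counting measure on the chosen representatives $\{\theta_\secretrv\}$, together with Lebesgue-zero elsewhere), or equivalently as a limit of increasingly concentrated continuous priors. I would handle this by either working with mixed measures throughout, or by taking priors supported on shrinking neighborhoods of $\theta_\secretrv$ and invoking continuity of $\pdfof{\Theta' \mid \Theta}{\theta' \mid \cdot}$ plus dominated convergence to pass the supremum through the limit. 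Assuming a suitable regularity condition on $\mech$ (e.g., continuous conditional densities), this recovers the stated expression, matching \cref{prop:sml_calculation} in form.
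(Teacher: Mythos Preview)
Your proposal is correct and follows essentially the same route as the paper: both optimize out the attack strategy to get the $\sup_{\secretrv}\pdf{\Theta'|G}(\theta'|\secretrv)$ integrand, then show the outer supremum over priors is attained by priors that concentrate each secret slice $\secretset$ on a single representative $\theta_\secretrv$, exactly as in the discrete \cref{prop:sml_calculation}. One small mismatch: the paper's continuous extension in \cref{sec:conclusion} treats \emph{both} the parameter and the secret as continuous (its proof integrates $\mathrm{d}\secretrv$ over $\secretvalueset$), so your assertion that $G$ remains discrete is not quite the paper's setup---but this does not affect the argument, and your explicit attention to the measure-theoretic issue (point masses within $\secretset$ not being absolutely continuous) is in fact more careful than the paper, which writes the construction formally without addressing it.
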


\begin{proof}
Let $\secretset$ be the set of distribution parameters whose secret value is $\secretrv$, i.e, $\secretset = \brc{\theta\in \mathbf{\Theta} | \secretofparam = \secretrv}$.
For \privname $\sml$, we have $\sml = \sup_{\pdf{\Theta}}\log \frac{\int_{\theta'\in\releaseparamset} \sup_{\secretrv\in \secretvalueset} \pdf{G\Theta'}\bra{\secretrv,\theta'}\mathrm{d}\theta'}{\sup_{\secretrv\in \secretvalueset} \pdf{G}\bra{\secretrv}}$. Under a fixed distribution $\pdf{\Theta}$, we can get that 
\begin{equation}
\label{eqn:pi_continuous}
\begin{aligned}
\log \frac{\int_{\theta'\in\releaseparamset} \sup_{\secretrv\in \secretvalueset} \pdf{G\Theta'}\bra{\secretrv,\theta'}\mathrm{d}\theta'}{\sup_{\secretrv\in \secretvalueset} \pdf{G}\bra{\secretrv}}
&= \log \frac{\int_{\theta'\in\releaseparamset} \sup_{\secretrv\in \secretvalueset} \pdf{\Theta'|G}\bra{\theta'|\secretrv}\pdf{G}\bra{\secretrv}\mathrm{d}\theta'}{\sup_{\secretrv\in \secretvalueset} \pdf{G}\bra{\secretrv}}\\
&
\leq\log \frac{\int_{\theta'\in\releaseparamset} \sup_{\secretrv\in \secretvalueset} \pdf{\Theta'|G}\bra{\theta'|\secretrv}\cdot\sup_{\secretrv\in \secretvalueset}\pdf{G}\bra{\secretrv}\mathrm{d}\theta'}{\sup_{\secretrv\in \secretvalueset} \pdf{G}\bra{\secretrv}}\\
&= \log \int_{\theta'\in\releaseparamset} \sup_{\secretrv\in \secretvalueset}\pdf{\Theta'|G}\bra{\theta'|\secretrv}\mathrm{d}\theta'\\
&= \log\int_{\theta'\in\releaseparamset} \sup_{\secretrv\in \secretvalueset}
\int_{\theta\in \mathbf{\Theta}}
\pdf{\Theta'|G\Theta}\bra{\theta'|\secretrv, \theta}\cdot\pdf{\Theta|G}\bra{\theta|\secretrv}\mathrm{d}\theta \mathrm{d}\theta'\\
&= \log \hspace{-1mm} \int_{\theta'\in\releaseparamset} \sup_{\secretrv\in \secretvalueset}
\int_{\theta\in \mathbf{\Theta}}
\pdf{\Theta'|G\Theta}\bra{\theta'|\secretrv, \theta} \frac{\pdf{G|\Theta}\bra{\secretrv|\theta}\cdot\pdf{\Theta}\bra{\theta}}{\int_{\theta\in\mathbf{\Theta}}\pdf{G|\Theta}\bra{\secretrv|\theta}\cdot\pdf{\Theta}\bra{\theta}\mathrm{d}\theta}\mathrm{d}\theta\mathrm{d}\theta'
\\
&= \log \int_{\theta'\in\releaseparamset} \sup_{\secretrv\in \secretvalueset} \frac{\int_{\theta\in \mathbf{\Theta}_{\secretrv}}\pdf{\Theta'|\Theta}\bra{\theta'|\theta}\cdot\pdf{\Theta}\bra{\theta}\mathrm{d}\theta}{\int_{\theta\in\mathbf{\Theta}_{\secretrv}}\pdf{\Theta}\bra{\theta}\mathrm{d}\theta}\mathrm{d}\theta'\\
&=\log \int_{\theta'\in\releaseparamset} \sup_{\secretrv\in \secretvalueset} \mathbb{E}_{\Theta}\brb{\pdf{\Theta'|\Theta}\bra{\theta'|\theta}\big|\theta\in \secretset}\mathrm{d}\theta'\\
&\triangleq V_{\pdf{\Theta}}.
\end{aligned}
\end{equation}
\normalsize
This upper bound can be achieved when $\pdf{\Theta}$ satisfies 
\begin{equation}
\label{condition:pri1_continuous}
\begin{aligned}
    \forall \theta'\in\releaseparamset: \  \arg\sup_{\secretrv\in \secretvalueset} \pdf{\Theta'|G}\bra{\theta'|\secretrv} \cap \arg\sup_{\secretrv\in \secretvalueset}\pdf{G}\bra{\secretrv} \not= \emptyset,
\end{aligned}
\end{equation}
where $\pdf{\Theta'|G}\bra{\theta'|\secretrv}=\mathbb{E}_{\Theta}\brb{\pdf{\Theta'|\Theta}\bra{\theta'|\theta}\big| \theta\in\secretset}$, $\pdf{G}\bra{\secretrv}=\int_{\theta\in\secretset}\pdf{\Theta}\bra{\theta}\mathrm{d}\theta$.

Let ${\pdf{\Theta}^+}\in \arg\sup_{\pdf{\Theta}} V_{\pdf{\Theta}}$ and $\releaseparamset(\secretrv)$ be the set of released parameters satisfying $\secretrv=\arg\sup_{\secretrv\in \secretvalueset} \mathbb{E}_{\Theta}\brb{\pdf{\Theta'|\Theta}\bra{\theta'|\theta}\big|\theta\in \secretset}$
under ${\pdf{\Theta}^+}$. We can construct a prior $\pdf{\Theta}^*$ that satisfies  
\begin{itemize}
\item $\pdfof{G}{\secretrv_1} = \pdfof{G}{\secretrv_2}, \forall \secretrv_1, \secretrv_2\in \secretvalueset$; 
\item $\pdf{\Theta}\bra{\theta}\in\brc{0,\pdfof{G}{\secretrv}}, \forall \secretrv\in\secretvalueset, \theta\in\secretset$; 
\item $\pdf{\Theta}\bra{\theta} = 0$ if $\theta \not\in \arg\sup_{\theta\in\secretset}  \sum_{\theta'\in\releaseparamset(\secretrv)}\mathbb{P}_{\Theta'|\Theta}\bra{\theta'|\theta}, \forall \secretrv\in\secretvalueset$.
\end{itemize}
\normalsize

We can get that
\begin{align*}
V_{{\pdf{\Theta}^+}} &= \log \int_{\theta'\in\releaseparamset} \sup_{\secretrv\in \secretvalueset} \mathbb{E}_{\Theta}\brb{\pdf{\Theta'|\Theta}\bra{\theta'|\theta}\big|\theta\in \secretset}\mathrm{d}\theta'\\
&= \log \int_{\secretrv\in\secretvalueset}\int_{\theta'\in\releaseparamset(\secretrv)}\mathbb{E}_{\Theta}\brb{\pdf{\Theta'|\Theta}\bra{\theta'|\theta}\big|\theta\in \secretset}\mathrm{d}\theta'\mathrm{d}\secretrv\\
&= \log \int_{\secretrv\in\secretvalueset}\mathbb{E}_{\Theta}\brb{\int_{\theta'\in\releaseparamset(\secretrv)}\pdf{\Theta'|\Theta}\bra{\theta'|\theta}\mathrm{d}\theta'\bigg|\theta\in \secretset}\mathrm{d}\secretrv\\
&\leq \log \int_{\secretrv\in\secretvalueset} \sup_{\theta\in\secretset}\int_{\theta'\in\releaseparamset(\secretrv)}\mathbb{P}_{\Theta'|\Theta}\bra{\theta'|\theta}\mathrm{d}\theta'\mathrm{d}\secretrv\\
&= V_{\pdf{\Theta}^*}.
\end{align*}
Since ${\pdf{\Theta}^+}\in \arg\sup_{\pdf{\Theta}} V_{\pdf{\Theta}}$, we can get that $\pdf{\Theta}^*\in \arg\sup_{\pdf{\Theta}} V_{\pdf{\Theta}}$. Since $\pdf{\Theta}^*$ satisfies condition \ref{condition:pri1_continuous} and $\pdfof{\Theta}{\theta}\in\brc{0, \pdfof{G}{\secretrv}}, \forall \secretrv\in\secretvalueset, \theta\in \secretset$, we have
\begin{equation}
\begin{aligned}
\label{eqn:sml_variations}
\sml &= \sup_{\pdf{\Theta}}\log \frac{\int_{\theta'\in\releaseparamset} \sup_{\secretrv\in \secretvalueset} \pdf{G\Theta'}\bra{\secretrv,\theta'}\mathrm{d}\theta'}{\sup_{\secretrv\in \secretvalueset} \pdf{G}\bra{\secretrv}}\\
&= \sup_{\pdf{\Theta}}\log \int_{\theta'\in\releaseparamset} \sup_{\secretrv\in \secretvalueset}\pdf{\Theta'|G}\bra{\theta'|\secretrv}\mathrm{d}\theta'\\
&= \sup_{\pdf{\Theta}}\log \int_{\theta'\in\releaseparamset} \sup_{\secretrv\in \secretvalueset} \frac{\int_{\theta\in \mathbf{\Theta}_{\secretrv}}\pdf{\Theta'|\Theta}\bra{\theta'|\theta}\cdot\pdf{\Theta}\bra{\theta}\mathrm{d}\theta}{\int_{\theta\in\mathbf{\Theta}_{\secretrv}}\pdf{\Theta}\bra{\theta}\mathrm{d}\theta}\mathrm{d}\theta'\\
&= \sup_{\pdf{\Theta}}\log \int_{\theta'\in\releaseparamset} \sup_{\secretrv\in \secretvalueset} \mathbb{E}_{\Theta}\brb{\pdf{\Theta'|\Theta}\bra{\theta'|\theta}\big|\theta\in \secretset}\mathrm{d}\theta'\\
&= \sup_{\pdf{\Theta}: \pdfof{\Theta}{\theta}\in\brc{0, \pdfof{G}{\secretrv}}, \forall \secretrv\in\secretvalueset, \theta\in \secretset}\log \int_{\theta'\in\releaseparamset} \sup_{\secretrv\in \secretvalueset} \pdfof{\Theta'|\Theta}{\theta'|\theta_\secretrv}\mathrm{d}\theta',
\end{aligned}
\end{equation}
where $\theta_\secretrv$ satisfies $\pdfof{\Theta}{\theta_{\secretrv}}=\pdfof{G}{\secretrv}$.
\end{proof}

\end{appendices}

\end{document}